\newtheorem{theorem}{Theorem}[section]
\newtheorem{lemma}[theorem]{Lemma}
\def\ep{\varepsilon}
\def\R{\mathbb R}
\def\S{\mathbb S}
\def\N{\mathbb N}
\def\pa{\partial}
\def\b{\backslash}
\def\A{\mathbf A}
\def\V{\mathcal V}
\def\B{\mathcal B}
\def\A{\mathcal A}
\def\G{\mathcal G}
\begin{document}

%\title[Inverse scattering at high energies]{Inverse scattering at high energies for the multidimensional Newton equation in a long range static potential}

\title{Inverse scattering at high energies for the multidimensional Newton equation in a long range potential}
\author{Alexandre Jollivet}

\maketitle

\begin{abstract} 
We define scattering data for the Newton equation in a potential $V\in C^2(\R^n,\R)$, $n\ge2$, that decays at infinity like $r^{-\alpha}$ for some $\alpha\in (0,1]$. We provide estimates on the scattering solutions and scattering data and we prove, in particular, that the scattering data at high energies
uniquely determine the short range part of the potential up to the knowledge of the long range tail of the potential. The Born approximation at fixed energy of the scattering data is also considered.
We then change the definition of the scattering data to study inverse scattering in other asymptotic regimes.
These results were obtained by developing the inverse scattering approach of [Novikov, 1999].  
\end{abstract}

%\keywords{Newton equation, Inverse scattering at high energies, Inverse problems}

\section{Introduction}
\label{int}
Consider the multidimensional Newton equation in an external static force $F$ deriving from a scalar potential $V$:
\begin{equation}
\ddot x(t)=F(x(t))=-\nabla V(x(t)),\label{1.1}
\end{equation}
where $x(t)\in \R^n,$ $\dot x(t)={{\rm d}x\over {\rm d}t}(t)$, $n\ge 2$. 

When $n=3$ then  equation \eqref{1.1} is the equation of motion of a nonrelativistic particle of mass $m=1$ and charge $e=1$ in an
external and static electric (or gravitational) field described by $V$ (see \cite{L}) where $x$ denotes the position of the particle, 
$\dot x$ denotes its velocity and $\ddot x$ denotes its acceleration and $t$ denotes the time.

We also assume throughout this paper that $V$ satisfies the following
conditions
\begin{equation}
F=F^l+F^s,\label{1.3}
\end{equation}
where $F^l:=-\nabla V^l$, $F^s:=-\nabla V^s$ and $(V^l,V^s)\in (C^2(\R^n,\R))^2$, and where $V^l$ satisfies the following long range assumptions 
\begin{equation}
|\pa^j_x V^l(x)| \le \beta_{|j|}^l(1+|x|)^{-(\alpha+|j|)},\label{1.4a}
\end{equation}
and $V^s$ satisfies the following short range assumptions 
\begin{equation}
|\pa^j_x V^s(x)| \le \beta_{|j|+1}^s(1+|x|)^{-(\alpha+1+|j|)},\label{1.4b}
\end{equation}
for $x\in \R^n$ and $|j| \le 2$ and for some $\alpha\in (0,1]$ (here $j$ is the multiindex $j\in (\N \cup \{0\})^n, |j|= \sum_{m=1}^n j_m$, and $\beta_m^l$ and $\beta_{m'}^s$ are
positive real constants for $m=0,1,2$ and for $m'=1,2,3$). 
Note that the assumption $0<\alpha\le 1$ includes the decay rate of a Coulombian potential at infinity. Indeed for a Coulombian potential $V(x)={1\over |x|}$, estimates \eqref{1.4a} are satisfied uniformly for $|x|>\ep$ and $\alpha=1$ for any $\ep>0$.  Although our potentials $(V^l,V^s)$ are assumed to be $C^2$ on the entire space $\R^n$, our present work may provide interesting results even in presence of singularities for the potentials $(V^l,V^s)$.

For equation \eqref{1.1} the energy 
\begin{equation}
E={1\over 2}|\dot x(t)|^2+V(x(t))\label{1.2}
\end{equation}
is an integral of motion.

For $\sigma>0$ we denote by $\B(0,\sigma)$ the Euclidean open ball of center 0 and radius $\sigma$, $\B(0,\sigma)=\{y\in \R^n\ |\ |y|< \sigma\}$, and we denote by $\overline{\B(0,\sigma)}=\{y\in \R^n\ |\ |y|\le \sigma\}$ its closure.
We set $\mu:=\sqrt{2^5n\max(\beta_1^l,\beta_2^l)\over \alpha}$.
Under conditions \eqref{1.4a} the following is valid (see Lemma \ref{lem_scatinit} given in the next Section):
for any 
$v\in \R^n\b\B(0,\mu),$ there exists a unique solution $z_\pm(v,.)$ of the equation
\begin{equation}
\ddot z(t)=F^l(z(t)),\ t\in \R,\label{5.102}
\end{equation}
so that
\begin{equation*}
\dot z_\pm(v,t)-v=o(1),\textrm{ as }t\to \pm\infty,\  z_\pm(v,0)=0,
\end{equation*}
and
\begin{equation*}
|z_\pm(v,t)-tv|\le {2^{5\over 2}n^{1\over 2}\beta_1^l\over \alpha|v|}|t|\textrm{ for }t\in \R.
\end{equation*}
When $F^l\equiv0$ then $\beta_1^l$, $\beta_2^l$ and $\mu$ can be arbitrary close to 0, and we have that $z_\pm(v,t)=tv$ for $(t,v)\in \R\times \R^n$, $v\not=0$.

Then 
under conditions \eqref{1.4a} and \eqref{1.4b}, the following is valid: for any 
$(v_-,x_-)\in \R^n\b\B(0,\mu)\times\R^n,$ 
the equation \eqref{1.1}  has a unique solution $x\in C^2(\R,\R^n)$ such that
\begin{equation}
{x(t)=z_-(v_-,t)+x_-+y_-(t),}\label{1.6}
\end{equation}
where $|\dot y_-(t)|+|y_-(t)|\to 0,$ as $t\to -\infty;$  in addition for almost any 
$(v_-,x_-)\in \R^n\b\B(0,\mu)\times \R^n$,
\begin{equation}
{x(t)=z_+(v_+,t)+x_++y_+(t),}\label{1.7}
\end{equation}
for a unique $(v_+,x_+)\in \R^n\times\R^n$, where $|v_+|=|v_-|\ge \mu$ by conservation of the energy \eqref{1.2}, and where $v_+=:a(v_-,x_-)$, $x_+=:b(v_-,x_-),$ and  
$|\dot y_+(t)|+|y_+(t)|\to 0,$ as $t \to +\infty$. A solution $x$ of \eqref{1.1} that satisfies \eqref{1.6} and \eqref{1.7} for some $(v_-,x_-)$, $v_-\not=0$, is called a scattering solution. 

We call the map 
$S: (\R^n\b \B(0,\mu))\times\R^n \to (\R^n\b\B(0,\mu))\times\R^n$
given by the formulas
\begin{equation}
{v_+=a(v_-,x_-),\ x_+=b(v_-,x_-)},\label{1.8}
\end{equation}
the scattering map for the equation \eqref{1.1}. In addition, $a(v_-,x_-),$ $b(v_-,x_-)$ are called the scattering data for
the equation \eqref{1.1}, and we define
\begin{equation}
a_{sc}(v_-,x_-)=a(v_-,x_-)-v_-,\ \ b_{sc}(v_-,x_-)=b(v_-,x_-)-x_-.\label{1.10}
\end{equation}
Our definition of the scattering map is derived from constructions given in \cite{He, DG}. We refer the reader to \cite{He, DG} and references therein for the forward classical scattering theory.

By ${\cal D}(S)$ we denote the set of definition of $S$. Under the conditions \eqref{1.4a} and \eqref{1.4b} the map  $S:{\cal D}(S)\to (\R^n\b \B(0,\mu))\times\R^n$ is continuous, 
and ${\rm Mes}(((\R^n\b \B(0,\mu))\times \R^n) \b {\cal D}(S))=0$ for the Lebesgue
measure on $\R^n \times \R^n$. 
In addition the map $S$ is uniquely determined by its restriction to ${\cal M}(S)={\cal
D}(S)\cap {\cal M}$ and by $F^l$, where
${\cal M}=\{(v_-,x_-)\in \R^n \times \R^n\ |\ v_-\neq 0, v_-\cdot x_-=0\}$.
(Indeed if $x(t)$ is a solution of \eqref{1.1} then $x(t+t_0)$ is also a solution of \eqref{1.1} for any $t_0\in \R$.)

One can imagine the following experimental setting that allows to measure the scattering data without knowing the potential $V$ inside a (a priori bounded) region of interest.
First find a potential $V^l$ that generates the same long range effects as $V$ does. Then compute the solutions $z_\pm(v,.)$ of equation \eqref{5.102}. Then for a fixed $(v_-,x_-)\in (\R^n\b \B(0,\mu))\times \R^n$ send a particle far away from the region of interest  with a trajectory asymptotic to $x_-+z_-(v_-,.)$ at large and negative times.  When the particle escapes any bounded region of the space at finite time, then detect the particle  and find $S(v_-,x_-)=(v_+,x_+)$ so that  the trajectory of the particle is  asymptotic to  $x_++z_+(v_+,.)$ at large and positive times far away from the bounded region of interest.

In this paper we consider the following inverse scattering problem for equation \eqref{1.1}:
\begin{equation}
\textrm{Given }S\textrm{ and given the long range tail }F^l\textrm{ of the force }F,\textrm{ find }F^s. \label{P}
\end{equation}

The main results of the present work consist in estimates and asymptotics for the scattering data $(a_{sc},b_{sc})$ 
and scattering solutions for the equation \eqref{1.1} and in application of these asymptotics and estimates to the inverse scattering problem \eqref{P} at high energies. Our main results include, in particular, Theorem \ref{thm} given below that provides the high energies asymptotics of the scattering data and the Born approximation of the scattering data at fixed energy.

Consider
\begin{equation*}
\label{1.9}
T\S^{n-1}:=\{(\theta,x)\in \S^{n-1}\times\R^n\ |\ \theta\cdot x=0\},
\end{equation*}
and for any $m\in \N$ consider the x-ray transform $P$ defined by
\begin{equation*}
Pf(\theta,x):=\int_{-\infty}^{+\infty}f(t\theta+x)dt
\end{equation*}
for any function $f\in C(\R^n,\R^m)$ so that $|f(x)|=O(|x|^{-\tilde \beta})$ as $|x|\to +\infty$ for some $\tilde \beta>1$. 
For $(\sigma,\tilde \beta,r,\tilde \alpha)\in (0,+\infty)^2\times(0,1)\times (0,1]$, let $s_0=s_0(\sigma,r,\tilde \beta,\tilde \alpha)$ be defined as the root of the equation
\begin{equation}
1={4\tilde \beta n(\sigma+1)\over \tilde \alpha r({s_0\over 2^{3\over 2}}-r)(1-r)^{\tilde \alpha+2}}\big(1+{1\over {s_0\over 2^{3\over 2}}-r}\big)^2,\ s_0>2^{3\over 2}r.\label{1.30}
\end{equation}
Then we have the following results.

\begin{theorem}
\label{thm}
Let $(\theta,x)\in T\S^{n-1}$. Under conditions \eqref{1.4a} and \eqref{1.4b} the following limits are valid
\begin{eqnarray}
\lim_{s\to +\infty}s a_{sc}(s\theta,x)&=&PF^l(\theta,x)+PF^s(\theta,x),\label{t1}\\
\lim_{s\to +\infty}s^2\theta\cdot\Big(b_{sc}(s\theta,x)-W(s\theta,x)\Big)&=&-PV^s(\theta,x),\label{t2}
\end{eqnarray}
where
\begin{eqnarray}
W(v,x)&:=&\int_{-\infty}^0\int_{-\infty}^\sigma \big(F^l(z_-(v,\tau)+x)-F^l(z_-(v,\tau))\big)d\tau d\sigma\label{t0}\\
&&-\int_0^{+\infty}\int_\sigma^{+\infty}\big(F^l(z_+(a(v,x),\tau)+x)-F^l(z_+(a(v,x),\tau))\big)d\sigma d\tau\Big),\nonumber
\end{eqnarray}
for $(v,x)\in {\cal D}(S)$.

In addition, 
\begin{equation}
\big|a_{sc}(s\theta,x)-\int_{-\infty}^{+\infty}F(\tau s\theta+x)d\tau\big|\le
{4n^2(3|x|+5)\beta^2\big(1+{1\over 2^{-{3\over 2}}s-r}\big)^2\over  \alpha^2(1-r)^{2\alpha+3}(2^{-{3\over 2}}s-r)^2},\label{t3}
\end{equation}
\begin{eqnarray}
\big|b_{sc}(s\theta,x)-W(s\theta,x)-\int_{-\infty}^0\int_{-\infty}^\sigma F^s(\tau s\theta+x)d \tau d\sigma\nonumber\\
+\int_0^{+\infty}\int_\sigma^{+\infty}F^s(\tau s\theta+x)d \tau d\sigma\big|\le {4n^2(3|x|+5)\beta^2\big(1+{1\over 2^{-{3\over 2}}s-r}\big)^2\over \alpha^2(1-r)^{2\alpha+2}(2^{-{3\over 2}}s-r)^3},\label{t4}
\end{eqnarray}
for $(r,(\theta,x))\in (0,1)\times T\S^{n-1}$ and for $s>s_0(|x|,r,\beta,\alpha)$, 

\noindent where $\beta=\max(\beta_1^l,\beta_2^l,\beta_2^s,\beta_3^s)$.
\end{theorem}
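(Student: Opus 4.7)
The plan is to obtain the two limits and the two explicit bounds from integral representations of the scattering data, combined with careful replacement of the true trajectory by successively simpler reference motions. From $\ddot x=F(x)$ and the asymptotic behavior \eqref{1.6}--\eqref{1.7} one has
$$a_{sc}(v_-,x_-)=\int_{-\infty}^{+\infty}F(x(t))\,dt,$$
while subtracting the value at $t=0$ of \eqref{1.6} from the value at $t=0$ of \eqref{1.7} gives $b_{sc}(v_-,x_-)=y_-(0)-y_+(0)$, and the integral equations for $y_\pm$ produce a corresponding double-integral representation of $b_{sc}$ in $F$.

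Next, writing $x(t)=z_-(v_-,t)+x_-+y_-(t)$, the remainder satisfies
$$y_-(t)=\int_{-\infty}^t\int_{-\infty}^\tau\big(F(z_-(v_-,\sigma)+x_-+y_-(\sigma))-F^l(z_-(v_-,\sigma))\big)\,d\sigma\,d\tau.$$
I would solve this by a contraction mapping argument in a suitable weighted norm, exploiting a lower bound of the form $|z_-(v,\tau)|\ge (1-r)|\tau v|$ derived from the inequality of Lemma \ref{lem_scatinit}, together with the decay \eqref{1.4a}--\eqref{1.4b}; the relevant operator is a strict contraction precisely when $s>s_0$, with contraction constant $r$, which explains the exact form of \eqref{1.30}.

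With $y_\pm$ quantitatively controlled, a mean value step replaces $x(t)$ by $z_-(v_-,t)+x_-$ inside the integral representations, picking up an error controlled by $\int|\nabla F|\,|y_-|\,dt$. For the short range force a second mean value step replaces $z_-(v_-,t)+x_-$ by the straight line $tv_-+x_-$ using $|z_-(v_-,t)-tv_-|\le C|t|/|v_-|$ and the decay of $\nabla F^s$; this produces the x-ray transform $PF^s(\theta,x)$ after the change of variables $\tau=t/s$. For the long range force this last replacement is not affordable, and the residual double integral along $z_\pm$ is what is collected into $W(v_-,x_-)$. The limits \eqref{t1}--\eqref{t2} then follow by dominated convergence, and the estimates \eqref{t3}--\eqref{t4} are the quantitative form of the same argument with all constants tracked through the previous steps.

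The main obstacle, I expect, lies in the long-range bookkeeping. Because $F^l$ decays only like $|x|^{-\alpha-1}$, not integrable along a generic line, one cannot bound the various $F^l$-integrals by absolute values; one must systematically exploit the cancellation $F^l(z(v,\tau)+x)-F^l(z(v,\tau))$, which by a further mean value step is $O(|x|\,|z(v,\tau)|^{-\alpha-2})$ and hence integrable. Keeping this cancellation visible through the contraction argument and through the trajectory replacement, while producing the sharp denominators $(2^{-{3\over 2}}s-r)^2$ and $(2^{-{3\over 2}}s-r)^3$ in \eqref{t3}--\eqref{t4}, is the delicate part that the condition $s>s_0$ is designed to accommodate.
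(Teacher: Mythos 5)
Your plan reproduces the paper's argument in all essentials: the integral representations $a_{sc}(v_-,x_-)=\int_{-\infty}^{+\infty} F(x(t))\,dt$ and $b_{sc}=y_-(0)-y_+(0)$, the contraction of the integral operator on the weighted space $M_r$ for $s>s_0$, the two-stage replacement of $x(t)$ first by $z_-(v_-,t)+x_-$ (error $\int|\nabla F|\,|y_-|$) and then, for the short-range part only, by the straight line $tv_-+x_-$, and the collection of the non-absolutely-integrable $F^l$ double integrals into $W$ via the cancellation $F^l(z+x)-F^l(z)=O\big(|x|(1+|z|)^{-\alpha-2}\big)$. The only substantive points your sketch compresses are on the outgoing side of the $b_{sc}$ estimate, where the mismatch between the perturbed incoming trajectory and the free outgoing trajectory $z_+(a(v_-,x_-),\cdot)+x_-$ (the term $l_2$ of the paper's Lemma 2.3) is bounded by a self-consistent bootstrap coupling it with $\sup|y_+|$ rather than by a direct mean-value step in $y_-$; and note that $r$ is the radius of the ball $M_r$ (with $\rho\le r$ ensuring invariance), while the contraction constant is the separate quantity $\lambda<1$, both of which are what the condition $s>s_0$ guarantees.
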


Note that the vector $W$ defined by \eqref{t0} is known from the scattering data and from $F^l$.
Then from \eqref{t1} (resp. \eqref{t2}) and inversion formulas for the X-ray transform for $n\ge 2$ (see \cite{R, GGG, Na, No}) it follows that $F^s$ can be  reconstructed from  $a_{sc}$ (resp. $b_{sc}$). 

Note that \eqref{t3} and \eqref{t4} also
give the asymptotics of  $a_{sc},$ $b_{sc}$, when the parameters $\alpha,$ $n,$ $s$, $\theta$ and $x$ are fixed and
$\beta$ decreases to $0$ (where 
$\beta=\max(\beta_1^l,\beta_2^l,\beta_2^s,\beta_3^s)$). In that regime the leading term of $sa_{sc}(s\theta,x)$ and $s^2\theta\cdot(b_{sc}(s\theta,x)-W(s\theta,x))$ for $(\theta,x)\in T\S^{n-1}$ and for $s> s_0(|x|,r,\beta,\alpha)$ is given by the right hand sides of \eqref{t1} and \eqref{t2} respectively.
Therefore Theorem \ref{thm} gives the Born approximation for the scattering data at fixed energy when
the potential is sufficiently weak, and it proves that $F^s$ can be reconstructed from the Born approximation of the scattering map at fixed energy.

Theorem \ref{thm} is a generalization of \cite[formulas (4.8a), (4.8b), (4.9a) and (4.9b)]{No} where inverse scattering for the classical multidimensional Newton equation was studied in the short range case ($F^l\equiv 0$). We develop Novikov's framework \cite{No} to obtain our results.
Note that results \cite[formulas  (4.8b) and (4.9b)]{No}  also provide the approximation of the scattering data
$(a_{sc}(v_-,x_-),b_{sc}(v_-,x_-))$ for the short range case ($F^l\equiv 0$) when the parameters $\alpha,$ $n,$ $v_-$ and $\beta$ are fixed and
$|x_-|\to +\infty$. Such an asymptotic regime is not covered by Theorem \ref{thm}. Therefore we shall modify in Section 3 the definition of the scattering map 
to study these modified scattering data in the following three asymptotic regimes: at high energies, Born approximation at fixed energy, and when the parameters $\alpha,$ $n$, $v_-$ and $\beta$ are fixed and
$|x_-|\to +\infty$.

For inverse scattering at fixed energy for the multidimensional Newton equation, see for example \cite{Jo} and references therein.

For the inverse scattering problem in quantum mechanics for the Schr\"odinger equation, see  for example 
\cite{F}, \cite{EW}, \cite{No2} and references given in \cite{No2}.

Our paper is organized as follows. In Section 2 we transform the differential equation \eqref{1.1} with initial conditions \eqref{1.6} in
an integral equation which takes the form $y_-=A(y_-).$ Then we study the operator $A$ on a
suitable space  (Lemma \ref{lem_cont}) and we give estimates for the deflection $y_-(t)$ in \eqref{1.6} and for  
the scattering data $a_{sc}(v_-,x_-),b_{sc}(v_-,x_-)$  (Theorem \ref{thm_y}). We prove Theorem \ref{thm}.
Note that we work with small angle scattering compared to the dynamics generated by $F^l$ through the ``free" solutions $z_-(v_-,t)$: In particular, the angle between the vectors $
\dot x(t)=\dot z_-(v_-,t)+\dot y_-(t)$ and $\dot z_-(v_-,t)$ goes to zero
when the parameters $\beta$, $\alpha$, $n$, $v_-/|v_-|$, $x_-$ are fixed and $|v_-|$ increases.
We also provide similar results when one replaces the ``free" solutions $z_-(v,.)$ by some other functions ``$z_{-,N}(v,.)$" that may be easier to compute in practise (Formulas \eqref{t1N} and \eqref{t2N}).
In Section 3 we change the definition of the scattering map so that one can obtain for the modified scattering data $(\tilde a_{sc}(v_-,x_-),\tilde b_{sc}(v_-,x_-))$ their approximation  at high energies, or their Born approximation at fixed energy, or their approximation  when the parameters $\alpha,$ $n,$ $v_-$ and $\beta$ are fixed and
$|x_-|\to +\infty$ (Theorems \ref{thm_y2}, \ref{thm2} and formulas \eqref{t6N} and \eqref{t7N}). 
Sections 4, 5, 6 and 7 are devoted to proofs of our Theorems and Lemmas. 

\section{Scattering solutions}
\label{cont}
\subsection{Integral equation}
First we need the following Lemma \ref{lem_scatinit} that generalizes the statements given in the Introduction on the existence of peculiar solutions $z_\pm$ of the equation \eqref{5.102}.

\begin{lemma}
\label{lem_scatinit}
Assume conditions \eqref{1.4a}. Let $(v,x,w,h)\in (\R^n)^4$ so that $v\cdot x=0$ and
\begin{equation}
|w-v|\le {|v|\over 4\sqrt{2}}\textrm{ and } |h|< 1+{|x|\over \sqrt{2}},\label{5.100}
\end{equation}
and assume
\begin{equation}
{2^5n\max(\beta_1^l,\beta_2^l)\over \alpha|v|^2(1+{|x|\over \sqrt{2}}-|h|)^\alpha}\le 1.\label{5.101}
\end{equation}
Then there exists a unique solution $z_\pm(w,x+h,.)$ of the equation \eqref{5.102}
so that
\begin{equation}
\dot z_\pm(w,x+h,t)-w=o(1),\textrm{ as }t\to \pm\infty,\  z_\pm(w,x+h,0)=x+h,\label{5.103a}
\end{equation}
and
\begin{equation}
|z_\pm(w,x+h,t)-x-h-tw|\le {2^{5\over 2}n^{1\over 2}\beta_1^l\over \alpha|v|(1+{|x|\over \sqrt{2}}-|h|)^\alpha}|t|,\label{5.103b}
\end{equation}
for $t\in \R$.
\end{lemma}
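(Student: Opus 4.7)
The plan is to recast the ODE as an integral fixed-point equation and apply the Banach contraction mapping theorem on a suitable complete metric space. Setting $z(t)=x+h+tw+y(t)$, the conditions $z(0)=x+h$ and $\dot z(t)-w\to 0$ as $t\to+\infty$ together with $\ddot z=F^l(z)$ translate into
\begin{equation*}
y = A_+(y), \qquad A_+(y)(t) := -\int_0^t\int_s^{+\infty} F^l\big(x+h+\tau w+y(\tau)\big)\,d\tau\,ds,
\end{equation*}
and symmetrically for the $-$ case with the integration ranges reversed. I would work in the complete metric space
\begin{equation*}
Y := \Big\{ y\in C(\R,\R^n)\ :\ y(0)=0,\ \sup_{t\ne 0}\tfrac{|y(t)|}{|t|}\le C_0\Big\},\qquad C_0 := \tfrac{2^{5/2} n^{1/2}\beta_1^l}{\alpha|v|(1+|x|/\sqrt 2-|h|)^\alpha},
\end{equation*}
equipped with the norm $\|y\|=\sup_{t\ne 0}|y(t)|/|t|$, and show that $A_+$ is a contraction from $Y$ into itself precisely under assumption \eqref{5.101}.

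The main input is a lower bound of the form $|x+h+\tau w+y(\tau)|\ge \tfrac{1}{\sqrt 2}(|x|+|\tau||v|)-|h|-C_0|\tau|$ on the argument of $F^l$. This comes from exploiting the orthogonality $v\cdot x=0$: since $|w-v|\le |v|/(4\sqrt 2)$, the cross term in $|x+\tau w|^2=|x|^2+2\tau x\cdot(w-v)+\tau^2|w|^2$ is small, so $|x+\tau w|\ge (|x|+|\tau||v|)/\sqrt 2$ up to the numerical constants absorbed in \eqref{5.100}; the terms $|h|$ and $|y(\tau)|\le C_0|\tau|$ are then subtracted. Combined with $|F^l(\xi)|\le\sqrt n\,\beta_1^l(1+|\xi|)^{-(\alpha+1)}$ coming from \eqref{1.4a} with $|j|=1$, this reduces the self-mapping estimate to the bound
\begin{equation*}
\int_0^{|t|}\int_s^{+\infty} \frac{\sqrt n\,\beta_1^l\,d\tau\,ds}{\big(1+\tfrac{1}{\sqrt 2}(|x|+\tau|v|)-|h|-C_0\tau\big)^{\alpha+1}} \le C_0\,|t|,
\end{equation*}
which after two explicit integrations in $\tau$ and $s$ reduces exactly to the numerical inequality \eqref{5.101}.

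For the contractivity step I would apply the same argument to $F^l(x+h+\tau w+y_1(\tau))-F^l(x+h+\tau w+y_2(\tau))$, using the mean value theorem together with the $|j|=2$ estimate in \eqref{1.4a} for $\partial F^l$, so that an extra decay factor $(1+|\cdot|)^{-1}$ appears and the double integral converges with a small prefactor that, again under \eqref{5.101}, is bounded by $1/2$. The existence and uniqueness of a fixed point $y$ in $Y$ then gives the desired solution $z=x+h+tw+y$; the bound \eqref{5.103b} is just $|y(t)|\le C_0|t|$; the initial condition $z(0)=x+h$ is built in; and $\dot y(t)=-\int_t^{+\infty}F^l(x+h+\tau w+y(\tau))d\tau\to 0$ as $t\to+\infty$ by dominated convergence using the same integrable majorant.

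The main obstacle will be the careful bookkeeping of numerical constants so that the self-mapping and contraction estimates close off to \emph{exactly} the hypothesis \eqref{5.101}, in particular correctly tracking how the $|w-v|\le|v|/(4\sqrt 2)$ slack in \eqref{5.100} feeds into the lower bound on $|x+\tau w|$, and verifying that the additional $|h|$ and $C_0|\tau|$ subtractions are dominated by a fixed fraction of $\tfrac{1}{\sqrt 2}|\tau||v|$ for all $\tau$. Once this is done, the $-$ case is identical up to sign conventions in the definition of $A_-$.
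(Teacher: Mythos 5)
Your proposal is correct and follows essentially the same route as the paper's proof: the same integral operator $G_+f(t)=-\int_0^t\int_\sigma^{+\infty}F^l(x+h+\tau w+f(\tau))\,d\tau\,d\sigma$, the same complete metric space of functions with $|f(t)|\le C'|t|$ and norm $\sup_{t\ne 0}|f(t)/t|$, the same orthogonality-based lower bound $|x+h+\tau w+f(\tau)|\ge \frac{|x|}{\sqrt 2}-|h|+\frac{|v|}{2\sqrt 2}|\tau|$, and the same contraction estimate via the second-order bound in \eqref{1.4a} closed off by \eqref{5.101}. No gaps.
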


A proof of Lemma \ref{lem_scatinit} is given in Section \ref{proofprel}.

For the rest of this Section we set
\begin{eqnarray}
\mu&:=&\sqrt{2^5n\max(\beta_1^l,\beta_2^l)\over \alpha},\label{0.7a}\\
z_\pm(v,t)&=&z_\pm(v,0,t)\textrm{ for }t\in \R, \textrm{ when }  \mu\le |v|,\label{0.7b}\\
\beta_2&:=&\max(\beta_2^l,\beta_2^s).\label{0.7c}
\end{eqnarray}

Let $(v_-,x_-)\in \R^n\times \R^n$, $v_-\cdot x_-=0$ and $|v_-|\ge \mu$.
Then the function $y_-$ in \eqref{1.6} satisfies the integral equation
$y_-=A(y_-)$
where
\begin{equation}
A(f)(t)=\int_{-\infty}^t\int_{-\infty}^\sigma\Big(F(z_-(v_-,\tau)+x_-+f(\tau))-F^l(z_-(v_-,\tau))\Big)d\tau d\sigma\label{2.1}
\end{equation}
for $t\in \R$ and for $f\in C(\R,\R^n)$, $\sup_{(-\infty,0]}|f|<\infty$.
Under conditions \eqref{1.4a} and \eqref{1.4b} we have $A(f)\in C^2(\R,\R^n)$ for $f\in C(\R,\R^n)$ so that $\sup_{(-\infty,0]}|f|<\infty$.

For $r>0$ we introduce the following complete metric space $M_r$ defined by
\begin{equation}
M_r=
\lbrace
f\in C(\R,\R^n)\ |\ \sup_{(-\infty,0]}|f|+\sup_{t\in[0,+\infty)}\Big({|f(t)|\over 1+|t|}\Big)\le r
\rbrace,\label{2.3}
\end{equation}
and endowed with the norm $\|.\|$ where $\|f\|=\sup_{(-\infty,0]}|f|+\sup_{t\in[0,+\infty)}\Big({|f(t)|\over 1+|t|}\Big)$.
Then we have the following estimate and contraction estimate for the map $A$ restricted to $M_r$.

\begin{lemma}
\label{lem_cont}
Let $(v_-,x_-)\in (\R^n\b \B(0,\mu))\times\R^n$, $v_-\cdot x_-=0$, and let $r>0$, $r<\max({|v_-|\over 2^{3\over 2}},1+{|x_-|\over \sqrt{2}})$.
Then the following estimates are valid:
\begin{eqnarray}
\|A(f)\|&\le& \rho(n,\alpha,\beta_2,|x_-|,|v_-|,r)\label{l1}\\
&:=&{\beta_2(n(3|x_-|+2r)+2\sqrt{n})\over ({|v_-|\over 2^{3\over 2}}-r)(1-r)^\alpha}\big({2\over \alpha({|v_-|\over 2^{3\over 2}}-r)}
+{1\over (\alpha+1)(1-r)}\big),\nonumber
\end{eqnarray}
and
\begin{equation}
\|A(f_1)-A(f_2)\|\le \lambda(n,\alpha,\beta_2,\beta_3^s,|x_-|,|v_-|,r)\|f_1-f_2\|\label{l2},
\end{equation}
\begin{eqnarray}
 \lambda(n,\alpha,\beta_2,\beta_3^s,|x_-|,|v_-|,r)&:=&{2n\over \alpha({|v_-|\over 2^{3\over 2}}-r)(1-r+{|x_-|\over \sqrt{2}})^\alpha}\big(\beta_2+{\beta_3^s\over 1-r+{|x_-|\over \sqrt{2}}}+{\beta_3^s\over {|v_-|\over 2^{3\over2}}-r}\big)\nonumber\\
&&\times \big({1\over 1-r+{|x_-|\over \sqrt{2}}}+{1\over {|v_-|\over 2^{3\over 2}}-r}\big),\nonumber
\end{eqnarray}
for $(f,f_1,f_2)\in M_r^3$.
\end{lemma}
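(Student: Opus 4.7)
My plan is to proceed via pointwise majorization of the integrand of $A$ followed by explicit one-dimensional integrations. The key geometric input I will establish first is the uniform lower bound
$$
1+|z_-(v_-,\tau)+x_-+f(\tau)|\ge a+b|\tau|,\qquad f\in M_r,\ \tau\in \R,
$$
where $a:=1-r+|x_-|/\sqrt{2}$ and $b:=|v_-|/2^{3/2}-r$ (both positive under the hypotheses on $r$). I would obtain it by combining three ingredients: the Pythagorean identity $|\tau v_-+x_-|\ge(|\tau||v_-|+|x_-|)/\sqrt{2}$ (from $v_-\cdot x_-=0$); the bound $|z_-(v_-,\tau)-\tau v_-|\le |v_-||\tau|/(4\sqrt{2})$, which is Lemma \ref{lem_scatinit} applied with $x=0=h$, $w=v_-$ (whose hypothesis holds since $|v_-|\ge\mu$); and $|f(\tau)|\le r\max(1,1+\tau)$ from the definition of $M_r$ (using $3|v_-|/(4\sqrt{2})\ge |v_-|/(2\sqrt{2})$ to handle the $\tau\ge 0$ case).

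For \eqref{l1} I will decompose $F=F^l+F^s$ and split the integrand of $A(f)$ as $[F^l(z_-+x_-+f)-F^l(z_-)]+F^s(z_-+x_-+f)$. The mean value theorem combined with $|\nabla F^l|\le n\beta_2^l(1+|\cdot|)^{-(\alpha+2)}$ from \eqref{1.4a} then bounds the first summand by $n\beta_2^l(|x_-|+|f(\tau)|)(a+b|\tau|)^{-(\alpha+2)}$; \eqref{1.4b} bounds the second by $\sqrt{n}\,\beta_2^s(a+b|\tau|)^{-(\alpha+2)}$. The primitive $\int (a+b|\tau|)^{-(\alpha+2)}d\tau=(a+b|\tau|)^{-(\alpha+1)}/((\alpha+1)b)$, iterated twice and followed by the bound $a^{-\alpha}\le (1-r)^{-\alpha}$, produces denominators of the form $b^2(1-r)^\alpha$ and $b(1-r)^{\alpha+1}$, matching the two summands in the second factor of \eqref{l1}. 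Computing the two suprema defining $\|A(f)\|$ separately---on $(-\infty,0]$ with $|f|\le r$, and on $[0,+\infty)$ divided by $1+t$ with $|f(\tau)|\le r(1+\tau)$, where the extra $(1+\tau)$ gets absorbed into the $\tau$-integration---then bundling the resulting $|x_-|$ and $r$ contributions should yield the coefficient $n(3|x_-|+2r)+2\sqrt{n}$.

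The Lipschitz estimate \eqref{l2} will follow identically from the integral representation
$$
F(z_-+x_-+f_1)-F(z_-+x_-+f_2)=(f_1-f_2)\cdot\int_0^1\nabla F(z_-+x_-+f_2+s(f_1-f_2))\,ds
$$
combined with $|\nabla F^l|\le n\beta_2^l(1+|\cdot|)^{-(\alpha+2)}$, $|\nabla F^s|\le n\beta_3^s(1+|\cdot|)^{-(\alpha+3)}$, and $|f_1(\tau)-f_2(\tau)|\le \|f_1-f_2\|\max(1,1+\tau)$. The extra $(1+|\cdot|)^{-1}$ factor in $\nabla F^s$ will be controlled either by the pointwise bound $(a+b|\tau|)^{-1}\le a^{-1}$ or by absorption into the integration to produce an additional $b^{-1}$, explaining the factor $\beta_2+\beta_3^s/a+\beta_3^s/b$ that appears in $\lambda$. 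The main obstacle will be purely bookkeeping: no new ideas are needed beyond elementary integration of power functions, but the constants $n$ versus $\sqrt{n}$, the powers of $\sqrt{2}$, and the way the two pieces of $\|\cdot\|$ combine must be tracked precisely so that the final expressions match the stated $\rho$ and $\lambda$ rather than some equivalent but slightly different-looking form.
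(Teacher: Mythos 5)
Your overall strategy coincides with the paper's (Section \ref{proof_cont}): the lower bound you state first is exactly \eqref{2.5b}, the splitting of the integrand into $F^l(z_-+x_-+f)-F^l(z_-)$ plus $F^s(z_-+x_-+f)$ is the one used to obtain \eqref{2.7b}--\eqref{2.7c}, and your treatment of the two halves of the norm (handling $t\le 0$ directly, then for $t\ge 0$ working from the values at $t=0$ and absorbing the factor $1+\tau$ coming from $|f(\tau)|\le r(1+\tau)$ into the $\tau$-integration) is the same bookkeeping as \eqref{2.10}--\eqref{2.19} and \eqref{2.20}--\eqref{2.30}.

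There is, however, one step that fails as written. For the difference $F^l(z_-(v_-,\tau)+x_-+f(\tau))-F^l(z_-(v_-,\tau))$ you invoke the mean value theorem and claim the majorant $n\beta_2^l(|x_-|+|f(\tau)|)(a+b|\tau|)^{-(\alpha+2)}$ with $a=1-r+|x_-|/\sqrt{2}$. The mean value theorem requires a lower bound on $1+|z_-(v_-,\tau)+\eta(x_-+f(\tau))|$ uniformly in $\eta\in[0,1]$, and at $\eta=0$ this quantity is $1+|z_-(v_-,\tau)|$, which equals $1$ at $\tau=0$ (recall $z_-(v_-,0)=0$); it is therefore not bounded below by $a$ once $|x_-|>r\sqrt{2}$, and the claimed pointwise bound is false near $\tau=0$. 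The correct denominator for this particular difference is $\big(1-r+b|\tau|\big)^{\alpha+2}$, which is precisely the paper's \eqref{2.7c}. The bound $a+b|\tau|$ is legitimate only for quantities evaluated at the single point $z_-+x_-+f$ (as in \eqref{2.7b}), or, in the contraction estimate, on the segment between $z_-+x_-+f_1$ and $z_-+x_-+f_2$: that segment is of the form $z_-+x_-+g$ with $g=(1-s)f_2+sf_1\in M_r$ by convexity, so \eqref{2.5b} applies to it, which is why \eqref{2.20a}--\eqref{2.20b} do retain the $|x_-|/\sqrt{2}$. Since you weaken $a^{-\alpha}$ to $(1-r)^{-\alpha}$ at the end anyway, the final constants in \eqref{l1} are unaffected once the intermediate majorization is corrected; the error is local and repairable, but the step as stated is not valid.
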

A proof of Lemma \ref{lem_cont} is given in Section \ref{proof_cont}.

We also need the following result.
\begin{lemma}
\label{lem_decomp}
Let $(v_-,x_-)\in (\R^n\b \B(0,\mu))\times\R^n$, $v_-\cdot x_-=0$, and let $r>0$, $r<\max({|v_-|\over 2^{3\over 2}},1+{|x_-|\over \sqrt{2}})$. When $y_-\in M_r$ is a fixed point for the map $A$ then $z_-(v_-,.)+x_-+y_-(.)$ is a scattering solution for equation \eqref{1.1} and
\begin{equation}
z_-(v_-,t)+x_-+y_-(t)
=z_+(a(v_-,x_-),t)+b(v_-,x_-)+y_+(t),\label{300}
\end{equation}
for $t\ge 0$, where
\begin{eqnarray}
a(v_-,x_-)&:=&v_-+\int_{-\infty}^{+\infty}F(z_-(v_-,\tau)+x_-+y_-(\tau))d\tau,\label{300a}\\
b(v_-,x_-)&:=&x_-+l(v_-,x_-,y_-)+l_1(v_-,x_-)+l_2(v_-,x_-,y_-),\label{300b}
\end{eqnarray}
\begin{equation}
y_+(t):=\int_t^{+\infty}\int_\sigma^{+\infty}\big(F(z_-(v_-,\tau)+x_-+y_-(\tau))-F^l(z_+(a(v_-,x_-),\tau))\big)d\tau d\sigma,\label{300c}
\end{equation}
for $t\ge 0$, and where
\begin{eqnarray}
l(v_-,x_-,y_-)&:=&\int_{-\infty}^0\int_{-\infty}^\sigma \left(F\big(z_-(v_-,\tau)+x_-+y_-(\tau)\big)-F^l\big(z_-(v_-,\tau)\big)\right)d\tau d\sigma\nonumber\\
&&-\int_0^{+\infty}\int_\sigma^{+\infty}F^s\big(z_-(v_-,\tau)+x_-+y_-(\tau)\big)d\tau d\sigma,\label{502a}
\end{eqnarray}
\begin{equation}
l_1(v_-,x_-):=-\int_0^{+\infty}\int_\sigma^{+\infty}\big(F^l(z_+(a(v_-,x_-),\tau)+x_-)-F^l(z_+(a(v_-,x_-),\tau))\big)d\tau d\sigma,\label{502b}
\end{equation}
\begin{equation}
l_2(v_-,x_-,y_-):=-\int_0^{+\infty}\int_\sigma^{+\infty}\big(F^l(z_-(v_-,\tau)+x_-+y_-(\tau))-F^l(z_+(a(v_-,x_-),\tau)+x_-)\big)d\tau d\sigma,\label{502c}
\end{equation}
for $t\ge 0$.
\end{lemma}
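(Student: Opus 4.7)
The first step is to verify that $x(t) := z_-(v_-,t) + x_- + y_-(t)$ solves \eqref{1.1} with the left-hand asymptotic prescribed in \eqref{1.6}. Differentiating the fixed-point equation $y_- = A(y_-)$ twice and combining with $\ddot z_-(v_-,t) = F^l(z_-(v_-,t))$ yields $\ddot x = F(x)$. The decay $|y_-(t)| + |\dot y_-(t)| \to 0$ as $t \to -\infty$ follows immediately from the form \eqref{2.1}, since the integrand $F(z_-+x_-+y_-) - F^l(z_-)$ is dominated at $-\infty$ by $|F^s|$ plus $|F^l(z_-+x_-+y_-) - F^l(z_-)|$, both controlled by \eqref{1.4a}--\eqref{1.4b} and the lower bound on $|z_-(v_-,\tau)|$ from Lemma \ref{lem_scatinit}.

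Next I establish $\dot x(t) \to a(v_-, x_-)$ as $t \to +\infty$. Writing $\dot z_-(v_-, t) = v_- + \int_{-\infty}^t F^l(z_-(v_-,\tau)) d\tau$ and $\dot y_-(t) = \int_{-\infty}^t [F(z_-+x_-+y_-) - F^l(z_-)] d\tau$, the $F^l(z_-)$ contributions cancel and leave $\dot x(t) = v_- + \int_{-\infty}^t F(x(\tau)) d\tau$. Using Lemma \ref{lem_scatinit} and $y_- \in M_r$ to ensure $|x(\tau)| \ge c|\tau|$ for some $c > 0$ when $|\tau|$ is large, the integrand is $O(|\tau|^{-\alpha-1})$, and letting $t \to +\infty$ gives $\dot x(t) \to v_- + \int_{\R} F(x(\tau)) d\tau = a(v_-, x_-) =: v_+$.

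The core step is to identify the position decomposition. I set $\tilde y_+(t) := x(t) - z_+(v_+, t) - b(v_-, x_-)$ for $t \ge 0$. Then $\ddot{\tilde y}_+ = F(x) - F^l(z_+(v_+,\cdot))$ and $\dot{\tilde y}_+(t) = \dot x(t) - \dot z_+(v_+, t) \to 0$ at $+\infty$, so a first integration yields $\dot{\tilde y}_+(t) = -\int_t^{+\infty} [F(x) - F^l(z_+(v_+))] d\tau$. To justify a second integration matching \eqref{300c}, I decompose the integrand as $F^s(x) + [F^l(x) - F^l(z_+(v_+)+x_-)] + [F^l(z_+(v_+)+x_-) - F^l(z_+(v_+))]$, corresponding by construction to (the tail of) $l$, to $l_2$, and to $l_1$ respectively. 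Each piece is integrable on $[t,+\infty)^2$: the $F^s$ contribution by \eqref{1.4b}; the other two by the mean-value estimate $|F^l(y) - F^l(y')| \le \beta_2^l(1+\min(|y|,|y'|))^{-\alpha-2} |y - y'|$ combined with Lemma \ref{lem_scatinit} and the $M_r$-bound on $y_-$.

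Finally, evaluating at $t=0$ one has $\tilde y_+(0) = x(0) - b = x_- + y_-(0) - b$, while the integral representation (anchored at $+\infty$ with vanishing boundary term) gives $\tilde y_+(0) = y_+(0)$. This forces $b = x_- + y_-(0) - y_+(0)$; expanding $y_-(0) = A(y_-)(0)$ via the same integrand decomposition recovers $b = x_- + l + l_1 + l_2$, verifying \eqref{300b}. The decay $|y_+(t)| + |\dot y_+(t)| \to 0$ then follows from \eqref{300c} by dominated convergence. The main obstacle I anticipate is in the third paragraph: controlling the $l_2$-integrand, whose argument difference $z_-(v_-,\tau)+y_-(\tau)-z_+(v_+,\tau)$ is only $O(1+|\tau|)$ a priori, so the desired integrability must be extracted by pairing it with the $(1+|\cdot|)^{-\alpha-2}$ smallness of $\partial F^l$ and exploiting the quantitative lower bounds of Lemma \ref{lem_scatinit} — essentially, the construction of $l_1$ is designed precisely to separate off the $x_-$-shift so that $l_2$'s residual argument difference becomes manageable.
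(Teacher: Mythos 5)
Your overall architecture (verify $x=z_-+x_-+y_-$ solves \eqref{1.1}, identify $v_+=a(v_-,x_-)$ as the limit of $\dot x$, then integrate twice from $+\infty$ using the three-term splitting of the integrand that produces $l$, $l_1$, $l_2$) matches the paper's. But there is a genuine gap exactly where you flag an "obstacle", and the resolution you sketch does not work. For the $l_2$-term and for the $F^l(x)-F^l(z_+(v_+,\cdot))$ part of $y_+$, the mean-value bound gives an integrand of size $(1+\tau)^{-\alpha-2}\,|x(\tau)-z_+(v_+,\tau)-x_-|$. A priori this difference is only $O(1+\tau)$ (indeed $x(\tau)-z_+(v_+,\tau)$ contains the term $-\tau\,a_{sc}(v_-,x_-)$, which grows linearly), so the integrand is only $O((1+\tau)^{-\alpha-1})$, and
\begin{equation*}
\int_0^{+\infty}\int_\sigma^{+\infty}(1+\tau)^{-\alpha-1}\,d\tau\,d\sigma
=\frac{1}{\alpha}\int_0^{+\infty}(1+\sigma)^{-\alpha}\,d\sigma=+\infty
\quad\text{for }\alpha\le 1,
\end{equation*}
which is precisely the long-range regime of the paper. "Pairing the $O(1+\tau)$ growth with the $(1+\cdot)^{-\alpha-2}$ smallness of $\partial F^l$" therefore does not yield convergence of the double integral, and Lemma \ref{lem_scatinit} cannot help: it only controls $z_\pm$ relative to the straight line $x+h+tw$, not $x(t)$ relative to $z_+(v_+,t)$.

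The missing ingredient is the a priori bound $\sup_{t\in(0,+\infty)}|x(t)-z_+(a(v_-,x_-),t)|<+\infty$. The paper obtains it from Lemma \ref{lem_comp}: since $\dot x$ and $\dot z_+(v_+,\cdot)$ converge to the same nonzero limit, a bootstrap iteration of the Gronwall-type inequality $|\delta(t)|\le C_0+n\beta_2\ep^{-\alpha-2}\int_0^t\int_\sigma^{+\infty}(1+\tau)^{-\alpha-2}|\delta(\tau)|\,d\tau\,d\sigma$ upgrades $|\delta(t)|=O(t^{1-\alpha})$ to $O(t^{1-m\alpha})$ for $m=1,\dots,\lfloor\alpha^{-1}\rfloor$ and finally to $O(1)$. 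Once this boundedness is in hand, the $l_2$- and $y_+$-integrands are $O((1+\tau)^{-\alpha-2})$, all double integrals converge absolutely, and your evaluation at $t=0$ (together with $y_-(0)=A(y_-)(0)$) does recover \eqref{300b}. So you need to either invoke Lemma \ref{lem_comp} or reproduce its bootstrap; without it the second integration anchored at $+\infty$ is not justified.
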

Lemma \ref{lem_decomp} is proved in Section \ref{proofprel}.
Note that $l_1$ is known from the scattering data and the knowledge of $F^l$.

\subsection{Estimates on the scattering solutions}
In this Section our main results consist in estimates and asymptotics for the scattering data $(a_{sc},b_{sc})$ 
and scattering solutions for the equation \eqref{1.1}.

\begin{theorem}
\label{thm_y}
Under the assumptions of Lemma \ref{lem_decomp} the following estimates are valid
\begin{eqnarray}
|\dot y_-(t)|&\le&{\beta_2\big(n(|x_-|+r)+\sqrt{n}\big)\over (\alpha+1)\big({|v_-|\over 2\sqrt{2}}-r\big)\Big(1-r+|t|\big({|v_-|\over 2\sqrt{2}}-r\big)\Big)^{\alpha+1}},\label{t10a}\\
|y_-(t)|&\le&{\beta_2(n(|x_-|+r)+\sqrt{n})\over \alpha(\alpha+1)\big({|v_-|\over 2\sqrt{2}}-r\big)^2\Big(1-r+|t|\big({|v_-|\over 2\sqrt{2}}-r\big)\Big)^\alpha}, \label{t10b}
\end{eqnarray}
for $t\le 0$.
In addition 
\begin{equation}
|a_{sc}(v_-,x_-)|\le {2n^{1\over 2}\over\big({|v_-|\over 2 \sqrt{2}}-r\big)(1+{|x_-|\over \sqrt{2}}-r)^\alpha}\Big({\beta_1^l\over \alpha}+{\beta_2\over (\alpha+1)(1+{|x_-|\over \sqrt{2}}-r)}\Big).\label{503a}
\end{equation}
\begin{equation}
|l(v_-,x_-,y_-)|\le{\beta_2n^{1\over 2} (n^{1\over 2}(|x_-|+r)+2)\over \alpha (\alpha+1)(1-r)^\alpha\big({|v_-|\over 2 \sqrt{2}}-r\big)^2},\label{503b}\\
\end{equation}
and
\begin{eqnarray}
&&\big|a_{sc}(v_-,x_-)-\int_{-\infty}^{+\infty}F(z_-(v_-,\tau)+x_-)d\tau\big|\nonumber\\
&\le&{4\max(\beta_2,\beta_3^s)^2n^{3\over 2}(n^{1\over 2}(3|x_-|+2r)+2)\over \alpha^2({|v_-|\over 2^{3\over 2}}-r)^2(1-r)^{2\alpha+3}}\big({1\over{|v_-|\over 2^{3\over 2}}-r}+1\big)^2,\label{504a}
\end{eqnarray}
\begin{eqnarray}
|l(v_-,x_-,y_-)-l(v_-,x_-,0)|&\le&{4\max(\beta_2,\beta_3^s)^2n^{3\over 2}(n^{1\over 2}(3|x_-|+2r)+2)\over \alpha^2(\alpha+1)({|v_-|\over 2^{3\over 2}}-r)^3(1-r)^{2\alpha+2}}\nonumber\\
&&\times\big({1\over {|v_-|\over 2^{3\over 2}}-r}+1\big)^2.\label{504b}
\end{eqnarray}
In addition when
\begin{equation}
{8n\max(\beta_1^l,\beta_2)\over \alpha \big({|v_-|\over 2^{3\over 2}}-r\big)^2(1-r)^{\alpha+1}}\le 1,\label{503}
\end{equation}
then
\begin{eqnarray}
|l_1(v_-,x_-)|&\le&{8\beta_2 n|x_-|\over \alpha(\alpha+1)|v_-|^2},\label{503c}\\
|l_2(v_-,x_-,y_-)|&\le&{2n^{3\over 2}\beta_2^2(n^{1\over 2}(2|x_-|+r)+3)\over \alpha^2(\alpha+1)^2(1-r)^{2\alpha}({|v_-|\over 2^{3\over 2}}-r)^4},\label{503d}\\
|y_+(t)|&\le& {2n^{1\over 2}\beta_2\over \alpha(\alpha+1)({|v_-|\over 2^{3\over 2}}-r)^2(1-r+{|x_-|\over \sqrt{2}}+t({|v_-|\over 2^{3\over 2}}-r))^\alpha}\nonumber\\
&&\times\big(1+{2n\beta_2(n^{1\over 2}(2|x_-|+r)+3)\over \alpha(\alpha+1)({|v_-|\over 2^{3\over 2}}-r)^2(1-r)^\alpha}\big),\label{503e}
\end{eqnarray}
for $t\ge 0$.
\end{theorem}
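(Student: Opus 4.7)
The plan is to use the fixed-point equation $y_-=A(y_-)$ guaranteed by Lemma \ref{lem_cont}, so that all the required estimates reduce to bounding explicit integrals of $F=F^l+F^s$ evaluated along the perturbed trajectory $z_-(v_-,\tau)+x_-+y_-(\tau)$ (for $y_-$, $a_{sc}$, $l$) or along $z_+(a(v_-,x_-),\tau)$ and its translate by $x_-$ (for $l_1$, $l_2$, $y_+$). The one ingredient used repeatedly is the pointwise lower bound
\begin{equation*}
1+|z_-(v_-,\tau)+x_-+y_-(\tau)|\ge 1-r+\frac{|x_-|}{\sqrt{2}}+|\tau|\Big(\frac{|v_-|}{2\sqrt{2}}-r\Big),
\end{equation*}
which follows from $v_-\cdot x_-=0$, from \eqref{5.103b}, and from $\|y_-\|\le r$, together with the size and gradient bounds on $F^l,F^s$ read off from \eqref{1.4a}, \eqref{1.4b}.

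For \eqref{t10a}, \eqref{t10b} I would differentiate the fixed-point identity in $t$ to obtain $\dot y_-(t)=\int_{-\infty}^t[F(z_-+x_-+y_-)-F^l(z_-)]d\tau$, split the integrand as $(F^l(z_-+x_-+y_-)-F^l(z_-))+F^s(z_-+x_-+y_-)$, apply the mean value theorem to the $F^l$ difference with gradient bound $\beta_2^l$, and bound $F^s$ directly via \eqref{1.4b}. Integration in $\tau$ yields \eqref{t10a}, and a second integration in $\sigma$ yields \eqref{t10b}. The same splitting applied to the formula $a_{sc}=\int F(z_-+x_-+y_-)d\tau$ from \eqref{300a} produces \eqref{503a}, and applied to \eqref{502a} produces \eqref{503b}.

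For the Born-type approximations \eqref{504a}, \eqref{504b} I would write
\begin{equation*}
F(z_-+x_-+y_-)-F(z_-+x_-)=\int_0^1\nabla F(z_-+x_-+sy_-(\tau))\cdot y_-(\tau)\,ds,
\end{equation*}
apply $|\nabla F|\le\max(\beta_2^l,\beta_3^s)(1+|\cdot|)^{-(\alpha+2)}$ from \eqref{1.4a}, \eqref{1.4b} with $|j|=1$, and insert the bound \eqref{t10b} on $|y_-|$. The extra power of decay from $\nabla F$ combined with the $({|v_-|/2\sqrt{2}}-r)^{-2}$ smallness of $|y_-|$ accounts for the additional $({|v_-|/2\sqrt{2}}-r)^{-2}$ factor appearing in \eqref{504a}, \eqref{504b} relative to \eqref{503a}, \eqref{503b}; in \eqref{504b} the $F^l(z_-)$ contributions of \eqref{502a} cancel in the difference $l(v_-,x_-,y_-)-l(v_-,x_-,0)$.

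Under the smallness hypothesis \eqref{503}, the scattering velocity $a(v_-,x_-)$ still has modulus $|v_-|\ge\mu$ by energy conservation, so $z_+(a(v_-,x_-),\cdot)$ enjoys an analogue of \eqref{5.103b} and an analogue of the lower bound above. Estimate \eqref{503c} then follows by applying the mean value theorem to $F^l(z_+(a,\tau)+x_-)-F^l(z_+(a,\tau))$ inside \eqref{502b}; for \eqref{503d}, the $F^l$-difference inside \eqref{502c} is rewritten as a gradient integral and the trajectory discrepancy is decomposed as $z_-(v_-,\tau)+y_-(\tau)-z_+(a(v_-,x_-),\tau)=(z_-(v_-,\tau)-z_+(a(v_-,x_-),\tau))+y_-(\tau)$ and controlled via \eqref{5.103b}, \eqref{t10b}, and \eqref{503a}; and \eqref{503e} is obtained by repeating the strategy used for $y_-$ on the integral formula \eqref{300c}. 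I expect the main difficulty to lie in \eqref{503d}, where reaching the sharp $({|v_-|/2\sqrt{2}}-r)^{-4}$ factor requires combining the $({|v_-|/2\sqrt{2}}-r)^{-2}$ from $|y_-|$, the $({|v_-|/2\sqrt{2}}-r)^{-1}\cdot|\tau|$ from $|z_--z_+(a,\cdot)|\le|a_{sc}||\tau|+\cdots$, and the two $({|v_-|/2\sqrt{2}}-r)^{-1}$ factors from the successive $\tau$- and $\sigma$-integrations against the $(\alpha+2)$-decay of $|\nabla F^l|$, while maintaining a uniform lower bound along the interpolated trajectory between $z_-(v_-,\tau)+x_-+y_-(\tau)$ and $z_+(a(v_-,x_-),\tau)+x_-$.
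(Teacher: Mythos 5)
Most of your plan does coincide with the paper's proof: \eqref{t10a}--\eqref{t10b} are the bounds \eqref{2.10}--\eqref{2.11} on $\dot A(f)$, $A(f)$ for $t\le 0$ applied to the fixed point; \eqref{503b} is a direct integration of the pointwise bounds; and \eqref{504a}, \eqref{504b} are exactly the contraction estimates \eqref{2.26}, \eqref{2.30} applied to the pair $(y_-,0)$ together with $\|y_-\|=\|A(y_-)\|\le\rho$ from \eqref{l1}. Two small inaccuracies: for \eqref{503a} there is no $F^l(z_-(v_-,\tau))$ subtracted in \eqref{300a}, so the mean-value splitting does not apply; one must bound $|F^l(z_-+x_-+y_-)|$ directly by \eqref{1.4a}, which is where the $\beta_1^l/\alpha$ term and the absence of $|x_-|$ in \eqref{503a} come from. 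And for \eqref{504a}--\eqref{504b} the integrals run over all of $\R$, so you need the control $|y_-(\tau)|\le(1+\tau)\|y_-\|$ for $\tau\ge 0$, not only \eqref{t10b}.

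The genuine gap is in \eqref{503d} and \eqref{503e}. You propose to bound the discrepancy in \eqref{502c} by $|z_-(v_-,\tau)-z_+(a(v_-,x_-),\tau)|+|y_-(\tau)|$ and to control each piece via \eqref{5.103b}, \eqref{503a} and the bounds on $y_-$. But every bound available for these individual pieces grows linearly in $\tau$ (each of \eqref{5.103b}, $|\tau|\,|a_{sc}|$, and $|y_-(\tau)|\le(1+\tau)r$ is $O(\tau)$, and the two trajectories genuinely separate linearly), so the integrand in \eqref{502c} decays only like $|\tau|\cdot|\tau|^{-(\alpha+2)}=|\tau|^{-(\alpha+1)}$; the inner $\tau$-integral then decays like $\sigma^{-\alpha}$ and the outer $\sigma$-integral over $(0,+\infty)$ \emph{diverges} for every $\alpha\in(0,1]$, which is the standing assumption. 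The missing idea is that the sum $z_-(v_-,\tau)+y_-(\tau)-z_+(a(v_-,x_-),\tau)$ is uniformly bounded on $[0,+\infty)$: by the decomposition \eqref{300}--\eqref{300b} it equals $l(v_-,x_-,y_-)+l_1(v_-,x_-)+l_2(v_-,x_-,y_-)+y_+(\tau)$, and it is this identity that must be substituted into \eqref{502c} and \eqref{300c}. This produces the coupled inequalities $|l_2|\le\ep\,(|l|+|l_1|+|l_2|+\sup|y_+|)$ and $\sup|y_+|\le\ep\,(n^{-1/2}+|l|+|l_1|+|l_2|+\sup|y_+|)$ with $\ep\le 1/8$ guaranteed by \eqref{503}, which are then solved simultaneously by absorption to yield \eqref{503d} and \eqref{503e}. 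Without this self-consistent bootstrap the estimates for $l_2$ and $y_+$ cannot be closed, and this is precisely the role of hypothesis \eqref{503}.
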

A proof of Theorem \ref{thm_y} is given in Section \ref{proof_thm_y}.
We now prove Theorem \ref{thm} combining Theorem \ref{thm_y}, Lemma \ref{lem_cont} and estimate \eqref{5.103b}.

\subsection{Proof of Theorem \ref{thm}}
Let  $(v_-,x_-)\in \R^n\times\R^n$, $v_-\cdot x_-=0$ and $|v_-|\ge \mu$. 
We first prove estimates \eqref{901} and \eqref{902} given below. We use the following estimate \eqref{505a}
\begin{eqnarray}
&&|x_-+\eta v_-\tau+(1-\eta)z_-(v_-,\tau)|\ge |x_-+\tau v_-|-|z_-(v_-,\tau)-\tau v_-|\nonumber\\
&\ge&{|x_-|\over \sqrt{2}}+({|v_-|\over \sqrt{2}}-{2^{5\over 2}n^{1\over 2}\beta_1^l\over \alpha|v_-|})|\tau|
\ge {|x_-|\over \sqrt{2}}+|\tau| {|v_-|\over 2^{3\over 2}},\label{505a}
\end{eqnarray}
for $\eta\in (0,1)$ and $\tau \in \R$ (we used \eqref{5.103b} and \eqref{5.101}).
Then from \eqref{1.4a}, \eqref{1.4b}, \eqref{505a} and \eqref{5.103b} it follows that
\begin{eqnarray}
|F^s(z_-(v_-,\tau)+x_-)-F^s(\tau v_-+x_-)|&\le&\sup_{\eta\in (0,1)}{n\beta_3^s|z_-(v_-,\tau)-\tau v_-|\over(1+|x_-+\eta v_-+(1-\eta)z_-(v_-,\tau)|)^{\alpha+3}}\nonumber\\
&\le&{2^{5\over 2}n^{3\over 2}\beta_3^s\beta_1^l|\tau|\over \alpha|v_-|(1+{|x_-|\over \sqrt{2}}+|\tau| {|v_-|\over 2^{3\over 2}})^{\alpha+3}},\label{900a}
\end{eqnarray}
for $\tau\in \R$. Similarly
\begin{equation}
|F^l(z_-(v_-,\tau)+x_-)-F^l(\tau v_-+x_-)|\le{2^{5\over 2}n^{3\over 2}|\tau|\beta_2\beta_1^l\over \alpha|v_-|(1+{|x_-|\over \sqrt{2}}+|\tau| {|v_-|\over 2^{3\over 2}})^{\alpha+2}},\label{900b}
\end{equation}
for $\tau\in \R$. Then using \eqref{900a} and \eqref{900b} we have 
$$
\big|\int_{-\infty}^{+\infty}F(z_-(v_-,\tau)+x_-)d\tau-\int_{-\infty}^{+\infty}F(\tau v_-+x_-)d\tau\big|
$$
\begin{equation}
\le {2^{7\over 2}n^{3\over 2}\max(\beta_2,\beta_3^s)\beta_1^l\over \alpha|v_-|}\int_{-\infty}^{+\infty}{|\tau|d\tau\over (1+{|x_-|\over \sqrt{2}}+{|v_-|\over 2^{3\over 2}}|\tau|)^{\alpha+2}}\le {2^{15\over 2}n^{3\over 2}\max(\beta_2,\beta_3^s)\beta_1^l\over \alpha^2|v_-|^3(1+{|x_-|\over \sqrt{2}})^\alpha}.\label{901}
\end{equation}
Set \begin{equation}
\Delta_1(v_-,x_-)=\int_{-\infty}^0\int_{-\infty}^\sigma F^s\big(z_-(v_-,\tau)+x_-\big)d\tau d\sigma-\int_0^{+\infty}\int_\sigma^{+\infty}F^s\big(z_-(v_-,\tau)+x_-\big)d\tau d\sigma.
\label{3.8}
\end{equation}
Then using \eqref{900a} we have
$$
\big|\Delta_1(v_-,x_-)-\int_{-\infty}^0\int_{-\infty}^\sigma F^s\big(\tau v_-+x_-\big)d\tau d\sigma+\int_0^{+\infty}\int_\sigma^{+\infty}F^s\big(\tau v_-+x_-\big)d\tau d\sigma\big|
$$
\begin{equation}
\le {2^{7\over 2}n^{3\over 2}\beta_1^l\beta_3^s\over \alpha|v_-|}\int_{-\infty}^0\int_{-\infty}^\sigma{|\tau|d\tau d\sigma\over(1+{|x_-|\over \sqrt{2}}+{|v_-|\over 2\sqrt{2}}|\tau|)^{\alpha+3}}\le {2^{13\over 2}n^{3\over 2}\beta_1^l\beta_3^s\over \alpha^2(\alpha+1)|v_-|^3(1+{|x_-|\over \sqrt{2}})^{\alpha}}.\label{902}
\end{equation}
Let $r>0$, $r<\max({|v_-|\over 2^{3\over 2}},1)$. Note that
\begin{equation}
\max\big({\rho\over r}, \lambda,{8n\max(\beta_1^l,\beta_2)\over \alpha \big({|v_-|\over 2^{3\over 2}}-r\big)^2(1-r)^{\alpha+1}} \big)\le
{4\beta n(|x_-|+1)\over \alpha r({|v_-|\over 2^{3\over 2}}-r)(1-r)^{\alpha+2}}\big(1+{1\over {|v_-|\over 2^{3\over 2}}-r}\big)^2,\label{903}
\end{equation}
where $\rho$ and $\lambda$ are defined by \eqref{l1} and \eqref{l2} respectively.
Assume that $|v_-|> s_0(|x_-|,r,\beta,\alpha)$ where $s_0$ is the root of the equation \eqref{1.30}.
Then from \eqref{1.30} and Lemma \ref{lem_cont} it follows that $A$ has a unique fixed point in $M_r$ denoted by $y_-$. Then adding \eqref{504a} and \eqref{901} we obtain \eqref{t3}. Note also that 
\begin{eqnarray*}
&&l(v_-,x_-,0)=\int_{-\infty}^0\int_{-\infty}^\sigma \left(F^l\big(z_-(v_-,\tau)+x_-\big)-F^l\big(z_-(v_-,\tau)\big)\right)d\tau d\sigma\\
&&+\Delta_1(v_-,x_-).
\end{eqnarray*}
Hence adding \eqref{902}, \eqref{503d} and \eqref{504b} we obtain \eqref{t4}. Theorem \ref{thm} is proved.
\hfill$\Box$

\subsection{Motivations for changing the definition of the scattering map}
\label{com}
For a solution  $x$ at a nonzero energy for equation \eqref{1.1} we say that it is a scattering solution when there exists $\ep>0$ so that $1+|x(t)|\ge \ep(1+|t|)$ for $t\in \R$ (see \cite{DG}).
In the Introduction and in the previous subsections we choose to parametrize the scattering solutions of equation \eqref{1.1} by the solutions $z_\pm(v,.)$ of the equation \eqref{5.102} (see the asymptotic behaviors \eqref{1.6} and \eqref{1.7}), and then to formulate the inverse scattering problem \eqref{P} using this parametrization. 
To compute the "free" solutions $z_\pm(v,.)$ one has to integrate equation \eqref{5.102}. For some cases solving \eqref{5.102} leads to simple exact formulas (see \cite[Section 15]{L} when $F^l$ is a Coulombian force).  In general one may choose to approximate the solutions $z_\pm(v,.)$ by the functions $z_{\pm, N+1}(v,.)$ defined 
below. In general the functions $z_{\pm, N+1}(v,.)$ are easier to compute, and in this Subsection we use these approximations to obtain an other formulation of the inverse scattering problem and to mention results similar to Theorem \ref{thm} and to those given in the previous subsections.

Assume without loss of generality that $\alpha\not\in \{{1\over m}\ |\ m\in \N,\ m>0\}$ and set $N=\lfloor \alpha^{-1}\rfloor$ the integer part of $\alpha^{-1}$.
Then let $(x,v,w,h)\in(\R^n)^4 $  so that $v\cdot x=0$ and  \eqref{5.100} and \eqref{5.101} are satisfied. 
We define by induction (see also \cite{He})
\begin{eqnarray}
z_{\pm,0}(w,x+h,t)=x+h+tw,\label{pr1}\\
z_{-,m+1}(w,x+h,t)=x+h+tw+\int_0^t\int_{-\infty}^\sigma F^l(z_{-,m}(w,x+h,\tau))d\tau,\label{pr2}\\
z_{+,m+1}(w,x+h,t)=x+h+tw-\int_0^t\int_\sigma^{+\infty} F^l(z_{+,m}(w,x+h,\tau))d\tau,\label{pr3}
\end{eqnarray}
for $t\in \R$ and for $m=0\ldots N$.
Then one can prove the following estimates by induction (see the proof of Lemma \ref{lem_scatinit} and see also \cite{He})
\begin{equation}
|z_{\pm,m}(w,x+h,t)-wt-x-h|\le {2^{5\over 2}n^{1\over 2}\beta_1^l\over \alpha|v|(1+{|x|\over \sqrt{2}}-|h|)^\alpha}|t|,\ t\in \R,\ m=1\ldots N+1,\label{pr4}
\end{equation}
\begin{eqnarray}
&&|z_{\pm,m+1}(w,x+h,t)-z_{\pm,m}(w,x+h,t)|\le{2^{3(m+1)}n^{m+{1\over 2}}(\beta_2^l)^m\beta_1^l\over \alpha^{m+1}|v|^{2m+2}\Pi_{j=1}^{m+1}(1-j\alpha)j}\nonumber\\
&&\times \Big(
\big(1+{|x|\over \sqrt{2}}-|h|+|t|{|v|\over 2^{3\over2}}\big)^{1-(m+1)\alpha}-\big(1+{|x|\over \sqrt{2}}-|h|\big)^{1-(m+1)\alpha}\Big),\label{pr5}
\end{eqnarray}
for $m=0\ldots N-1$ and for $\pm t\ge 0$,
\begin{equation}
|z_{\pm,N+1}(w,x+h,t)-z_{\pm,N}(w,x+h,t)|\le{2^{3(N+1)}n^{N+{1\over 2}}(\beta_2^l)^N\beta_1^l\over \alpha^{N+1}|v|^{2N+2}\Pi_{j=1}^{N+1}j|1-j\alpha|
\big(1+{|x|\over \sqrt{2}}-|h|\big)^{(N+1)\alpha-1}},\label{pr6}
\end{equation}
for $\pm t\ge 0$, and 
\begin{equation}
|z_{\pm,m+1}(w,x+h,t)-z_{\pm,m}(w,x+h,t)|\le{2^{4m+{5\over 2}}n^{m+{1\over 2}}(\beta_2^l)^m\beta_1^l\over \alpha^{m+1}|v|^{2m+1}(1+{|x|\over \sqrt{2}}-|h|)^{(m+1)\alpha}}|t|,\label{pr7}
\end{equation}
for $t\in \R$ and for $m=1\ldots N$.
We set $z_{\pm,m}(v,.):=z_{\pm,m}(v,0,.)$ for $m=1\ldots N+1$ and $|v|\ge \mu$. For $(v_-,x_-)\in \R^n\times\R^n$, $|v_-|\ge \mu$, there exists a unique solution $x(t)$ of equation \eqref{1.1} so that 
\begin{equation}
x(t)=x_-+z_{-,N+1}(v_-,t)+y_-(t),\ t\in \R\textrm{ and } \lim_{t\to-\infty}(|y_-(t)|+|\dot y_-(t)|)=0.\label{pr8}
\end{equation}
In addition when the solution $x$ in \eqref{pr8} is a scattering solution then there exists a unique $(v_+,x_+)\in \R^n\times\R^n$, $|v_+|=|v_-|$ so that
\begin{equation}
x(t)=x_++z_{+,N+1}(v_+,t)+y_+(t),\ t\in \R\textrm{ and } \lim_{t\to+\infty}(|y_+(t)|+|\dot y_+(t)|)=0.\label{pr9}
\end{equation}
In that case we define the scattering data $(a_N(v_-,x_-), b_N(v_-,x_-)):=(v_+,x_+)$, and we consider the inverse scattering problem
\begin{equation}
\textrm{Given }(a_N, b_N)\textrm{ and given the long range tail }F^l\textrm{ of the force }F,\textrm{ find }F^s. \label{P3}
\end{equation}
The function $y_-$ in \eqref{pr8} satisfies the following integral equation
$y_-=A_N(y_-)$
where
\begin{equation}
A_N(f)(t)=\int_{-\infty}^t\int_{-\infty}^\sigma\Big(F(z_{-,N+1}(v_-,\tau)+x_-+f(\tau))-F^l(z_{-,N}(v_-,\tau))\Big)d\tau d\sigma\label{2.1N}
\end{equation}
for $t\in \R$ and for $f\in C(\R,\R^n)$, $\sup_{(-\infty,0]}|f|<\infty$. Then with appropriate changes in the proof of Lemma \ref{lem_cont} we can study the operator $A_N$ restricted to $M_r$ and we can obtain estimate and contraction estimate similar to \eqref{l1} and \eqref{l2}.  We also obtain the analog of Lemma \ref{lem_decomp} by appropriate change in its proof, and the decomposition \eqref{300} remains valid by replacing $a$, $b$ and $z_-(v_-,\tau)+x_-+y_-(\tau)$ by $a_N$, $b_N$ and  $z_{-,N+1}(v_-,\tau)+x_-+y_-(\tau)$ in \eqref{300}-- \eqref{502c}, and by replacing $z_+$ and $F^l\big(z_-(v_-,\tau)\big)$ by $z_{+,N+1}$ and $F^l\big(z_{-,N}(v_-,\tau)\big)$ in \eqref{300}
and \eqref{502a}, and by replacing $z_+$ by $z_{+,N}$ in \eqref{300c}, \eqref{502b} and \eqref{502c}. An analog of Theorem \ref{thm_y} can be proved for the scattering solutions and scattering data $(a_N, b_N)$. Set $a_{sc,N}(v_-,x_-):=a_N(v_-,x_-)-v_-$ and $b_{sc,N}(v_-,x_-):=b_N(v_-,x_-)-x_-$. Finally the following high energies limits are valid.
Let $(\theta,x)\in T\S^{n-1}$, then
\begin{eqnarray}
\lim_{s\to +\infty}s a_{sc,N}(s\theta,x)&=&PF^l(\theta,x)+PF^s(\theta,x),\label{t1N}\\
\lim_{s\to +\infty}s^2\theta\cdot\Big(b_{sc,N}(s\theta,x)-W_N(s\theta,x)\Big)&=&-PV^s(\theta,x),\label{t2N}
\end{eqnarray}
where 
\begin{eqnarray}
W_N(v,x)&:=&\int_{-\infty}^0\int_{-\infty}^\sigma \big(F^l(z_{-,N}(v,\tau)+x)-F^l(z_{-,N}(v,\tau))\big)d\tau d\sigma\label{t0N}\\
&&-\int_0^{+\infty}\int_\sigma^{+\infty}\big(F^l(z_{+,N}(a_N(v,x),\tau)+x)-F^l(z_{+,N}(a_N(v,x),\tau))\big)d\sigma d\tau,\nonumber
\end{eqnarray}
for $(v,x)\in \R^n\times\R^n$, $v\cdot x=0$ and $|v|>C$ for some constant $C$.
The vector $W_N$ defined by \eqref{t0N} is known from the scattering data and from $F^l$.
For the Problem \eqref{P3}, from \eqref{t1N} (resp. \eqref{t2N}) and inversion formulas for the X-ray transform for $n\ge 2$ (see \cite{R, GGG, Na, No}) it follows that $F^s$ can be  reconstructed from  $a_{sc,N}$ (resp. $b_{sc,N}$). 

The limits  \eqref{t1N} and \eqref{t2N} follow from estimates similar to \eqref{t3} and \eqref{t4} that also
give the Born approximation of   $a_{sc,N},$ $b_{sc,N}$ at fixed energy. However these similar estimates also do not provide the asymptotics of the scattering data
$(a_{sc,N},b_{sc,N})$ when the parameters $\alpha,$ $n,$ $v_-$ and $\beta$ are fixed and $|x_-|\to +\infty$.
Motivated by this disadvantage, in the next section we modify the definition of the scattering map given in the Introduction so that one can obtain a result on this asymptotic regime.

\section{A modified scattering map}
\label{cont2}
\subsection{Changing the parametrization of the scattering solutions}
We set
\begin{equation}
\mu(\sigma):=\sqrt{2^5n\max(\beta_1^l,\beta_2^l)\over \alpha (1+{\sigma\over \sqrt{2}})^\alpha},\textrm{ for }\sigma\ge 0.\label{4.7a}
\end{equation}
Under conditions \eqref{1.4a} and \eqref{1.4b}, the following is valid: for any 
$(v_-,x_-)\in \R^n\b\{0\}\times\R^n$ so that $|v_-|\ge \mu(|x_-|)$ and $v_-\cdot x_-=0$, then
the equation \eqref{1.1}  has a unique solution $x\in C^2(\R,\R^n)$ such that
\begin{equation}
{x(t)=z_-(v_-,x_-,t)+y_-(t),}\label{4.6}
\end{equation}
where $|\dot y_-(t)|+|y_-(t)|\to 0$, as $t\to -\infty$, and where $z_-(v_-,x_-,.)$ is defined in Lemma \ref{lem_scatinit} (for "$(w,x,v,h)=(v_-,x_-,v_-,0)$").

In addition the function $y_-$ in \eqref{4.6} satisfies the integral equation
$y_-=\A(y_-)$
where
\begin{equation}
\A(f)(t)=\int_{-\infty}^{t}\int_{-\infty}^\sigma\Big(F(z_-(v_-,x_-,\tau)+f(\tau))-F^l(z_-(v_-,x_-,\tau))\Big)d\tau d\sigma\label{4.1}
\end{equation}
for $t\in \R$ and for $f\in C(\R,\R^n)$, $\sup_{(-\infty,0]}|f|<\infty$.
Under conditions \eqref{1.4a} and \eqref{1.4b} we have $\A(f)\in C^2(\R,\R^n)$ for $f\in C(\R,\R^n)$ so that $\sup_{(-\infty,0]}|f|<\infty$. 
We study the map $\A$ defined by \eqref{4.1} on the metric space $M_r$ defined by \eqref{2.3}.
Set 
\begin{equation}
\tilde k(v_-,x_-,f)=v_-+\int_{-\infty}^{+\infty}F\big(z_-(v_-,x_-,\tau)+f(\tau)\big)d\tau,\label{5.2a}
\end{equation}
for $f\in M_r$.
For the rest of the section we set $\beta_2=\max(\beta_2^l,\beta_2^s)$.

The following Lemma \ref{lem_cont2} is the analog of Lemma \ref{lem_cont}.

\begin{lemma}
\label{lem_cont2}
Let $(v_-,x_-)\in \R^n\times\R^n$, $v_-\cdot x_-=0$, $|v_-|\ge \mu(|x_-|)$, and let $r>0$, $r<\max({|v_-|\over 2^{3\over 2}},1+{|x_-|\over \sqrt{2}})$.
Then the following estimates are valid:
\begin{eqnarray}
\|\A(f)\|&\le&\tilde \rho(n,\alpha,\beta_2,|x_-|,|v_-|,r)\label{lb1}\\
&=&{2\beta_2n^{1\over 2}(n^{1\over 2}r+1)\over({|v-|\over 2^{3\over 2}}-r)(1-r+{|x_-|\over \sqrt{2}})^\alpha}\Big({1\over (\alpha+1)(1-r+{|x_-|\over \sqrt{2}})}+{2\over \alpha( {|v_-|\over 2^{3\over 2}}-r)}\Big),\nonumber
\end{eqnarray}
\begin{equation}
\|\A(f_1)-\A(f_2)\|\le\lambda(n,\alpha,\beta_2,\beta_3^s,|x_-|,|v_-|,r)\|f_1-f_2\|,\label{lb2}
\end{equation}
and
\begin{equation}
|\tilde k(v_-,x_-,f)-v_-|\le {2n^{1\over 2}\over\big({|v_-|\over 2 \sqrt{2}}-r\big)(1+{|x_-|\over \sqrt{2}}-r)^\alpha}\Big({\beta_1^l\over \alpha}+{\beta_2\over (\alpha+1)(1+{|x_-|\over \sqrt{2}}-r)}\Big),\label{5.9}
\end{equation}
for $(f,f_1,f_2)\in M_r^3$, where $\lambda$ is defined in \eqref{l2}.
\end{lemma}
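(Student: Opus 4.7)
The plan is to mirror the proof of Lemma \ref{lem_cont}, with the key novelty that the base trajectory $z_-(v_-,x_-,\tau)$ has the offset $x_-$ built in rather than appearing as an additive term. First, I would invoke Lemma \ref{lem_scatinit} with $(w,x,v,h)=(v_-,x_-,v_-,0)$, whose hypotheses \eqref{5.100}--\eqref{5.101} hold since $v_-\cdot x_-=0$ and $|v_-|\ge\mu(|x_-|)$ by definition \eqref{4.7a}. This yields
\begin{equation*}
|z_-(v_-,x_-,\tau)-x_--\tau v_-|\le \frac{2^{5/2}n^{1/2}\beta_1^l}{\alpha|v_-|(1+|x_-|/\sqrt{2})^\alpha}|\tau|.
\end{equation*}
Combining with the Pythagorean bound $|x_-+\tau v_-|\ge (|x_-|+|\tau||v_-|)/\sqrt{2}$ (using $v_-\cdot x_-=0$) in the manner of \eqref{505a}, I obtain
\begin{equation*}
|z_-(v_-,x_-,\tau)|\ge \frac{|x_-|}{\sqrt{2}}+\frac{|v_-|}{2^{3/2}}|\tau|,\quad \tau\in\R.
\end{equation*}
For $f\in M_r$ one has $|f(\tau)|\le r(1+|\tau|)$ for all $\tau$, so the triangle inequality yields the key pointwise lower bound
\begin{equation*}
1+|z_-(v_-,x_-,\tau)+f(\tau)|\ge 1-r+\frac{|x_-|}{\sqrt{2}}+\Big(\frac{|v_-|}{2^{3/2}}-r\Big)|\tau|,
\end{equation*}
which replaces the analogous lower bound in the proof of Lemma \ref{lem_cont} and, crucially, retains the $|x_-|/\sqrt{2}$ contribution at the base level $\tau=0$.

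Next I would decompose the integrand of \eqref{4.1} as
\begin{equation*}
\bigl[F^l(z_-(v_-,x_-,\tau)+f(\tau))-F^l(z_-(v_-,x_-,\tau))\bigr] + F^s(z_-(v_-,x_-,\tau)+f(\tau)),
\end{equation*}
apply the mean value theorem to the first bracket together with the $|j|=2$ bound in \eqref{1.4a}, and use the direct $|j|=1$ bound \eqref{1.4b} for the second. With $|f(\tau)|\le r(1+|\tau|)$ and the preceding pointwise lower bound, the integrand is controlled by a constant of order $\beta_2 n^{1/2}(n^{1/2}r+1)$ times $(1+|\tau|)\bigl[1-r+|x_-|/\sqrt{2}+(|v_-|/2^{3/2}-r)|\tau|\bigr]^{-(\alpha+2)}$, which accounts for the numerator in \eqref{lb1}. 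The double time integration then proceeds as in the original proof: the inner $\tau$-integral (using $|\tau|\le(|v_-|/2^{3/2}-r)^{-1}(c+k|\tau|)$ when needed to absorb the linear growth) lowers the exponent by one and contributes a factor $((\alpha+1)(|v_-|/2^{3/2}-r))^{-1}$; the outer $\sigma$-integral, handled separately for $t\le 0$ and, after division by $1+t$, for $t\ge 0$, yields respectively the $2/(\alpha(|v_-|/2^{3/2}-r))$ and $1/((\alpha+1)(1-r+|x_-|/\sqrt{2}))$ summands in the parenthesized factor of \eqref{lb1}.

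The contraction estimate \eqref{lb2} follows by applying the mean value theorem to both $F^l$ and $F^s$ in the difference $F(z_-(v_-,x_-,\tau)+f_1(\tau))-F(z_-(v_-,x_-,\tau)+f_2(\tau))$ and repeating the double integration with the same pointwise lower bound; the resulting constant coincides with the $\lambda$ of \eqref{l2} without modification, since that expression already incorporates the $(1-r+|x_-|/\sqrt{2})$ factors anticipated by the geometry. Finally, \eqref{5.9} is a direct estimate of $|\int_{\R} F(z_-(v_-,x_-,\tau)+f(\tau))d\tau|$: the $F^l$ contribution decays as $(1+\cdot)^{-(\alpha+1)}$ and integrates to the $\beta_1^l/\alpha$ summand, the $F^s$ contribution decays as $(1+\cdot)^{-(\alpha+2)}$ and integrates to the $\beta_2/((\alpha+1)(1-r+|x_-|/\sqrt{2}))$ summand, and the factor $2$ in the numerator reflects integration over all of $\R$ rather than a half-line.

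I expect the main obstacle to be the precise tracking of constants needed to obtain the sharp exponent $(1-r+|x_-|/\sqrt{2})^\alpha$ in the denominator of \eqref{lb1}, as opposed to the weaker $(1-r)^\alpha$ appearing in \eqref{l1}; this relies on the $|x_-|$ offset being uniformly carried by the base trajectory $z_-(v_-,x_-,\tau)$ throughout the integration. This sharpening is precisely what will later permit the modified scattering map to admit controllable asymptotics as $|x_-|\to+\infty$, which motivated the change of parametrization discussed in Section \ref{com}.
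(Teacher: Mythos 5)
Your proposal follows essentially the same route as the paper: the whole content of the lemma is the improved lower bound $|z_-(v_-,x_-,\tau)+f(\tau)|\ge {|x_-|\over\sqrt 2}-r+|\tau|({|v_-|\over 2\sqrt 2}-r)$ (the paper's \eqref{4.5b}, obtained exactly as you describe from \eqref{5.103b} with $h=0$ and $|v_-|\ge\mu(|x_-|)$), after which one reruns the integrations of Lemma \ref{lem_cont} verbatim; your observations that $\lambda$ is unchanged because \eqref{2.20a}--\eqref{2.20b} already carry the $(1+{|x_-|\over\sqrt 2}-r)$ denominators, and that \eqref{5.9} is a single integration over all of $\R$ of the pointwise bounds on $F^l$ and $F^s$, both match the paper.

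One step, as you wrote it, would fail: you propose to control the numerator by $|f(\tau)|\le r(1+|\tau|)$ uniformly in $\tau$ and then ``lower the exponent by one'' in the inner integral. For the part of the norm over $t\le 0$ this is not admissible when $\alpha\le 1$: absorbing the factor $|\tau|$ into the bracket turns the integrand into a multiple of $\big(1-r+{|x_-|\over\sqrt2}+({|v_-|\over 2^{3/2}}-r)|\tau|\big)^{-(\alpha+1)}$, whose inner integral decays only like $(\,\cdot\,)^{-\alpha}$, and the outer integral $\int_{-\infty}^t(\,\cdot\,)^{-\alpha}\,d\sigma$ then diverges. The paper (see \eqref{2.10}--\eqref{2.11} and their analogues \eqref{4.10}--\eqref{4.11}) instead uses the constant bound $|f(\tau)|\le\sup_{(-\infty,0]}|f|\le r$ for $\tau\le 0$, reserving the linear bound $|f(\tau)|\le(1+\tau)\sup_{s\ge0}|f(s)|/(1+s)$ for $\tau\ge 0$, where the extra factor of $t$ is harmless because the norm divides by $1+t$ there. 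The same remark applies to $|f_1-f_2|(\tau)$ in the contraction estimate. With that correction your argument coincides with the paper's.
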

Proof of Lemma \ref{lem_cont2} is given in Section \ref{proof_cont}.

Let $(v_-,x_-)\in \R^n\times\R^n$, $v_-\cdot x_-=0$, and let $r\in(0,\max\big({1\over 2}+{|x_-|\over 2^{3\over 2}},2^{-{3\over 2}}|v_-|\big))$. Assume that
\begin{equation}
{20n\max(\beta_1^l,\beta_2)\over \alpha \big({|v_-|\over 2\sqrt{2}}-r\big)^2\big({1\over 2}+{|x_-|\over 2^{3\over 2}}-r\big)^\alpha}\le 1.\label{5.9a}
\end{equation}
Then using Lemma \ref{lem_scatinit} and \eqref{5.9} one can consider the free solution $z_+(\tilde k(v_-,x_-,$ $f),x_-,.)$ (for "$(w,v,x,h)=(\tilde k(v_-,x_-,f),v_-,x_-,0)$") for $f\in M_r$.
In addition,
with appropriate changes in the proof of Lemma \ref{lem_decomp} one can prove that when $y_-\in M_r$ is a fixed point of the operator $\A$ then $z_-(v_-,x_-,.)+y_-(.)$ is a scattering solution of \eqref{1.1} (in the sense given in Section \ref{com}), and one can prove that the following decomposition is valid
\begin{equation}
z_-(v_-,x_-,t)+y_-(t)
=z_+(\tilde a(v_-,x_-),x_-+h,t)+\big(\G_{v_-,x_-}(h)-h\big)+H(v_-,x_-,y_-,h)(t),\label{7.2}
\end{equation}
for $t\ge 0$, where
\begin{eqnarray}
\tilde a(v_-,x_-)&=&v_-+\int_{-\infty}^{+\infty}F\big(z_-(v_-,x_-,\tau)+y_-(\tau)\big)d\tau.\label{5.2ba}\\
\G_{v_-,x_-}(h)&=&\tilde l(v_-,x_-,y_-)-\int_0^{+\infty}\int_\sigma^{+\infty}\left(F^l\big(z_-(v_-,x_-,\tau)+y_-(\tau)\big)\right.\nonumber\\
&&-\left.F^l\big(z_+(\tilde a(v_-,x_-),x_-+h,\tau)\big)\right)d\tau d\sigma,\label{7.4}\\
\tilde l(v_-,x_-,y_-)&=&\int_{-\infty}^0\int_{-\infty}^\sigma \left(F\big(z_-(v_-,x_-,\tau)+y_-(\tau)\big)-F^l\big(z_-(v_-,x_-,\tau)\big)\right)d\tau d\sigma\nonumber\\
&&-\int_0^{+\infty}\int_\sigma^{+\infty}F^s\big(z_-(v_-,x_-,\tau)+y_-(\tau)\big)d\tau d\sigma\label{5.2b}
\end{eqnarray}
and
\begin{eqnarray}
H(v_-,x_-,y_-,h)(t)&=&\int_t^{+\infty}\int_\sigma^{+\infty}\big(F\big(z_-(v_-,x_-,\tau)+y_-(\tau)\big)\nonumber\\
&&-F^l\big(z_+(\tilde a(v_-,x_-),x_-+h,\tau)\big)\big)d\tau d\sigma,\label{7.3}
\end{eqnarray}
for $t\ge 0$ and for $|h|\le {1\over 2}+{|x_-|\over 2\sqrt{2}}$.
We need the following Lemma.

\begin{lemma}
\label{lem_cont3}
Let $(v_-,x_-)\in \R^n\times\R^n$, $v_-\cdot x_-=0$ and let $r>0$, $r<{1\over 2}+{|x_-|\over 2^{3\over 2}}$. Under conditions \eqref{1.4a}, \eqref{1.4b} and \eqref{5.9a}, when $y_-\in M_r$ is a fixed point of the operator $\A$ then
 the following estimates are valid:
\begin{eqnarray}
|\G_{v_-,x_-}(h)|&\le&{\beta_2(6(nr+\sqrt{n})+n(1+{|x_-|\over\sqrt{2}}))\over 2\alpha(\alpha+1)\big({|v_-|\over 2\sqrt{2}}-r\big)^2
\big({1\over 2}+{|x_-|\over 2\sqrt{2}}-r\big)^\alpha}\label{7.8a}\\
&\le&{1\over 4}+{|x_-|\over 10\sqrt{2}},\label{7.8b}
\end{eqnarray}
\begin{equation}
|\G_{v_-,x_-}(h)-\G_{v_-,x_-}(h')|\le {16n\beta_2^l|h-h'|\over \alpha (\alpha+1)|v_-|^2\big({1\over 2}+{|x_-|\over 2\sqrt{2}}\big)^\alpha}\le {|h-h'|\over 10},\label{7.9}
\end{equation}
for $(h,h')\in \R^n\times\R^n$, $|h'|\le |h|\le {1\over 4}+{|x_-|\over 2^{5\over 2}}$.
\end{lemma}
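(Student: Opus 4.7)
The strategy is to exploit the structure of the definition \eqref{7.4}: the term $\tilde l(v_-,x_-,y_-)$ is independent of $h$, while the remaining double integral depends on $h$ only through the initial position in $z_+(\tilde a(v_-,x_-),x_-+h,\cdot)$. Consequently \eqref{7.8a} will require bounding both pieces of $\G_{v_-,x_-}(h)$ in absolute value, while \eqref{7.9} reduces to controlling the difference of the second piece at $h$ and $h'$.

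For \eqref{7.8a} I would first bound $|\tilde l(v_-,x_-,y_-)|$ by the same method used in the proof of \eqref{l1} of Lemma \ref{lem_cont}. Since $y_-\in M_r$, the asymptotic estimate \eqref{5.103b} together with $v_-\cdot x_-=0$ and the smallness hypothesis \eqref{5.9a} yields a uniform lower bound of the form
\[
1+|z_-(v_-,x_-,\tau)+y_-(\tau)| \ \ge\ \bigl(\tfrac{1}{2}+\tfrac{|x_-|}{2\sqrt 2}-r\bigr) + \bigl(\tfrac{|v_-|}{2\sqrt 2}-r\bigr)|\tau|,
\]
and inserting this into \eqref{5.2b} via \eqref{1.4a} and \eqref{1.4b} produces a power-law integrand that integrates in closed form. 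For the remaining double integral in \eqref{7.4}, I apply the mean value inequality with $|\nabla F^l|\le n\beta_2^l(1+|\cdot|)^{-\alpha-2}$ from \eqref{1.4a}; the distance $|z_-(v_-,x_-,\tau)+y_-(\tau) - z_+(\tilde a,x_-+h,\tau)|$ is in turn controlled by the triangle inequality together with \eqref{5.103b} applied to both $z_-$ and $z_+$ and with \eqref{5.9} to bound $|\tilde a - v_-|$. Summing the two contributions gives \eqref{7.8a}; inequality \eqref{7.8b} then follows by inserting \eqref{5.9a} into \eqref{7.8a} and using the standing hypothesis $r \le \tfrac{1}{4}+\tfrac{|x_-|}{2^{5/2}}$ to simplify the prefactor.

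For \eqref{7.9}, only the last term of \eqref{7.4} depends on $h$, so
\[
\G_{v_-,x_-}(h)-\G_{v_-,x_-}(h') = \int_0^{+\infty}\!\!\int_\sigma^{+\infty} \bigl(F^l(z_+(\tilde a,x_-+h,\tau)) - F^l(z_+(\tilde a,x_-+h',\tau))\bigr)\,d\tau\,d\sigma.
\]
Integrating $\ddot z_+ = F^l(z_+)$ twice using $\dot z_+(\tilde a,x_-+h,t)\to \tilde a$ as $t\to +\infty$ yields the integral equation
\[
z_+(\tilde a,x_-+h,t)-z_+(\tilde a,x_-+h',t) = h-h' - \int_0^t\!\!\int_\sigma^{+\infty}\bigl(F^l(z_+(\tilde a,x_-+h,\tau))-F^l(z_+(\tilde a,x_-+h',\tau))\bigr)\,d\tau\,d\sigma,
\]
and one further application of the mean value inequality combined with the smallness hypothesis \eqref{5.9a} produces a contraction-type argument that gives $|z_+(\tilde a,x_-+h,\tau) - z_+(\tilde a,x_-+h',\tau)|\le C|h-h'|$ with $C$ close to $1$. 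Substituting this into the formula for $\G(h)-\G(h')$ together with the $\nabla F^l$ bound and the lower bound on $|z_+(\tilde a,x_-+h,\tau)|$ from \eqref{5.103b} yields the first inequality of \eqref{7.9}; the second then follows by inserting \eqref{5.9a} into the constant.

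The main obstacle is precisely the quantitative Lipschitz comparison $|z_+(\tilde a,x_-+h,\tau)-z_+(\tilde a,x_-+h',\tau)|\le C|h-h'|$: continuous dependence on the initial position is standard, but obtaining a constant $C$ close to $1$ (as needed to produce the factor $1/10$ in the second inequality of \eqref{7.9}) depends essentially on exploiting \eqref{5.9a} through the fixed-point structure of the integral equation rather than through a naive Gronwall estimate.
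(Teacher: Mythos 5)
Your treatment of \eqref{7.9} is essentially the paper's: only the last double integral in \eqref{7.4} depends on $h$, and the Lipschitz comparison $|z_+(\tilde a,x_-+h,t)-z_+(\tilde a,x_-+h',t)|\le 2|h-h'|$ is obtained exactly as you describe, by writing the integral equation for the difference (using that $\sup_{(0,+\infty)}$ of the difference is finite, via Lemma \ref{lem_comp}) and closing a self-improving estimate with \eqref{5.101}; this is the content of Lemma \ref{lem_scatinit2}, estimates \eqref{5.103c} and \eqref{5.103e}.

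For \eqref{7.8a}, however, there is a genuine gap. You propose to control $|z_-(v_-,x_-,\tau)+y_-(\tau)-z_+(\tilde a,x_-+h,\tau)|$ by the triangle inequality together with \eqref{5.103b} applied to $z_-$ and $z_+$ and with \eqref{5.9} for $|\tilde a-v_-|$. All three ingredients produce terms of size $O(|\tau|)$ (namely $|z_\pm(w,\cdot,\tau)-\cdot-\tau w|\le C|\tau|$, $|y_-(\tau)|\le r(1+\tau)$, and $\tau|\tilde a_{sc}|$), so this route only gives a bound on the distance that grows linearly in $\tau$. Inserting a linearly growing numerator against the decay $(1+|\cdot|)^{-\alpha-2}$ of $\nabla F^l$, with $|\cdot|\gtrsim c\tau$, yields an integrand of order $\tau^{-\alpha-1}$, whose inner integral is of order $\sigma^{-\alpha}$; the outer integral $\int_0^{+\infty}\sigma^{-\alpha}\,d\sigma$ then diverges for every $\alpha\in(0,1]$, which is precisely the range assumed in the paper. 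So your estimate does not even establish convergence of the double integral, let alone the bound \eqref{7.8a}. What is needed — and what the paper proves as Lemma \ref{lem:est}, estimate \eqref{5.5} — is a bound on $\delta(t)=z_-(v_-,x_-,t)+y_-(t)-z_+(\tilde a(v_-,x_-),x_-,t)$ that is \emph{uniform in $t$}. This is obtained by the same mechanism you correctly identified for \eqref{7.9}: since the two trajectories have the same asymptotic velocity $\tilde a(v_-,x_-)$, Lemma \ref{lem_comp} gives $\sup_{(0,+\infty)}|\delta|<\infty$, one writes $\delta(t)=\tilde l+H_1(t)-\int_0^t\int_\sigma^{+\infty}(F^l(z_-+y_-)-F^l(z_+))\,d\tau\,d\sigma$, and the smallness hypothesis \eqref{5.9a} lets one absorb the $\sup|\delta|$ on the right into the left. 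Combined with \eqref{5.103c} this gives $|z_-+y_--z_+(\tilde a,x_-+h,\tau)|\le C+2|h|$ uniformly, and only then does the double integral close to give \eqref{7.7} and hence \eqref{7.8a}. Your outline omits this bootstrap where it is actually indispensable.
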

Lemma \ref{lem_cont3} is proved in Section \ref{proof_thm_y2}.

Under the assumptions of Lemma \ref{lem_cont3} the map $\G_{v_-,x_-}$ is a ${1\over 10}$-contraction map from $\overline \B(0,{1\over 4}+{|x_-|\over 2^{5\over 2}})$ to $\overline \B(0,{1\over 4}+{|x_-|\over 2^{5\over 2}})$. We denote by $\tilde b_{sc}(v_-,x_-)$ its unique fixed point in $\overline \B(0,{1\over 4}+{|x_-|\over 2^{5\over 2}})$, and we set $\tilde b(v_-,x_-):=x_-+\tilde b_{sc}(v_-,x_-)$ and $\tilde a_{sc}(v_-,x_-):= \tilde a(v_-,x_-)-v_-$. The decomposition \eqref{7.2} becomes
\begin{eqnarray}
&&z_-(v_-,x_-,t)+y_-(t)
=z_+(\tilde a(v_-,x_-),\tilde b(v_-,x_-),t)+y_+(t),\label{7.2a}\\
&&y_+(t)=H(v_-,x_-,y_-,\tilde b_{sc}(v_-,x_-))(t),\label{7.2b}
\end{eqnarray}
for $t\ge 0$. The map $(\tilde a_{sc}, \tilde b_{sc})$ are our modified scattering data. The inverse scattering problem for equation \eqref{1.1} can now be formulated as follows 
\begin{equation}
\textrm{Given }(\tilde a_{sc},\tilde b_{sc})\textrm{ and }F^l,\textrm{ find }F^s.\label{P2}
\end{equation}

\subsection{Estimates and asymptotics of the modified scattering data}
Let $(v_-,x_-)\in \R^n\times\R^n$, $v_-\cdot x_-=0$ and let $r>0$, $r<{1\over 2}+{|x_-|\over 2^{3\over 2}}$ so that condition \eqref{5.9a} is fullfilled. Then set
\begin{equation}
\tilde W(v_-,x_-):=\int_{-\infty}^0F^l(z_-(v_-,x_-,\tau))d\tau+\int_0^{+\infty}F^l(z_+(\tilde a(v_-,x_-),x_-,\tau))d\tau.\label{t5}
\end{equation}
Note that $\tilde W$ is known from the modified scattering data and from $F^l$.
We obtain the following analog of Theorem \ref{thm_y}.

\begin{theorem}
\label{thm_y2}
Under the assumptions of Lemma \ref{lem_cont3} and under conditions \eqref{1.4a}, \eqref{1.4b} and \eqref{5.9a}, the following estimates are valid:
\begin{eqnarray}
|\dot y_-(t)|&\le &{\beta_2\big(nr+\sqrt{n}\big)\over (\alpha+1)\big({|v_-|\over 2\sqrt{2}}-r\big)\Big(1+{|x_-|\over \sqrt{2}}-r+|t|\big({|v_-|\over 2\sqrt{2}}-r\big)\Big)^{\alpha+1}} ,\label{7.2c}\\
|y_-(t)|&\le &{\beta_2\big(nr+\sqrt{n}\big)\over \alpha(\alpha+1)\big({|v_-|\over 2\sqrt{2}}-r\big)^2\Big(1+{|x_-|\over \sqrt{2}}-r+|t|\big({|v_-|\over 2\sqrt{2}}-r\big)\Big)^{\alpha}},
\label{7.2d}
\end{eqnarray}
for $t\le0$;
and
\begin{eqnarray}
|\tilde a_{sc}(v_-,x_-)|&\le &{6\sqrt{n}\max(\beta_1^l,\beta_2)\over \alpha \big({|v_-|\over 2\sqrt{2}}-r\big)\big(1+{|x_-|\over \sqrt{2}}-r\big)^\alpha},\label{7.14b}\\
|\tilde b_{sc}(v_-,x_-)|&\le&{4\beta_2(nr+\sqrt{n})\over \alpha(\alpha+1)\big({|v_-|\over 2\sqrt{2}}-r\big)^2
\big({1\over 2}+{|x_-|\over 2\sqrt{2}}-r\big)^\alpha},\label{7.14}
\end{eqnarray}
\begin{equation}
|y_+(t)|\le {2\beta_2\sqrt{n}\over\alpha(\alpha+1)({|v_-|\over 2^{3\over 2}}-r)^2({1\over2}+{|x_-|\over 2\sqrt{2}}-r+t({|v_-|\over 2^{3\over 2}}-r))^{\alpha}},\label{7.16b}
\end{equation}
for $t\ge 0$.
In addition
\begin{eqnarray}
&&|\tilde a_{sc}(v_-,x_-)- \tilde W(v_-,x_-)-\int_{-\infty}^{+\infty}F^s(z_-(v_-,x_-,\tau))d\tau|\nonumber\\
&\le&{4\max(\beta_2,\beta_3^s)^2n(nr+\sqrt{n})\over \alpha(\alpha+1)({|v_-|\over 2^{3\over 2}}-r)^2(1-r+{|x_-|\over \sqrt{2}})^{2\alpha+1}}\big(3+{2\over {|v_-|\over 2^{3\over 2}}-r}\big)^2,\label{7.14c}
\end{eqnarray}
\begin{equation}
|\tilde b_{sc}(v_-,x_-)-\tilde l(v_-,x_-,0)|
\le{10n(nr+\sqrt{n})\max(\beta_2,\beta_3^s)^2\big( 3+{1\over{|v_-|\over 2\sqrt{2}}-r}\big)^2\over\alpha^2(\alpha+1) \big({|v_-|\over 2\sqrt{2}}-r\big)^3\big(1+{|x_-|\over \sqrt{2}}-r\big)^{2\alpha}}.\label{6.44}
\end{equation}

\end{theorem}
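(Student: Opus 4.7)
The plan is to follow the pattern of the proof of Theorem \ref{thm_y}, exploiting the fact that the ``free'' trajectory $z_-(v_-,x_-,\cdot)$ now already incorporates the initial displacement $x_-$. First I would establish \eqref{7.2c} and \eqref{7.2d} from the fixed-point identity $y_-=\A(y_-)$. Differentiating \eqref{4.1} once gives
\begin{equation*}
\dot y_-(t)=\int_{-\infty}^{t}\big(F(z_-(v_-,x_-,\tau)+y_-(\tau))-F^l(z_-(v_-,x_-,\tau))\big)d\tau
\end{equation*}
for $t\le 0$. Splitting $F=F^l+F^s$, the $F^s$-integrand is bounded directly by \eqref{1.4b}, while for the long-range difference $F^l(z_-+y_-)-F^l(z_-)$ the mean value theorem and \eqref{1.4a} give a gain of a factor $|y_-(\tau)|$ at the price of one extra power of decay. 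The key input is the lower bound
\begin{equation*}
|z_-(v_-,x_-,\tau)+\eta y_-(\tau)|\ge 1+\tfrac{|x_-|}{\sqrt 2}-r+|\tau|\big(\tfrac{|v_-|}{2\sqrt 2}-r\big),
\end{equation*}
which follows from \eqref{5.103b} and $y_-\in M_r$ (mimicking \eqref{505a}). Integrating the resulting bound yields \eqref{7.2c}, and a second integration over $(-\infty,t]$ gives \eqref{7.2d}. The estimate \eqref{7.16b} on $y_+(t)$ is obtained identically starting from \eqref{7.2b} and \eqref{7.3}, now integrating forward on $[t,+\infty)$.

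Next I would obtain the bounds on $\tilde a_{sc}$ and $\tilde b_{sc}$. The bound \eqref{7.14b} is essentially \eqref{5.9} of Lemma \ref{lem_cont2} applied to $f=y_-$, after combining the two parenthesized terms using $\max(\beta_1^l,\beta_2)$. For \eqref{7.14} I would use the fixed-point identity $\tilde b_{sc}=\G_{v_-,x_-}(\tilde b_{sc})$ together with the contraction estimate \eqref{7.9}, so that
\begin{equation*}
|\tilde b_{sc}|\le |\G_{v_-,x_-}(0)|+\tfrac{1}{10}|\tilde b_{sc}|,
\end{equation*}
hence $|\tilde b_{sc}|\le (10/9)|\G_{v_-,x_-}(0)|$. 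It then remains to bound $|\G_{v_-,x_-}(0)|$ more sharply than \eqref{7.8a}, by exploiting the cancellation between the two double integrals in \eqref{5.2b} and the long-range correction in \eqref{7.4} (both of difference type) via the mean-value theorem; this produces the cleaner prefactor $nr+\sqrt n$, independent of $|x_-|$.

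Finally, the asymptotic estimates \eqref{7.14c} and \eqref{6.44} are the main obstacle and the reason the theorem is stated. For \eqref{7.14c} I would write
\begin{equation*}
\tilde a_{sc}-\tilde W-\int_{-\infty}^{+\infty}\!\!F^s(z_-(v_-,x_-,\tau))\,d\tau=\int_{-\infty}^{+\infty}\!\!\big(F(z_-+y_-)-F(z_-)\big)d\tau+R,
\end{equation*}
where $R$ measures the discrepancy between the incoming piece $\int_{-\infty}^{0}F^l(z_-(v_-,x_-,\tau))d\tau$ and the outgoing piece $\int_{0}^{+\infty}F^l(z_+(\tilde a(v_-,x_-),x_-,\tau))d\tau$ in $\tilde W$ and the corresponding long-range contributions to $\tilde a_{sc}-v_-$. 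The first integral is controlled by the mean value theorem and the pointwise bound \eqref{7.2d} on $|y_-|$; the term $R$ is controlled by comparing the two free trajectories through the dependence of $z_+$ on its velocity argument, where \eqref{7.14b} supplies the smallness of $\tilde a_{sc}$. A parallel strategy handles \eqref{6.44} starting from $\tilde b_{sc}=\G_{v_-,x_-}(\tilde b_{sc})$ and the decomposition
\begin{equation*}
\tilde b_{sc}-\tilde l(v_-,x_-,0)=\big(\tilde l(v_-,x_-,y_-)-\tilde l(v_-,x_-,0)\big)+\big(\G_{v_-,x_-}(\tilde b_{sc})-\tilde l(v_-,x_-,y_-)\big),
\end{equation*}
where the second piece reduces to a difference of long-range forces evaluated along $z_-+y_-$ and $z_+(\tilde a,x_-+\tilde b_{sc},\cdot)$. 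The main difficulty is bookkeeping: keeping track of the numerical constants through these nested mean-value estimates, and choosing the splittings so as to fully exploit the cancellations built into $\tilde l$ and $\tilde W$, so that the final right-hand sides carry only $(nr+\sqrt n)$-type prefactors rather than the weaker $|x_-|$-dependent prefactor of \eqref{7.8a}.
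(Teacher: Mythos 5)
Your treatment of the backward-time estimates \eqref{7.2c}--\eqref{7.2d} and of \eqref{7.14b} matches the paper's: these follow from $y_-=\A(y_-)$ together with the bounds \eqref{4.10}, \eqref{4.11} and \eqref{5.9} already established in the proof of Lemma \ref{lem_cont2}. Your fixed-point/contraction route to \eqref{7.14} is an acceptable variant of the paper's use of \eqref{7.7} with $h=\tilde b_{sc}(v_-,x_-)$. The gap lies in the forward-time estimates. All of \eqref{7.16b}, \eqref{7.14c}, \eqref{6.44} (and already the sharp bound on $\G_{v_-,x_-}(0)$ that your route to \eqref{7.14} requires) rest on one ingredient you never produce: a bound, \emph{uniform in $t\ge0$} and of size $O\big(\beta_2(nr+\sqrt{n})({|v_-|\over 2\sqrt{2}}-r)^{-2}(1+{|x_-|\over\sqrt{2}}-r)^{-\alpha}\big)$, on the distance between the perturbed trajectory $z_-(v_-,x_-,t)+y_-(t)$ and the outgoing free trajectory $z_+(\tilde a(v_-,x_-),x_-,t)$. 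The paper isolates this as Lemma \ref{lem:est}: one writes the integral equation \eqref{5.6} for the difference $\delta$, observes that the long-range contribution is again controlled by $\sup|\delta|$, and closes the resulting self-referential inequality \eqref{5.17} using the smallness hypothesis \eqref{5.9a}. The same bootstrap is unavoidable for $y_+$ itself, since the integrand defining $y_+$ contains $F^l(z_-+y_-)-F^l(z_+(\tilde a,\tilde b,\cdot))$, which is of size $|y_+|$; so $y_+$ is \emph{not} "obtained identically" to $y_-$ by integrating forward --- for $y_-$ the a priori bound $|y_-(\tau)|\le r(1+|\tau|)$ is part of the hypothesis $y_-\in M_r$, whereas no a priori bound on $y_+$ is available.

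Your proposed handling of the remainder $R=\int_0^{+\infty}\big(F^l(z_-(v_-,x_-,\tau))-F^l(z_+(\tilde a(v_-,x_-),x_-,\tau))\big)d\tau$ in \eqref{7.14c} --- comparing the two \emph{free} trajectories through the velocity dependence of $z_+$, with \eqref{7.14b} supplying the smallness of $\tilde a_{sc}$ --- is quantitatively insufficient as stated. The difference of these two free trajectories grows linearly in $\tau$, and the rate available from \eqref{5.103b} is $|\tilde a_{sc}|$ plus deflection terms of order $n^{1/2}\beta_1^l\alpha^{-1}|v_-|^{-1}(1+{|x_-|\over\sqrt{2}})^{-\alpha}$; feeding $|z_--z_+|\lesssim(\cdot)\,\tau$ into the force difference produces a bound carrying $(1-r+{|x_-|\over\sqrt{2}})^{-2\alpha}$, one full power short of the $(1-r+{|x_-|\over\sqrt{2}})^{-2\alpha-1}$ claimed in \eqref{7.14c} --- and that extra power is precisely what makes the theorem useful in the regime $|x_-|\to+\infty$. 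The paper avoids this by grouping the long-range terms on $[0,+\infty)$ as $F^l(z_-+y_-)-F^l(z_+(\tilde a,x_-,\cdot))$ and invoking the uniform, non-growing bound \eqref{5.5}, which gains the factor $(1+{|x_-|\over\sqrt{2}}-r)^{-\alpha-1}$ upon integration (see \eqref{5.20}). To repair your argument you should either adopt that grouping together with Lemma \ref{lem:est}, or prove a genuinely sharper comparison of the free trajectories (tracking the outgoing asymptotic velocity of $z_-(v_-,x_-,\cdot)$ and the constant offsets of the asymptotes), which is substantially more work than the sentence you devote to it.
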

Theorem \ref{thm_y2} is proved in Section \ref{proof_thm_y2}.

For $(\sigma,\tilde \beta,r,\tilde \alpha)\in (0,+\infty)^3\times(0,1]$, $r<{1\over2}+{\sigma\over 2^{3\over 2}}$, and let $\tilde s_0=\tilde s_0(\sigma,r,\tilde \beta,\tilde \alpha)$ be defined as the root of the equation
\begin{equation}
1={12\tilde \beta n\over \tilde \alpha r({\tilde s_0\over 2^{3\over 2}}-r)({1\over 2}-r+{\sigma\over 2^{3\over 2}})^{\tilde \alpha}}\big(1+{1\over {\tilde s_0\over 2^{3\over 2}}-r}\big)^2,\ \tilde s_0>2^{3\over 2}r.\label{1.30b}
\end{equation}
Then the high energies asymptotics of the modified scattering data $(\tilde a_{sc},\tilde b_{sc})$ are given in the following Theorem \ref{thm2}.
\begin{theorem}
\label{thm2}
Let $(\theta,x)\in T\S^{n-1}$. Under conditions \eqref{1.4a} and \eqref{1.4b} the following limits are valid:
\begin{eqnarray}
&&\lim_{s\to +\infty}s(\tilde a_{sc}(s\theta,x)-\tilde W(s\theta,x))=PF^s(\theta,x),\label{t6}\\
&&\lim_{s\to +\infty}s^2\theta\cdot \tilde b_{sc}(s\theta,x)=-PV^s(\theta,x),\label{t7}
\end{eqnarray}
In addition, 
\begin{equation}
\Big|\tilde a_{sc}(s\theta,x)-\tilde W(s\theta,x)-\int_{-\infty}^{+\infty}F^s(\tau s\theta+x)d\tau\Big|
\le{12n^2\beta^2\big(3+{2\over 2^{-{3\over 2}}s-r}\big)^2\over \alpha(\alpha+1)(2^{-{3\over 2}}s-r)^2(1-r+{|x|\over \sqrt{2}})^{2\alpha+1}},\label{t8}
\end{equation}
\begin{eqnarray}
&&\big|\tilde b_{sc}(s\theta,x)-\int_{-\infty}^0\int_{-\infty}^\sigma F^s(\tau s\theta+x)d\tau d\sigma+\int_0^{+\infty}\int_\sigma^{+\infty} F^s(\tau s\theta+x)d\tau d\sigma\big|\nonumber\\
&&\le {24n^2\beta^2\big(3+{1\over 2^{-{3\over 2}}s-r}\big)^2
\over \alpha^2(\alpha+1)(2^{-{3\over 2}}s-r)^3(1-r+{|x|\over \sqrt{2}})^{2\alpha}},\label{t9}
\end{eqnarray}
for $r\in (0,{1\over2}+{|x|\over 2^{3\over 2}})$ and for $s>\tilde s_0(|x|,r,\beta,\alpha)$, where $\beta=\max(\beta_1^l,\beta_2,\beta_3^s)$.
\end{theorem}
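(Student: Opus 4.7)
The plan is to mirror closely the proof of Theorem \ref{thm}, using Theorem \ref{thm_y2}, Lemmas \ref{lem_cont2} and \ref{lem_cont3}, together with the elementary estimate \eqref{5.103b} on $z_-(v_-,x_-,\tau)-\tau v_--x_-$. Fix $(\theta,x)\in T\S^{n-1}$ and put $v_-=s\theta$, $x_-=x$, so $v_-\cdot x_-=0$. The definition of $\tilde s_0$ in \eqref{1.30b} is tailored so that for $s>\tilde s_0(|x|,r,\beta,\alpha)$ the smallness condition \eqref{5.9a} and the contraction hypotheses of Lemmas \ref{lem_cont2} and \ref{lem_cont3} are satisfied (analogously to the bookkeeping step leading to \eqref{903} in the proof of Theorem \ref{thm}). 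This guarantees the existence of the fixed point $y_-\in M_r$ of $\A$ and of the fixed point $\tilde b_{sc}(v_-,x_-)$ of $\G_{v_-,x_-}$, so that $\tilde a_{sc}$, $\tilde b_{sc}$ are well-defined on the range of parameters considered.

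With existence settled, Theorem \ref{thm_y2} directly provides bounds \eqref{7.14c} and \eqref{6.44} on the ``remainders'' $\tilde a_{sc}-\tilde W-\int F^s(z_-(v_-,x_-,\tau))d\tau$ and $\tilde b_{sc}-\tilde l(v_-,x_-,0)$, already of the correct order $O(s^{-2})$ and $O(s^{-3})$. It then remains to replace, in these two remainders, the integrals along $z_-(v_-,x_-,\tau)$ (and $z_-(v_-,x_-,\tau)+x_-$, which appears implicitly inside $\tilde l$) by the corresponding integrals along the straight line $\tau v_-+x_-$. This is the direct analog of the computation \eqref{900a}--\eqref{902} in the proof of Theorem \ref{thm}, except that here $x_-\neq 0$ and the base trajectory already contains a non-trivial $x_-$ term.

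The geometric input for this replacement is a lower bound on any convex combination of the two trajectories. Using $v_-\cdot x_-=0$ one has $|\tau v_-+x_-|\ge (|\tau||v_-|+|x_-|)/\sqrt{2}$, and combining this with \eqref{5.103b} for $h=0$ gives, exactly as in \eqref{505a},
\[
|x_-+\eta\tau v_-+(1-\eta)z_-(v_-,x_-,\tau)|\ge \tfrac{|x_-|}{\sqrt{2}}+|\tau|\tfrac{|v_-|}{2^{3/2}},\quad \eta\in(0,1).
\]
Applying the mean value theorem to $F^s$ and $F^l$, together with \eqref{1.4a}, \eqref{1.4b} and \eqref{5.103b}, yields pointwise estimates of the type
\[
|F^s(z_-(v_-,x_-,\tau))-F^s(\tau v_-+x_-)|\le \frac{C|\tau|}{|v_-|(1+|x_-|/\sqrt{2})^\alpha\bigl(\tfrac{|x_-|}{\sqrt{2}}+|\tau|\tfrac{|v_-|}{2^{3/2}}\bigr)^{\alpha+3}},
\]
and analogously for $F^l$ with one less power in the denominator. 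Integrating in $\tau$ (resp.\ in $\tau,\sigma$) produces contributions of order $|v_-|^{-3}(1+|x_-|/\sqrt{2})^{-2\alpha-1}$ (resp.\ $|v_-|^{-3}(1+|x_-|/\sqrt{2})^{-2\alpha}$), which are of the same size as those already present in \eqref{7.14c} and \eqref{6.44}. Summing these four contributions gives \eqref{t8} and \eqref{t9}.

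The limits \eqref{t6} and \eqref{t7} then follow by letting $s\to+\infty$ in \eqref{t8} and \eqref{t9}: the right-hand sides are $O(s^{-2})$ and $O(s^{-3})$, so vanish after multiplication by $s$ and $s^2$ respectively, while the change of variable $\tau\mapsto\tau/s$ turns $s\int_{-\infty}^{+\infty}F^s(\tau s\theta+x)d\tau$ into $PF^s(\theta,x)$, and integration by parts, using $\theta\cdot F^s=-\partial_\tau V^s(\tau\theta+x)$, turns $s^2\theta\cdot\bigl(\int_{-\infty}^0\int_{-\infty}^\sigma F^s(\tau s\theta+x)d\tau d\sigma-\int_0^{+\infty}\int_\sigma^{+\infty}F^s(\tau s\theta+x)d\tau d\sigma\bigr)$ into $-PV^s(\theta,x)$ (the trick being that swapping the order of integration produces $-\int_{-\infty}^{+\infty}\tau\, F^s(\tau\theta+x)d\tau$, after which one integrates by parts). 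I expect the main technical obstacle to be purely book-keeping: tracking all the constants in the mean value step carefully enough that the precise coefficients and the powers of $1-r+|x|/\sqrt{2}$ and $2^{-3/2}s-r$ in the right-hand sides of \eqref{t8} and \eqref{t9} come out exactly as stated; the conceptual content is already contained in Theorem \ref{thm_y2} and Lemma \ref{lem_scatinit}.
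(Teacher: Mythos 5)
Your proposal is correct and follows essentially the same route as the paper: verify via the definition of $\tilde s_0$ that the contraction hypotheses and condition \eqref{5.9a} hold, invoke \eqref{7.14c} and \eqref{6.44} from Theorem \ref{thm_y2}, and then swap the curved trajectory $z_-(v_-,x_-,\tau)$ for the straight line $\tau v_-+x_-$ via the lower bound \eqref{1000} and the mean value theorem (the paper's estimates \eqref{5.4}--\eqref{6.5b}), before passing to the limit by rescaling and integration by parts. The only minor inaccuracy is your parenthetical claim that $z_-(v_-,x_-,\tau)+x_-$ appears inside $\tilde l$ — it does not, since $z_-(v_-,x_-,\cdot)$ already starts at $x_-$ and $\tilde l(v_-,x_-,0)$ reduces to double integrals of $F^s(z_-(v_-,x_-,\tau))$ alone — but this does not affect the argument.
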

Formulas \eqref{t6} and \eqref{t7} prove that $F^s$ can be reconstructed from the high energies asymptotics of the modified scattering data. For Problem \eqref{P2} this implies that $F^s$ can be reconstructed  from $\tilde a_{sc}$ and $\tilde b_{sc}$.

Estimates  \eqref{t8} and \eqref{t9} also provide the Born approximation of the modified scattering data at fixed energy, i.e. the leading term of the asymptotics of 
$\tilde a_{sc},$ $\tilde b_{sc}$, when the parameters $\alpha,$ $n,$ $s$, $\theta$ and $x$ are fixed and
$\beta$ decreases to $0$. We obtain that $F^s$ can be reconstructed from the Born approximation of the modified scattering data at fixed energy.

Estimates \eqref{t8} and \eqref{t9} also provide the first leading term in the asymptotics of the modified scattering data when the parameters $\alpha,$ $n,$ $s$, $\theta$ and $\beta$ are fixed and $|x|$ increases to $+\infty$.

\begin{proof}[Proof of Theorem \ref{thm2}]
Let  $(v_-,x_-)\in \R^n\times\R^n$, $v_-\cdot x_-=0$ and $|v_-|\ge \mu(|x_-|)$. Similarly to estimate \eqref{505a} we have 
\begin{equation}
|\eta(x_-+ \tau v_-)+(1-\eta)z_-(v_-,x_-,\tau)|
\ge {|x_-|\over \sqrt{2}}+|\tau| {|v_-|\over 2^{3\over 2}},\label{1000}
\end{equation}
for $\eta\in (0,1)$ and $\tau \in \R$.
Using \eqref{1000}, \eqref{1.4b} and \eqref{5.103b}, we obtain 
\begin{eqnarray}
&&\left|\int_{-\infty}^{+\infty}\big(F^s\big(z_-(v_-,x_-,\tau)\big)-F^s\big(\tau v_-+x_-\big)\big)d\tau\right|\nonumber\\
&\le & {2^{5\over 2}n^{3\over 2}\beta_1^l\beta_3^s\over \alpha|v_-|(1+{|x_-|\over \sqrt{2}})^\alpha}
\int_{-\infty}^{+\infty}{|\tau|\over \big(1+{|x_-|\over \sqrt{2}}+|\tau|{|v_-|\over 2\sqrt{2}}\big)^{\alpha+3}}d\tau\nonumber\\
&\le& {4n^{3\over 2}\beta_1^l\beta_3^s\over \alpha(\alpha+1)({|v_-|\over 2\sqrt{2}})^3(1+{|x_-|\over \sqrt{2}})^{2\alpha+1}},\label{5.4}
\end{eqnarray}
and
\begin{eqnarray}
&&\int_{-\infty}^0\int_{-\infty}^\sigma\left|F^s\big(z_-(v_-,x_-,\tau)\big)-F^s\big(\tau v_-+x_-\big)\right|d\tau d\sigma\nonumber\\
&\le&{2^{5\over 2}n^{3\over 2}\beta_1^l\beta_3^s\over \alpha|v_-|(1+{|x_-|\over \sqrt{2}})^\alpha}
\int_{-\infty}^0\int_{-\infty}^{\sigma}{|\tau|\over  \big(1+{|x_-|\over \sqrt{2}}+|\tau|{|v_-|\over 2\sqrt{2}} \big)^{\alpha+3}}
\label{6.5}
\end{eqnarray}
The same estimate \eqref{6.5} holds for $\int_0^{+\infty}\int_{\sigma}^{+\infty}\left|F^s\big(z_-(v_-,x_-,\tau)\big)-F^s\big(\tau v_-+x_-\big)\right|d\tau d\sigma$.
Therefore from the definition \eqref{5.2b} it follows that
\begin{eqnarray}
&&\big|\tilde l(v_-,x_-,0)-\int_{-\infty}^0\int_{-\infty}^\sigma F^s(\tau v_-+x_-)d\tau+\int_0^{+\infty}\int_\sigma^{+\infty}F^s(\tau v_-+x_-)d\tau d\sigma\big| d\sigma\nonumber\\
&&\le {4n^{3\over 2}\beta_1^l\beta_3^s\over \alpha^2(\alpha+1)\big({|v_-|\over 2\sqrt{2}}\big)^4 (1+{|x_-|\over \sqrt{2}})^{2\alpha}}.\label{6.5b}
\end{eqnarray}
Let $r>0$, $r<\max({|v_-|\over 2^{3\over 2}},{1\over 2}+{|x_-|\over \sqrt{2}})$. Note that
\begin{equation}
\max\big({\tilde \rho\over r}, \lambda,{20n\max(\beta_1^l,\beta_2)\over \alpha \big({|v_-|\over 2\sqrt{2}}-r\big)^2\big({1\over 2}+{|x_-|\over 2^{3\over 2}}-r\big)^\alpha}\big)\le
{12 \beta n\big(1+{1\over {|v_-|\over 2^{3\over 2}}-r}\big)^2\over \alpha r({|v_-|\over 2^{3\over 2}}-r)({1\over 2}-r+{|x_-|\over 2^{3\over 2}})^\alpha},\label{903b}
\end{equation}
where $\tilde \rho$ and $ \lambda$ are defined in \eqref{lb1} and \eqref{l2} respectively.
Assume that $|v_-|> \tilde s_0(|x_-|,r,\beta,\alpha)$ where $\tilde s_0$ is the root of the equation \eqref{1.30b}.
Then from \eqref{1.30b} and Lemma \ref{lem_cont2} it follows that $\A$ has a unique fixed point denoted by $y_-$ in $M_r$. Then adding \eqref{7.14c} and \eqref{5.4} we obtain \eqref{t8}. 
And adding \eqref{6.5b} and \eqref{6.44} we obtain \eqref{t9}. Theorem \ref{thm2} is proved.
\end{proof}

\subsection{Approximating the ``free" solutions $z_\pm(v,x,.)$}
\label{com2}
One may approximate the solutions $z_\pm(v,x,.)$ by the functions $z_{\pm, N+1}(v,x,.)$ defined in Section \ref{com} that are easier to compute in general. Then we can repeat the study of the previous Subsections. 
For $(v_-,x_-)\in \R^n\times\R^n$, $v_-\cdot x_-=0$, $|v_-|\ge \mu(|x_-|)$, there exists a unique solution $x(t)$ of equation \eqref{1.1} so that 
\begin{equation}
x(t)=z_{-,N+1}(v_-,x_-,t)+y_-(t),\ t\in \R\textrm{ and } \lim_{t\to-\infty}(|y_-(t)|+|\dot y_-(t)|)=0.\label{pr20}
\end{equation}
The function $y_-$ in \eqref{pr20} satisfies the following integral equation
$y_-=\A_N(y_-)$
where
\begin{equation}
\A_N(f)(t)=\int_{-\infty}^t\int_{-\infty}^\sigma\Big(F(z_{-,N+1}(v_-,x_-,\tau)+f(\tau))-F^l(z_{-,N}(v_-,x_-,\tau))\Big)d\tau d\sigma\label{pr21}
\end{equation}
for $t\in \R$ and for $f\in C(\R,\R^n)$, $\sup_{(-\infty,0]}|f|<\infty$. Then with appropriate changes in the proof of Lemma \ref{lem_cont2} we can study the operator $\A_N$ restricted to $M_r$ and we can obtain estimate and contraction estimate similar to \eqref{lb1} and \eqref{lb2}.  The decomposition \eqref{7.2} remains valid with the following changes: first we define $\tilde a_N$ by the formula \eqref{5.2ba} where we replaced $z_-(v_-,x_-,\tau)$ by $z_{-,N+1}(v_-,x_-,\tau)$; then we replace $\tilde a$ and $z_-(v_-,x_-,\tau)+y_-(\tau)$ by $\tilde a_N$ and  $z_{-,N+1}(v_-,x_-,\tau)+y_-(\tau)$ in \eqref{7.2}, \eqref{7.4}--\eqref{7.3}, and we replace $z_+$ and $F^l\big(z_-(v_-,x_-,\tau)\big)$ by $z_{+,N+1}$ and $F^l\big(z_{-,N}(v_-,x_-,\tau)\big)$ in \eqref{7.2}
and \eqref{5.2b}, and we replace $z_+$ by $z_{+,N}$ in \eqref{7.4}, \eqref{7.3}. This defines a new map $\G_{v_-,x_-}$ and  the analog of Lemma \ref{lem_cont3} can be proved. Such a result then allows to define the scattering data $\tilde b_N$. Then an analog of the  Theorem \ref{thm_y2} can be proved for the scattering solutions and scattering data $(\tilde a_N, \tilde b_N)$. Set $\tilde a_{sc,N}(v_-,x_-):=\tilde a_N(v_-,x_-)-v_-$ and $\tilde b_{sc,N}(v_-,x_-):=\tilde b_N(v_-,x_-)-x_-$. Finally the following high energies limits are valid.
Let $(\theta,x)\in T\S^{n-1}$, then 
\begin{eqnarray}
&&\lim_{s\to +\infty}s(\tilde a_{sc,N}(s\theta,x)-\tilde W_N(s\theta,x))=PF^s(\theta,x),\label{t6N}\\
&&\lim_{s\to +\infty}s^2\theta\cdot \tilde b_{sc,N}(s\theta,x)=-PV^s(\theta,x),\label{t7N}
\end{eqnarray}
where
\begin{equation}
\tilde W_N(v,x):=\int_{-\infty}^0F^l(z_{-,N}(v,x,\tau))d\tau+\int_0^{+\infty}F^l(z_{+,N}(\tilde a_N(v,x),x,\tau))d\tau.\label{t5N}
\end{equation}
for $(v,x)\in \R^n\times\R^n$, $v\cdot x=0$, $|v|>C$ for some constant $C$.
The vector $\tilde W_N$ defined by \eqref{t5N} is known from the scattering data and from $F^l$.
From \eqref{t6N} (resp. \eqref{t7N}) it follows that $F^s$ can be  reconstructed from  $\tilde a_{sc,N}$ (resp. $\tilde b_{sc,N}$). 

The limits  \eqref{t6N} and \eqref{t7N} follow from estimates similar to \eqref{t8} and \eqref{t9} that also
give the Born approximation of   $\tilde a_{sc,N},$ $\tilde b_{sc,N}$ at fixed energy, and the first leading term of the asymptotics of the scattering data
$(\tilde a_{sc,N},\tilde b_{sc,N})$ when the parameters $\alpha,$ $n,$ $v_-$ and $\beta$ are fixed and $|x_-|\to +\infty$.

\section{Proof of Lemmas \ref{lem_scatinit} and \ref{lem_decomp}}
\label{proofprel}
\begin{proof}[Proof of Lemma \ref{lem_scatinit}]
We prove the existence and uniqueness of the solution $z_+$ (similarly one can prove the existence and uniqueness of $z_-$ or just use the relation ``$z_-(w,x+h,t)=z_+(-w,x+h,-t$").
Set $C'={2^{5\over 2}n^{1\over 2}\beta_1^l\over \alpha|v|(1+{|x|\over \sqrt{2}}-|h|)^\alpha}$. Let $\V$ be the complete metric space defined by 
$$
\V:=\{g\in C(\R,\R^n)\ |\ |g(t)|\le C'|t|\textrm{ for }t\in \R\},
$$
endowed with the following norm $\|g\|_{\V}:=\sup_{t\in \R\b\{0\}}\big|{g(t)\over t}\big|$.
We consider the integral equation
\begin{equation}
G_+f(t)=-\int_0^t \int_\sigma^{+\infty} F^l(x+h+\tau w+f(\tau))d\tau d\sigma,\ t\in \R,\label{5.104}
\end{equation}
for $f\in \V$. 
First note that
\begin{eqnarray}
&&|x+h+\tau w+f(\tau)|\ge |x+\tau v|-|h|-|\tau| |v-w|-|f(\tau)|\nonumber\\
&\ge&{|x|\over \sqrt{2}}-|h|+\big({3|v|\over 4\sqrt{2}}-C'\big)|\tau|
\ge{|x|\over \sqrt{2}}-|h|+{|v|\over 2\sqrt{2}}|\tau|,\label{5.105}
\end{eqnarray}
for $\tau\in \R$ and $f\in \V$ (we used that $x\cdot v=0$ and that $C'\le {|v| \over 4\sqrt{2}}$).
Using \eqref{1.4a} we obtain that
\begin{equation}
|G_+f(t)|\le \sqrt{n}\beta_1^l\int_{-|t|}^0\int_{\sigma}^{+\infty}\big(1+{|x|\over \sqrt{2}}-|h|+{|v|\over 2\sqrt{2}}|\tau|\big)^{-\alpha-1}d\tau d\sigma
\le C'|t|,\label{5.106a}
\end{equation}
for $t\in \R$ and $f\in \V$. 
Now let $(f_1,f_2)\in \V^2$. Then using \eqref{1.4a} we have 
\begin{eqnarray*}
&&|F^l(x+h+\tau w+f_1(\tau))-F^l(x+h+\tau w+f_2(\tau))|\le n\beta_2^l|f_1-f_2|(\tau)\\
&&\times\sup_{\ep\in(0,1)}(1+|x+h+\tau w+(\ep f_1+(1-\ep) f_2)(\tau)|)^{-\alpha-2},
\end{eqnarray*}
for $\tau\in \R$. Hence using also \eqref{5.105} we have
\begin{eqnarray}
|G_+f_1(t)-G_+f_2(t)|&\le& n\beta_2^l\|f_1-f_2\|_{\V} \int_{-|t|}^0\int_\sigma^{+\infty} {|\tau| d\tau d\sigma\over \big(1+{|x|\over \sqrt{2}}-|h|+{|v|\over 2\sqrt{2}}|\tau|\big)^{\alpha+2}}\nonumber\\
&\le&{2\sqrt{2}n\beta_2^l\|f_1-f_2\|_{\V}\over |v|} \int_{-|t|}^0\int_\sigma^{+\infty} {d\tau d\sigma\over \big(1+{|x|\over \sqrt{2}}-|h|+{|v|\over 2\sqrt{2}}|\tau|\big)^{\alpha+1}}\nonumber\\
&\le&{16n\beta_2^l|t|\|f_1-f_2\|_{\V}\over \alpha|v|^2(1+{|x|\over \sqrt{2}}-|h|)^\alpha}\le 2^{-1}\|f_1-f_2\|_{\V},\label{5.106b}
\end{eqnarray}
for $t\in \R$ (we used \eqref{5.101}). From \eqref{5.106a} and \eqref{5.106b} it follows that the operator $G_+$ is a contraction map from $\V$ to $\V$. Set $z_+(w,x+h,t)=x+h+tw+f_{w,x+h}(t)$ for $t\in \R$, where $f_{w,x+h}$ denotes the unique fixed point of $G_+$ in $\V$. Then $z_+(w,x+h,.)$ satisfies \eqref{5.102}, \eqref{5.103a} and \eqref{5.103b}.
\end{proof}

Before proving Lemma \ref{lem_decomp} we recall the following standard result (see also \cite[Lemma II.2]{He}). For sake of consistency we provide a proof of Lemma \ref{lem_comp} at the end of this Section.
\begin{lemma} 
\label{lem_comp}
Let $x(t)$ be a solution of equation \eqref{1.1} and let $z(t)$ be a solution of equation \eqref{5.102}. Assume that there exists a vector $v\in \R^n$, $v\not=0$, so that
\begin{equation}
\lim_{t\to +\infty}\dot z(t)=\lim_{t\to +\infty}\dot x(t)=v.\label{l3}
\end{equation}
Then
\begin{equation}
\sup_{(0,+\infty)}|x-z|<\infty.\label{l4}
\end{equation}
\end{lemma}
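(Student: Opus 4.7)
The plan is to set $w := x - z$, derive a quantitative pointwise bound on $\ddot w$, and then use $\dot w(t)\to 0$ to run a bootstrap that reduces the polynomial growth rate of $w$ step by step until it is bounded.

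First I would exploit $\dot x(t),\dot z(t)\to v\neq 0$ to fix $T_0>0$ such that $|x(t)|,|z(t)|\ge \tfrac{|v|}{2}(1+t)$ for $t\ge T_0$. Since $\dot w(t)\to 0$, an elementary argument (for any $\ep>0$ one has $|\dot w(s)|<\ep$ for $s$ large, so $|w(t)|/t\to 0$) yields $|w(t)|=o(t)$, which guarantees that the segment joining $x(t)$ and $z(t)$ satisfies $|(1-\theta)z(t)+\theta x(t)|\ge \tfrac{|v|}{4}(1+t)$ for every $\theta\in[0,1]$, once $t$ is large. I would then write $\ddot w=F^l(x)-F^l(z)+F^s(x)$, apply the mean value theorem to $F^l$ with $|\nabla F^l(y)|\le n\beta_2^l(1+|y|)^{-\alpha-2}$ (from \eqref{1.4a} with $|j|=2$), and use $|F^s(y)|\le \beta_2^s(1+|y|)^{-\alpha-2}$ (from \eqref{1.4b} with $|j|=1$), to obtain
\begin{equation*}
|\ddot w(t)|\le C_1(1+t)^{-\alpha-2}|w(t)|+C_2(1+t)^{-\alpha-2},\qquad t\ge T_0.
\end{equation*}

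The central step is the bootstrap. Since $\dot w(t)\to 0$, one has $\dot w(t)=-\int_t^{+\infty}\ddot w(s)\,ds$. Suppose inductively that $|w(t)|\le K(1+t)^\gamma$ on $[T_0,+\infty)$ for some $\gamma\ge 0$; this holds initially with $\gamma_0=1$ (from $\dot w\in L^\infty$). Substituting into the pointwise bound gives $|\dot w(t)|\le C(1+t)^{\gamma-\alpha-1}$. Integrating from $T_0$ to $t$, either $\gamma<\alpha$ and the right hand side is integrable at infinity, so $w$ is bounded on $[T_0,+\infty)$; or $\gamma\ge \alpha$ and one recovers the same hypothesis with the strictly smaller exponent $\gamma-\alpha$. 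Since $\alpha>0$ and $\gamma_0=1$, after at most $\lceil 1/\alpha\rceil$ iterations the exponent falls below $\alpha$, so $w$ is bounded on $[T_0,+\infty)$. Continuity of $w$ on the compact interval $[0,T_0]$ then yields $\sup_{(0,+\infty)}|x-z|<+\infty$.

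The main obstacle I anticipate is the borderline case $\gamma=\alpha$, where integrating $(1+s)^{-1}$ produces a $\log(1+t)$ rather than a decay. I would deal with it by absorbing any logarithm produced along the way into an arbitrarily small polynomial factor $(1+t)^\ep$ with $\ep\ll \alpha$ and then running one additional iteration; this is legitimate because the induction only asks for the strict inequality $\gamma<\alpha$. The one ancillary point requiring care is the segment estimate used in Step 1: the bound on $\nabla F^l$ must be applicable on the whole segment from $x(t)$ to $z(t)$, which is precisely ensured by $|w(t)|=o(t)$ and therefore available from the outset of the bootstrap.
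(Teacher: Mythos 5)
Your proposal is correct and follows essentially the same route as the paper: both set up the difference $x-z$, use the linear growth of $|x|,|z|$ along the trajectory together with the gradient bound on $F^l$ and the decay of $F^s$, and then run a bootstrap that lowers the growth exponent of $|x-z|$ by $\alpha$ at each of roughly $\lceil 1/\alpha\rceil$ iterations until boundedness is reached. The only cosmetic difference is in handling the borderline exponents: the paper simply replaces $\alpha$ by a slightly smaller $\alpha'$ with $\alpha'\neq 1/m$ for all $m$, whereas you absorb the logarithm into a small polynomial factor; both devices work.
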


\begin{proof}[Proof of Lemma \ref{lem_decomp}]
We need the following preliminary estimate \eqref{2.5b}. Using \eqref{2.3} we have
for $\tau\in \R$ and  $f\in M_r$, 
\begin{equation}
|f(\tau)|\le r|\tau|+r.\label{2.5a}
\end{equation}
Hence
\begin{eqnarray}
|z_-(v_-,\tau)+x_-+f(\tau)|&\ge& |x_-+\tau v_-|- |z_-(v_-,\tau)-\tau v_-|-|f(\tau)|\nonumber\\
&\ge&{|x_-|\over \sqrt{2}}-r+|\tau|\Big({|v_-|\over \sqrt{2}}-{2^{5\over 2}n^{1\over 2}\beta_1^l\over |v_-|\alpha}-r\Big)\nonumber\\
&\ge & {|x_-|\over \sqrt{2}}-r+|\tau|\Big({|v_-|\over 2\sqrt{2}}-r\Big).\label{2.5b}
\end{eqnarray}
We used  \eqref{5.103b} (with "$(x,w,v,h)=(0,v_-,v_-,0)$"), the inequality $|x_-+\tau v_-|\ge {|x_-|\over \sqrt{2}}+|\tau|{|v_-|\over \sqrt{2}}$ ($x_-\cdot v_-=0$) 
and \eqref{2.5a} and we used the condition
$|v_-|\ge \mu$.
Hence the integral $\int_{-\infty}^{+\infty}F(z_-(v_-,\tau)+x_-+f(\tau))d\tau$ is absolutely convergent for any $f\in M_r$. And when $y_-\in M_r$ is a fixed point for $A$ then
$z_-(v_-,.)+x_-+y_-$ satisfies equation \eqref{1.1} (see \eqref{2.1}) and 
$\dot z_-(v_-,t)+x_-+\dot y_-(t)=v_-+\int_{-\infty}^tF(z_-(v_-,\tau)+x_-+\dot y_-(\tau))d\tau\to a(v_-,x_-)\textrm{ as }t\to +\infty,$
where $a(v_-,x_-)$ is defined in \eqref{300a}. Then from Lemma \ref{lem_comp} it follows that $\sup_{t\in(0,+\infty)}|z_-(v_-,t)+x+y_-(t)-z_+(a(v_-,x_-),t)|<+\infty$. Using this latter estimate and \eqref{1.4a} and \eqref{1.4b}, and using \eqref{2.5b} we obtain that the integrals on the right hand sides of \eqref{300c} and \eqref{502c} are absolutely convergent. 
Then the decomposition \eqref{300} follows from the equality $A(y_-)=y_-$ and \eqref{2.1} and straightforward computations. 
\end{proof}

\begin{proof}[Proof of Lemma \ref{lem_comp}]
We set $\delta(t)=x(t)-z(t)$ for $t\ge 0$. Property \eqref{l3} shows that there exists $\ep>0$ so that
\begin{equation}
1+|\eta x(t)+(1-\eta)z(t)|\ge \ep(1+t),\textrm{ for }t\ge 0 \textrm{ and }\eta\in [0,1].\label{402}
\end{equation}
Then from equations \eqref{1.1} and \eqref{5.102} it follows that
\begin{equation}
\delta(t)
=\delta(0)-\int_0^t\int_\sigma^{+\infty}F^s(x(\tau))d\tau d\sigma-\int_0^t\int_\sigma^{+\infty}(F^l(x(\tau))-F^l(z(\tau)))d\tau d\sigma,\label{401}
\end{equation}
for $t\ge 0$, where the integrals on the right hand side of \eqref{401} are absolutely convergent (see \eqref{402} and \eqref{1.4a} and \eqref{1.4b}).
Note that
\begin{equation}
\int_0^t\int_\sigma^{+\infty}|F^s(x(\tau))|d\tau d\sigma\le \sqrt{n}\beta_2\ep^{-\alpha-2}\int_0^t\int_\sigma^{+\infty}{d\tau d\sigma\over (1+\tau)^{\alpha+2}}
\le{\sqrt{n}\beta_2\over \alpha(\alpha+1)\ep^{\alpha+2}},\label{403}
\end{equation}
for $t\ge 0$. Hence using \eqref{401} we obtain
\begin{equation}
|\delta(t)|
\le C_0+\int_0^t\int_\sigma^{+\infty}|F^l(x(\tau))-F^l(z(\tau))|d\tau d\sigma,\textrm{ for }t\ge0,\label{404}
\end{equation}
where $C_0=|\delta(0)|+{\sqrt{n}\beta_2\over \alpha(\alpha+1)\ep^{\alpha+2}}$.

One may assume without loss of generality that $\alpha\not={1\over m}$ for any $m\in \N$. Otherwise replace $\alpha$ by some $\alpha'\in (0,\alpha)$ so that $\alpha'
\not={1\over m'}$ for any $m'\in \N$.
Then
\begin{eqnarray}
\int_0^t\int_\sigma^{+\infty}\Big(|F^l(x(\tau))|+|F^l(z(\tau)|\Big)d\tau d\sigma
&\le& 2\sqrt{n}\beta_1^l\ep^{-\alpha-1}\int_0^t\int_\sigma^{+\infty}{d\tau d\sigma\over (1+\tau)^{\alpha+1}}\nonumber\\
&\le& {2\sqrt{n}\beta_1^l\over \alpha(1-\alpha)\ep^{\alpha+1}}(1+t)^{1-\alpha},
\label{406}
\end{eqnarray}
for $t\ge 0$. 
Using also \eqref{404}  we obtain that there exist positive constants $C_1$ and $C_1'$ so that $|\delta(t)|
\le C_1+C_1't^{1-\alpha}$ for $t\ge 0$.
Now using \eqref{404}, the growth properties of $F^l$ \eqref{1.4a} and \eqref{402} we obtain
\begin{eqnarray}
|\delta(t)|
&\le& C_0+\int_0^t\int_\sigma^{+\infty}\sup_{\eta\in (0,1)}{n\beta_2|x(\tau)-z(\tau)|\over (1+|\eta x(\tau)+(1-\eta)z(\tau)|)^{\alpha+2}}d\tau d\sigma\nonumber\\
&\le&C_0+n\beta_2\ep^{-\alpha-2}\int_0^t\int_\sigma^{+\infty}{|\delta(\tau)|\over (1+\tau)^{\alpha+2}}d\tau d\sigma,\label{404b}
\end{eqnarray}
for $t\ge 0$. 
Then using  \eqref{404b} we prove by induction the following:
For any $m=1\ldots\lfloor \alpha^{-1}\rfloor$ there exist positive constants $C_m$ and $C_m'$ so that
$|\delta(t)|\le C_m+C'_mt^{1-m\alpha}$ for $t\ge 0$. Combining again this latter estimate for $m=\lfloor \alpha^{-1}\rfloor$ and  the estimate \eqref{404b} we obtain
\begin{eqnarray*}
|\delta(t)|
&\le&C_0+n\beta_2\ep^{-\alpha-2}\int_0^t\int_\sigma^{+\infty}{C_m+C'_m\tau^{1-m\alpha}\over (1+\tau)^{\alpha+2}}d\tau d\sigma\nonumber\\
&\le&C_0+{n\beta_2C_m\over\alpha(\alpha+1)\ep^{\alpha+2}}+{C'_mn\beta_2\over (m+1)\alpha((m+1)\alpha-1)\ep^{\alpha+2}},
\end{eqnarray*}
for $t\ge 0$ ($(m+1)\alpha-1>0$), which proves the lemma.
\end{proof}

\section{Proof of Lemmas \ref{lem_cont} and \ref{lem_cont2}}
\label{proof_cont}
\begin{proof}[Proof of Lemma \ref{lem_cont}]
We first prove \eqref{l1}. We need the following  estimates for $A(f)(t)$ \eqref{2.11} and \eqref{2.19}.
Using \eqref{1.4a}, \eqref{1.4b} and \eqref{2.5b} we obtain 
\begin{equation}
|F^s(z_-(v_-,\tau)+x_-+f(\tau))|\le{\beta_2 \sqrt{n}\over \big(1+{|x_-|\over \sqrt{2}}-r+|\tau|\big({|v_-|\over 2\sqrt{2}}-r\big)\big)^{\alpha+2}},\label{2.7b}
\end{equation}
and 
\begin{equation}
|F^l(z_-(v_-,\tau)+x_-+f(\tau))-F^l(z_-(v_-,\tau))|\le{\beta_2 n(|x_-|+|f(\tau)|)
\over \big(1-r+|\tau|\big({|v_-|\over 2\sqrt{2}}-r\big)\big)^{\alpha+2}},\label{2.7c}
\end{equation}
for $\tau\in \R$.
In addition from \eqref{2.1} it follows that
\begin{equation}
\dot A(f)(t)=\int_{-\infty}^t\Big(F(z_-(v_-,\tau)+x_-+f(\tau))-F^l(z_-(v_-,\tau))\Big)d\tau,\ \tau\in \R.\label{2.7d}
\end{equation}
Therefore combining \eqref{2.7b}, \eqref{2.7c} (with $|f(\tau)|\le \sup_{(-\infty,0)}|f|$ for $\tau\le 0$) and \eqref{2.7d} we obtain
\begin{eqnarray}
|\dot A(f)(t)|&\le &\beta_2\int_{-\infty}^t{\sqrt{n}+n(|x_-|+|f(\tau)|)\over\big(1-r+|\tau|\big({|v_-|\over 2\sqrt{2}}-r\big)\big)^{\alpha+2}}d\tau\nonumber\\
&\le&{\beta_2\big(n(|x_-|+\sup_{(-\infty,0)}|f|)+\sqrt{n}\big)\over (\alpha+1)\big({|v_-|\over 2\sqrt{2}}-r\big)\Big(1-r+|t|\big({|v_-|\over 2\sqrt{2}}-r\big)\Big)^{\alpha+1}},\label{2.10}
\end{eqnarray}
for $t\le 0$. Then integrating over $(-\infty,t)$ we obtain
\begin{equation}
|A(f)(t)|\le {\beta_2(n(|x_-|+\sup_{(-\infty,0)}|f|)+\sqrt{n})\over \alpha(\alpha+1)\big({|v_-|\over 2\sqrt{2}}-r\big)^2\Big(1-r+|t|\big({|v_-|\over 2\sqrt{2}}-r\big)\Big)^\alpha},\label{2.11}
\end{equation}
for $t\le 0$.

Now let $t\ge 0$.
Using \eqref{2.7d} we have 
\begin{eqnarray}
\dot A(f)(t)&=&\dot A(f)(0)+\int_0^t\big(F^l(z_-(v_-,\tau)+x_-+f(\tau))-F^l(z_-(v_-,\tau))\big)d\tau\nonumber\\
&&+\int_0^tF^s(z_-(v_-,\tau)+x_-+f(\tau))d\tau.\label{2.12}
\end{eqnarray}
Hence from \eqref{2.10} (with "$t=0$"), \eqref{2.7b} and \eqref{2.7c} (with $|f(\tau)|\le (1+\tau)\sup_{s\in (0+\infty)}{|f(s)|\over 1+s}$ for $\tau\ge 0$) it follows that
\begin{eqnarray}
&&|\dot A(f)(t)|\le{\beta_2\big(n(|x_-|+\sup_{(-\infty,0)}|f|)+\sqrt{n}\big)\over (\alpha+1)\big({|v_-|\over 2\sqrt{2}}-r\big)\big(1-r\big)^{\alpha+1}}\nonumber\\
&&+\beta_2\int_0^t{n^{1\over 2}+n|x_-|+n(1+\tau)\sup_{s\in (0,+\infty)}{|f(s)|\over 1+s}\over \Big(1-r+|\tau|\big({|v_-|\over 2\sqrt{2}}-r\big)\Big)^{\alpha+2}}d\tau\nonumber\\
&&\le {\beta_2\big(2n|x_-|+n\|f\|+2\sqrt{n}\big)\over (\alpha+1)\big({|v_-|\over 2\sqrt{2}}-r\big)\Big(1-r\Big)^{\alpha+1}}
+{n\beta_2\sup_{s\in (0,+\infty)}{|f(s)|\over 1+s}\over \alpha\big({|v_-|\over 2\sqrt{2}}-r\big)^2(1-r)^\alpha}.\label{2.15}
\end{eqnarray}

We also have 
\begin{eqnarray}
A(f)(t)&=&A(f)(0)+t\dot A(f)(t)\nonumber\\
&&-\int_0^t\int_\sigma^t\Big(F^l(z_-(v_-,\tau)+x_-+f(\tau))-F^l(z_-(v_-,\tau))\Big)d\tau d\sigma\nonumber\\
&&-\int_0^t\int_\sigma^t F^s(z_-(v_-,\tau)+x_-+f(\tau))d\tau d\sigma.\label{2.16}
\end{eqnarray}
Using \eqref{2.7b} and \eqref{2.7c} we obtain
\begin{equation}
\left|\int_0^t\int_\sigma^t F^s(z_-(v_-,\tau)+x_-+f(\tau))d\tau d\sigma\right|
\le{\beta_2\sqrt{n}\over \alpha(\alpha+1)\big({|v_-|\over 2\sqrt{2}}-r\big)^2\Big(1+{|x_-|\over \sqrt{2}}-r\Big)^{\alpha}},\label{2.17}
\end{equation}
\begin{eqnarray}
&&\left|\int_0^t\int_\sigma^t\big(F^l(z_-(v_-,\tau)+x_-+f(\tau))-F^l(z_-(v_-,\tau))\big)d\tau d\sigma\right|\nonumber\\
&\le& t\int_0^{+\infty}\big|F^l(z_-(v_-,\tau)+x_-+f(\tau))-F^l(z_-(v_-,\tau))\big|d\tau\nonumber\\
&\le&
 t\Big({n\beta_2\sup_{s\in (0,+\infty)}{|f(s)|\over 1+s}\over \alpha\big({|v_-|\over 2\sqrt{2}}-r\big)^2(1-r)^\alpha}
+ {n\beta_2\big(|x_-|+\sup_{s\in (0,+\infty)}{|f(s)|\over 1+s}\big)\over(\alpha+1)\big({|v_-|\over 2\sqrt{2}}-r\big)(1-r)^{\alpha+1}}\Big).\label{2.18}
\end{eqnarray}
Combining \eqref{2.16}, \eqref{2.11} (with "$t=0$"), \eqref{2.15}, \eqref{2.17} and \eqref{2.18} we obtain
\begin{eqnarray}
&&|A(f)(t)|\le {\beta_2(n(|x_-|+\sup_{(-\infty,0)}|f|)+2\sqrt{n})\over \alpha(\alpha+1)\big({|v_-|\over 2\sqrt{2}}-r\big)^2\Big(1-r\Big)^\alpha}\nonumber\\
&&+t\beta_2\Big({\big(3n|x_-|+2n\|f\|+2\sqrt{n}\big)\over (\alpha+1)\big({|v_-|\over 2\sqrt{2}}-r\big)\Big(1-r\Big)^{\alpha+1}}
+{2n\sup_{s\in (0,+\infty)}{|f(s)|\over 1+s}\over \alpha\big({|v_-|\over 2\sqrt{2}}-r\big)^2(1-r)^\alpha}\Big).\label{2.19}
\end{eqnarray}
Then \eqref{l1} follows from \eqref{2.11} and \eqref{2.19} and the estimate $\|f\|\le r$.

It remains to prove \eqref{l2}. Estimate \eqref{l2} will follow from \eqref{2.22} and \eqref{2.30} given below.
Let $(f_1,f_2)\in M_r^2$. 
Using \eqref{1.4a}, \eqref{1.4b} and \eqref{2.5b} we obtain 
\begin{equation}
|F^l(z_-(v_-,\tau)+x_-+f_1(\tau))-F^l(z_-(v_-,\tau)+x_-+f_2(\tau))|\le{\beta_2 n|f_1-f_2|(\tau)
\over \big(1+{|x_-|\over \sqrt{2}}-r+|\tau|\big({|v_-|\over 2\sqrt{2}}-r\big)\big)^{\alpha+2}},\label{2.20a}
\end{equation}
\begin{equation}
|F^s(z_-(v_-,\tau)+x_-+f_1(\tau))-F^s(z_-(v_-,\tau)+x_-+f_2(\tau))|\le{\beta_3^s n|f_1-f_2|(\tau)
\over \big(1+{|x_-|\over \sqrt{2}}-r+|\tau|\big({|v_-|\over 2\sqrt{2}}-r\big)\big)^{\alpha+3}},\label{2.20b}
\end{equation}
for $\tau\in \R$.
In addition from \eqref{2.7d} it follows that
\begin{eqnarray}
&&\dot A(f_1)(t)-\dot A(f_2)(t)\nonumber\\
&&=\int_{-\infty}^t\big(F^l(z_-(v_-,\tau)+x_-+f_1(\tau))-F^l(z_-(v_-,\tau)+x_-+f_2(\tau))\big)d\tau\nonumber
\end{eqnarray}
\begin{equation}
+\int_{-\infty}^t\Big(F^s(z_-(v_-,\tau)+x_-+f_1(\tau))-F^s(z_-(v_-,\tau)+x_-+f_2(\tau))\Big)d\tau.\label{2.20}
\end{equation}
Hence we integrate in the $\tau$ variable over the interval $(-\infty,t)$, $t\le 0$, both sides of \eqref{2.20a} and \eqref{2.20b} where we use the inequality $|f_1-f_2|(\tau)\le \sup_{(-\infty,0)}|f_1-f_2|$, and we obtain
\begin{eqnarray}
|\dot A(f_1)(t)-\dot A(f_2)(t)|
&\le& {n\sup_{(-\infty,0]}|f_1-f_2|\over \Big({|v_-|\over 2\sqrt{2}}-r\Big)\Big(1+{|x_-|\over \sqrt{2}}-r+|t|\Big({|v_-|\over 2\sqrt{2}}-r\Big)\Big)^{\alpha+1}}\left[ {\beta_2\over (\alpha+1)}\right.\nonumber\\
&&+\left. {\beta_3^s\over (\alpha+2)\Big(1+{|x_-|\over \sqrt{2}}-r+|t|\Big({|v_-|\over 2\sqrt{2}}-r\Big)\Big)}\right],
\label{2.21}
\end{eqnarray}
for $t\le 0$. Then we integrate in the $t$ variable both sides of \eqref{2.21}  and we obtain
\begin{eqnarray}
|A(f_1)(t)-A(f_2)(t)|
&\le& {n\sup_{(-\infty,0]}|f_1-f_2|\over(\alpha+1) \Big({|v_-|\over 2\sqrt{2}}-r\Big)^2\Big(1+{|x_-|\over \sqrt{2}}-r+|t|\Big({|v_-|\over 2\sqrt{2}}-r\Big)\Big)^\alpha}\left[ {\beta_2\over \alpha}\right.\nonumber\\
&&+\left. {\beta_3^s\over (\alpha+2)\Big(1+{|x_-|\over \sqrt{2}}-r+|t|\Big({|v_-|\over 2\sqrt{2}}-r\Big)\Big)}\right],
\label{2.22}
\end{eqnarray}
for $t\le 0$.

From \eqref{2.12} it follows that
\begin{eqnarray}
&&\dot A(f_1)(t)-\dot A(f_2)(t)=\dot A(f_1)(0)-\dot A(f_2)(0)\label{2.23}\\
&&+\int_0^t\big(F^l(z_-(v_-,\tau)+x_-+f_1(\tau))-F^l(z_-(v_-,\tau)+x_-+f_2(\tau))\big)d\tau\nonumber\\
&&+\int_0^t\Big(F^s(z_-(v_-,\tau)+x_-+f_1(\tau))-F^s(z_-(v_-,\tau)+x_-+f_2(\tau))\Big)d\tau.\nonumber
\end{eqnarray}
We have 
\begin{eqnarray}
&&\Big|\int_0^t\Big(F^s(z_-(v_-,\tau)+x_-+f_1(\tau))-F^s(z_-(v_-,\tau)+x_-+f_2(\tau))\Big)d\tau\Big| \nonumber\\
&&\le n\beta_3^s\sup_{s\in [0,+\infty)}{|(f_1-f_2)(s)|\over 1+s}\int_0^t{1+\tau\over \Big(1+{|x_-|\over \sqrt{2}}-r+|\tau|\Big({|v_-|\over 2\sqrt{2}}-r\Big)\Big)^{\alpha+3}}d\tau\nonumber
\end{eqnarray}
\begin{equation}
\le {n\beta_3^s\sup_{s\in [0,+\infty)}{|(f_1-f_2)(s)|\over 1+s}\over\Big({|v_-|\over 2\sqrt{2}}-r\Big)(1+{|x_-|\over \sqrt{2}}-r)^{\alpha+1}}\left[{1\over (\alpha+2) (1+{|x_-|\over \sqrt{2}}-r)}+{1\over (\alpha+1)\Big({|v_-|\over 2\sqrt{2}}-r\Big)}\right],\label{2.24}
\end{equation}
and 
\begin{equation*}
\Big|\int_0^t\Big(F^l(z_-(v_-,\tau)+x_-+f_1(\tau))-F^l(z_-(v_-,\tau)+x_-+f_2(\tau))\Big)d\tau\Big| 
\end{equation*}
\begin{equation}
\le {n\beta_2\sup_{s\in [0,+\infty)}{|(f_1-f_2)(s)|\over 1+s}\over\Big({|v_-|\over 2\sqrt{2}}-r\Big)(1+{|x_-|\over \sqrt{2}}-r)^{\alpha}}\left[{1\over (\alpha+1)(1+{|x_-|\over \sqrt{2}}-r)}+{1\over \alpha\Big({|v_-|\over 2\sqrt{2}}-r\Big)}\right].\label{2.25}
\end{equation}
Hence using also \eqref{2.21} (for ``$t=0$") we obtain
\begin{eqnarray}
&&|\dot A(f_1)(t)-\dot A(f_2)(t)|
\le {n\|f_1-f_2\|\over \Big({|v_-|\over 2\sqrt{2}}-r\Big)\Big(1+{|x_-|\over \sqrt{2}}-r\Big)^{\alpha+1}}\left[ {\beta_2\over (\alpha+1)}\right.\nonumber\\
&&+\left. {\beta_3^s\over (\alpha+2)\Big(1+{|x_-|\over \sqrt{2}}-r\Big)}\right]\nonumber\\
&&+ {n\sup_{s\in [0,+\infty)}{|(f_1-f_2)(s)|\over 1+s}\over \Big({|v_-|\over 2\sqrt{2}}-r\Big)^2(1+{|x_-|\over \sqrt{2}}-r)^{\alpha}}\left[{\beta_3^s\over(\alpha+1) (1+{|x_-|\over \sqrt{2}}-r)}+{\beta_2\over\alpha}\right].\label{2.26}
\end{eqnarray}
Next we consider
\begin{eqnarray}
&&A(f_1)(t)-A(f_2)(t)=A(f_1)(0)-A(f_2)(0)+t(\dot A(f_1)(t)-\dot A(f_2)(t))\label{2.27}\\
&&+\int_0^t\int_\sigma^t\Big(F^l(z_-(v_-,\tau)+x_-+f_2(\tau))-F^l(z_-(v_-,\tau)+x_-+f_1(\tau))\Big)d\tau d\sigma\nonumber\\
&&-\int_0^t\int_\sigma^t\Big(F^s(z_-(v_-,\tau)+x_-+f_1(\tau))-F^s(z_-(v_-,\tau)+x_-+f_2(\tau))\Big)d\tau d\sigma\nonumber
\end{eqnarray}
Using \eqref{2.20a} we obtain
\begin{equation}
\int_0^t\int_\sigma^t\Big|F^s(z_-(v_-,\tau)+x_-+f_1(\tau))-F^s(z_-(v_-,\tau)+x_-+f_2(\tau))\Big|d\tau d\sigma\nonumber
\end{equation}
\begin{equation}
\le {n\beta_3^s\sup_{s\in [0,+\infty)}{|(f_1-f_2)(s)|\over 1+s}\over\alpha(\alpha+1)\big({|v_-|\over 2\sqrt{2}}-r\big)^2(1+{|x_-|\over \sqrt{2}}-r)^\alpha}\left[{1\over \big({|v_-|\over 2\sqrt{2}}-r\big)}+{\alpha\over (\alpha+2)(1+{|x_-|\over \sqrt{2}}-r)}\right],\label{2.28}
\end{equation}
and 
\begin{eqnarray}
&&\int_0^t\int_\sigma^t\Big|F^l(z_-(v_-,\tau)+x_-+f_2(\tau))-F^l(z_-(v_-,\tau)+x_-+f_1(\tau))\Big|d\tau d\sigma \nonumber\\
&&\le {n\beta_2 \sup_{s\in [0,+\infty)}{|(f_1-f_2)(s)|\over 1+s}\over\alpha\big({|v_-|\over 2\sqrt{2}}-r\big)^2(1+{|x_-|\over \sqrt{2}}-r)^{\alpha}}\big({1\over (\alpha+1)}+t\big).\label{2.29}
\end{eqnarray}
Therefore using also \eqref{2.22} (with "$t=0$"), \eqref{2.26} and \eqref{2.27}, we have
\begin{eqnarray}
&&|A(f_1)(t)-A(f_2)(t)|\le{ n\|f_1-f_2\|\over(\alpha+1) \Big({|v_-|\over 2\sqrt{2}}-r\Big)^2\Big(1+{|x_-|\over \sqrt{2}}-r\Big)^\alpha}\left[ {\beta_2\over \alpha}+{\beta_3^s\over (\alpha+2)\Big(1+{|x_-|\over \sqrt{2}}-r\Big)}\right]\nonumber\\
&&+{n\beta_3^s\sup_{s\in [0,+\infty)}{|(f_1-f_2)(s)|\over 1+s}\over\alpha(\alpha+1)\big({|v_-|\over 2\sqrt{2}}-r\big)^3(1+{|x_-|\over \sqrt{2}}-r)^\alpha}\nonumber\\
&&+t{n\|f_1-f_2\|\over \Big({|v_-|\over 2\sqrt{2}}-r\Big)\Big(1+{|x_-|\over \sqrt{2}}-r\Big)^{\alpha+1}}\left[ {\beta_2\over (\alpha+1)}
+{\beta_3^s\over (\alpha+2)\Big(1+{|x_-|\over \sqrt{2}}-r\Big)}\right]\nonumber\\
&&+ t{n\sup_{s\in [0,+\infty)}{|(f_1-f_2)(s)|\over 1+s}\over \Big({|v_-|\over 2\sqrt{2}}-r\Big)^2(1+{|x_-|\over \sqrt{2}}-r)^{\alpha}}\left[{\beta_3^s\over(\alpha+1) (1+{|x_-|\over \sqrt{2}}-r)}+{2\beta_2\over\alpha}\right].\label{2.30}
\end{eqnarray}

\end{proof}

\begin{proof}[Proof of Lemma \ref{lem_cont2}]
We follow the proof of Lemma \ref{lem_cont}.

Estimate \eqref{2.5a} for $\tau\in \R$, $f\in M_r$ still holds and we have
\begin{eqnarray}
|z_-(v_-,x_-,\tau)+f(\tau)|&\ge& {|x_-|\over \sqrt{2}}-r+|\tau|\Big({|v_-|\over \sqrt{2}}-{2^{5\over 2}n^{1\over 2}\beta_1^l\over |v_-|\alpha
(1+{|x_-|\over \sqrt{2}})^\alpha}-r\Big)\nonumber\\
&\ge & {|x_-|\over \sqrt{2}}-r+|\tau|\Big({|v_-|\over 2\sqrt{2}}-r\Big).\label{4.5b}
\end{eqnarray}
We used \eqref{5.103b} (with "$(x,w,h)=(x_-,v_-,0)$") and the inequality 
$|v_-|\ge \mu(|x_-|)$.
Then using \eqref{1.4a}, \eqref{1.4b} and \eqref{4.5b} we obtain 
\begin{equation}
|F^s(z_-(v_-,x_-,\tau)+f(\tau))|\le{\beta_2 \sqrt{n}\over \big(1+{|x_-|\over \sqrt{2}}-r+|\tau|\big({|v_-|\over 2\sqrt{2}}-r\big)\big)^{\alpha+2}},\label{4.7b}
\end{equation}
and 
\begin{equation}
|F^l(z_-(v_-,x_-,\tau)+f(\tau))-F^l(z_-(v_-,x_-,\tau))|\le{\beta_2 n|f(\tau)|
\over \big(1+{|x_-|\over \sqrt{2}}-r+|\tau|\big({|v_-|\over 2\sqrt{2}}-r\big)\big)^{\alpha+2}},\label{4.7c}
\end{equation}
for $\tau\in \R$. 
Then the proof of the following estimates \eqref{4.10}, \eqref{4.11}, \eqref{4.15}, \eqref{4.19} is similar to the proof of the estimates \eqref{2.10}, \eqref{2.11}, \eqref{2.15} and \eqref{2.19} respectively, and we have 
\begin{equation}
|\dot \A(f)(t)|\le {\beta_2\big(n\sup_{(-\infty,0)}|f|+\sqrt{n}\big)\over (\alpha+1)\big({|v_-|\over 2\sqrt{2}}-r\big)\Big(1+{|x_-|\over \sqrt{2}}-r+|t|\big({|v_-|\over 2\sqrt{2}}-r\big)\Big)^{\alpha+1}},\label{4.10}
\end{equation}
for $t\le 0$,
\begin{equation}
|\A(f)(t)|\le {\beta_2(n\sup_{(-\infty,0)}|f|+\sqrt{n})\over \alpha(\alpha+1)\big({|v_-|\over 2\sqrt{2}}-r\big)^2\Big(1+{|x_-|\over \sqrt{2}}-r+|t|\big({|v_-|\over 2\sqrt{2}}-r\big)\Big)^\alpha},\label{4.11}
\end{equation}
for $t\le 0$, 
\begin{equation}
|\dot \A(f)(t)|\le{\beta_2\big(n\|f\|+2\sqrt{n}\big)\over (\alpha+1)\big({|v_-|\over 2\sqrt{2}}-r\big)\Big(1+{|x_-|\over \sqrt{2}}-r\Big)^{\alpha+1}}
+{n\beta_2\sup_{s\in (0,+\infty)}{|f(s)|\over 1+s}\over \alpha\big({|v_-|\over 2\sqrt{2}}-r\big)^2(1+{|x_-|\over \sqrt{2}}-r)^\alpha},\label{4.15}
\end{equation}
for $t\ge 0$, and
\begin{eqnarray}
&&|\A(f)(t)|\le {\beta_2(n(\sup_{(-\infty,0)}|f|)+2\sqrt{n})\over \alpha(\alpha+1)\big({|v_-|\over 2\sqrt{2}}-r\big)^2\Big(1+{|x_-|\over \sqrt{2}}-r\Big)^\alpha}\label{4.19}\\
&&+t\beta_2\Big({\big(2n\|f\|+2\sqrt{n}\big)\over (\alpha+1)\big({|v_-|\over 2\sqrt{2}}-r\big)\Big(1+{|x_-|\over \sqrt{2}}-r\Big)^{\alpha+1}}
+{2n\sup_{s\in (0,+\infty)}{|f(s)|\over 1+s}\over \alpha\big({|v_-|\over 2\sqrt{2}}-r\big)^2(1+{|x_-|\over \sqrt{2}}-r)^\alpha}\Big),\nonumber
\end{eqnarray}
for $t\ge 0$.
Estimate \eqref{lb1} follows from \eqref{4.11} and \eqref{4.19}.

Let $(f_1,f_2)\in M_r^2$. 
Using \eqref{1.4a}, \eqref{1.4b} and \eqref{4.5b} we obtain 
\begin{equation}
|F^l(z_-(v_-,x_-,\tau)+f_1(\tau))-F^l(z_-(v_-,x_-,\tau)+f_2(\tau))|\le{\beta_2 n|f_1-f_2|(\tau)
\over \big(1+{|x_-|\over \sqrt{2}}-r+|\tau|\big({|v_-|\over 2\sqrt{2}}-r\big)\big)^{\alpha+2}},\label{4.20a}
\end{equation}
\begin{equation}
|F^s(z_-(v_-,x_-,\tau)+f_1(\tau))-F^s(z_-(v_-,x_-,\tau)+f_2(\tau))|\le{\beta_3^s n|f_1-f_2|(\tau)
\over \big(1+{|x_-|\over \sqrt{2}}-r+|\tau|\big({|v_-|\over 2\sqrt{2}}-r\big)\big)^{\alpha+3}},\label{4.20b}
\end{equation}
for $\tau\in \R$.
Then similarly to the proof of \eqref{2.21} (resp.  \eqref{2.22}) and  \eqref{2.26} (resp. \eqref{2.30}) we prove the following: $\dot \A(f_1)(t)-\dot \A(f_2)(t)$ (resp. $\A(f_1)(t)-\A(f_2)(t)$) for $t\le 0$ is bounded by the right-hand side of \eqref{2.21} (resp. \eqref{2.22}); $\dot \A(f_1)(t)-\dot \A(f_2)(t)$ (resp. $\A(f_1)(t)-\A(f_2)(t)$) for $t\ge 0$ is bounded by the right-hand side of \eqref{2.26} (resp. \eqref{2.30}).
Then estimate \eqref{lb2} follows from these latter bounds on $\A(f_1)(t)-\A(f_2)(t)$, $t\in \R$.

It remains to prove \eqref{5.9}. Let $f\in M_r$.
Using \eqref{1.4a} we obtain 
\begin{equation}
|F^l(z_-(v_-,x_-,\tau)+f(\tau))|\le{\beta_1^l \sqrt{n}\over \big(1+{|x_-|\over \sqrt{2}}-r+|\tau|\big({|v_-|\over 2\sqrt{2}}-r\big)\big)^{\alpha+1}},\label{4.7bb}
\end{equation}
for $\tau\in \R$.
Then from \eqref{5.2a}, \eqref{4.7b} and \eqref{4.7bb} it follows that 
\begin{eqnarray*}
|\tilde k(v_-,x_-,f)-v_-|&\le& \int_{-\infty}^{+\infty}{\sqrt{n}\beta_1^l d\tau\over  \big(1+{|x_-|\over \sqrt{2}}-r+|\tau|\big({|v_-|\over 2\sqrt{2}}-r\big)\big)^{\alpha+1}}\\
&&+ \int_{-\infty}^{+\infty}{\sqrt{n}\beta_2 d\tau\over\big(1+{|x_-|\over \sqrt{2}}-r+|\tau|\big({|v_-|\over 2\sqrt{2}}-r\big)\big)^{\alpha+2}},
\end{eqnarray*}
which proves \eqref{5.9}.
\end{proof}

\section{Proof of Theorem \ref{thm_y}}
\label{proof_thm_y}
The estimate \eqref{t10a} (resp. \eqref{t10b}) follows from the assumption $y_-=A(y_-)$, and $\sup_{(-\infty,0)}|y_-|\le r$ and the estimate \eqref{2.10} (resp. \eqref{2.11}).
Using \eqref{1.4b} and \eqref{2.5b} we obtain 
\begin{equation}
|F^l(z_-(v_-,\tau)+x_-+y_-(\tau))|\le{\beta_1^l \sqrt{n}\over \big(1+{|x_-|\over \sqrt{2}}-r+|\tau|\big({|v_-|\over 2\sqrt{2}}-r\big)\big)^{\alpha+1}},\label{2.7bbb}
\end{equation}
for $\tau \in \R$.
Using \eqref{300a}, \eqref{2.7b} and \eqref{2.7bbb} we obtain
\begin{eqnarray*}
|a_{sc}(v_-,x_-)|&\le& \int_{-\infty}^{+\infty}{\beta_1^l n^{1\over 2}\over (1+{|x_-|\over \sqrt{2}}-r+\big({|v_-|\over 2\sqrt{2}}-r\big)|\tau|)^{\alpha+1}}d\tau\nonumber\\
&&+ \int_{-\infty}^{+\infty}{\beta_2 n^{1\over 2}\over (1+{|x_-|\over \sqrt{2}}-r+\big({|v_-|\over 2\sqrt{2}}-r\big)|\tau|)^{\alpha+2}}d\tau,
\end{eqnarray*}
which proves \eqref{503a}.

Then using \eqref{502a}, \eqref{2.7b} and \eqref{2.7c} we have 
\begin{eqnarray*}
|l(v_-,x_-,y_-)|&\le& 2\beta_2n^{1\over 2}\int_{-\infty}^0\int_{-\infty}^\sigma \big(1+{|x_-|\over \sqrt{2}}-r+\big({|v_-|\over 2 \sqrt{2}}-r\big)|\tau|\big)^{-\alpha-2}d\tau d\sigma\nonumber\\
&&+\beta_2 n\int_{-\infty}^0\int_{-\infty}^\sigma {|x_-|+\sup_{(-\infty,0)}|y_-|\over \big(1-r+\big({|v_-|\over 2 \sqrt{2}}-r\big)|\tau|\big)^{\alpha+2}}d\tau d\sigma\nonumber\\
&\le &{2\beta_2 n^{1\over 2}\over \alpha (\alpha+1)\big({|v_-|\over 2 \sqrt{2}}-r\big)^2 (1+{|x_-|\over \sqrt{2}}-r\big)^\alpha}
+{\beta_2n(|x_-|+\sup_{(-\infty,0)}|y_-|)\over \alpha (\alpha+1)\big({|v_-|\over 2 \sqrt{2}}-r\big)^2(1-r)^\alpha}\nonumber\\
&\le&{\beta_2 (n|x_-|+2n^{1\over 2}+n\sup_{(-\infty,0)}|y_-|)\over \alpha (\alpha+1)(1-r)^\alpha\big({|v_-|\over 2 \sqrt{2}}-r\big)^2}.\nonumber
\end{eqnarray*}
This estimate with the estimate $\sup_{(-\infty,0)}|y_-|\le r$ proves \eqref{503b}.
Now using \eqref{300a} and \eqref{2.26} we obtain that
\begin{eqnarray}
&&|a_{sc}(v_-,x_-)-\int_{-\infty}^{+\infty}F(z_-(v_-,\tau)+x_-)|=\lim_{t\to+\infty}|\dot A(y_-)(t)-\dot A(0)(t)|\nonumber\\
&&\le {n\|y_-\|\over \Big({|v_-|\over 2\sqrt{2}}-r\Big)\Big(1+{|x_-|\over \sqrt{2}}-r\Big)^{\alpha+1}}\left[ {\beta_2\over (\alpha+1)}
+ {\beta_3^s\over (\alpha+2)\Big(1+{|x_-|\over \sqrt{2}}-r\Big)}\right]\nonumber\\
&&+ {n\sup_{s\in [0,+\infty)}{|y_-(s)|\over 1+s}\over \Big({|v_-|\over 2\sqrt{2}}-r\Big)^2(1+{|x_-|\over \sqrt{2}}-r)^{\alpha}}\left[{\beta_3^s\over(\alpha+1) (1+{|x_-|\over \sqrt{2}}-r)}+{\beta_2\over\alpha}\right].\label{3.3}
\end{eqnarray}
Then note that $\|y_-\|=\|A(y_-)\|$ is bounded by the right hand side of \eqref{l1}. Hence combining this latter bound on $\|y_-\|$  and \eqref{3.3} and the estimate $1+{|x_-|\over \sqrt{2}}-r\ge 1-r$ we obtain \eqref{504a}.

Using \eqref{502a}, \eqref{2.20a} and \eqref{2.20b} we obtain
\begin{eqnarray}
&&|l(v_-,x_-,y_-)-l(v_-,x_-,0)|\nonumber\\
&\le& \int_{-\infty}^0\int_{-\infty}^\sigma \left|F^l\big(z_-(v_-,\tau)+x_-+y_-(\tau)\big)-F^l\big(z_-(v_-,\tau)+x_-\big)\right|d\tau d\sigma\nonumber\\
&&+ \int_{-\infty}^0\int_{-\infty}^\sigma \left|F^s\big(z_-(v_-,\tau)+x_-+y_-(\tau)\big)-F^s\big(z_-(v_-,\tau)+x_-\big)\right|d\tau d\sigma\nonumber\\
&&+\int_0^{+\infty}\int_\sigma^{+\infty}|F^s\big(z_-(v_-,\tau)+x_-+y_-(\tau)\big)-F^s\big(z_-(v_-,\tau)+x_-\big)|d\tau d\sigma\nonumber\\
&\le&{n\over (\alpha+1) \big({|v_-|\over 2\sqrt{2}}-r\big)^2(1+{|x_-|\over \sqrt{2}}-r)^{\alpha}}\Big({\beta_2\sup_{s\in (-\infty,0)}|y_-(s)|\over \alpha}
+{\beta_3^s\|y_-\|\over (\alpha+2)\big(1+{|x_-|\over \sqrt{2}}-r\big)}\nonumber\\
&&+{\beta_3^s\sup_{s\in (-\infty,0)}{|y_-(s)|\over 1+s}\over \alpha \big({|v_-|\over 2\sqrt{2}}-r\big)}\Big).\label{3.6}
\end{eqnarray}
Then $\|y_-\|$ is bounded by the right hand side of \eqref{l1}, and combining this latter bound on $\|y_-\|$  and \eqref{3.6} and the estimate $1+{|x_-|\over \sqrt{2}}-r\ge 1-r$ we obtain \eqref{504b}.

It remains to prove \eqref{503c}, \eqref{503d} and \eqref{503e}.
From \eqref{503a} it follows that
\begin{equation}
|a_{sc}(v_-,x_-)|\le {4n^{1\over 2}\max(\beta_1^l,\beta_2)\over\alpha\big({|v_-|\over 2 \sqrt{2}}-r\big)(1-r)^{\alpha+1}}.\label{3.15a}
\end{equation}
In addition using \eqref{5.103b} (for "$(w,v,x,h)=(a(v_-,x_-),a(v_-,x_-),0,0)$") and the identity $|a(v_-,x_-)|=|v_-|$ that follows from the conservation of energy, we have
\begin{equation}
|z_+(a(v_-,x_-),t)-t a(v_-,x_-)|\le {2^{5\over 2}n^{1\over 2}\beta_1^l\over \alpha|v_-|}|t|,\ t\in \R.\label{3.15aa}
\end{equation}
Then using \eqref{2.5a} for "$f=y_-$" and using \eqref{3.15a}, \eqref{3.15aa} and  \eqref{5.103b} (for "$(w,v,x,h)=(v_-,v_-,0,0)$") we obtain
\begin{eqnarray}
&&|x_-+\eta (z_-(v_-,t)+y_-(t))+(1-\eta)z_+(a(v_-,x_-),t)|\nonumber\\
&\ge & | x_-+t v_-|-|y_-(t)|-|z_-(v_-,t)-v_-t|-|z_+(a(v_-,x_-),t)-t a(v_-,x_-)|\nonumber\\
&&-|t||a_{sc}(v_-,x_-)|\nonumber\\
&\ge &{|x_-|\over \sqrt{2}}-r+ |t|\left( {|v_-|\over \sqrt{2}}-{2^{7\over 2}n^{1\over 2}\beta_1^l\over \alpha|v_-|}
-{4\max(\beta_1^l,\beta_2) n^{1\over 2}\over \alpha\big({|v_-|\over 2 \sqrt{2}}-r\big)(1-r)^{\alpha+1}}\right)\nonumber
\end{eqnarray}
\begin{equation}
\ge {|x_-|\over \sqrt{2}}-r+ |t|\left( {|v_-|\over \sqrt{2}}
-{8\max(\beta_1^l,\beta_2) n^{1\over 2}\over \alpha\big({|v_-|\over 2 \sqrt{2}}-r\big)(1-r)^{\alpha+1}}\right)\ge {|x_-|\over \sqrt{2}}-r+ |t|{|v_-|\over 2\sqrt{2}},\label{3.15}
\end{equation}
for $\eta\in (0,1)$ and $t\in\R$ (we used \eqref{503} and we used the estimate $| x_-+t v_-|\ge {|x_-|\over \sqrt{2}}+|t| {|v_-|\over \sqrt{2}}$ that follows from $x_-\cdot v_-=0$). Similarly
\begin{equation}
|\eta x_-+z_+(a(v_-,x_-),t)|\ge {|v_-|\over 2\sqrt{2}}|t|,\textrm{ for }(\eta,t)\in (0,1)\times \R.\label{3.16}
\end{equation}
From \eqref{502b}, \eqref{1.4a} and \eqref{3.16} it follows that
\begin{eqnarray*}
|l_1(v_-,x_-)|&\le &\beta_2 n|x_-|\int_0^{+\infty}\int_\sigma^{+\infty}\sup_{\eta\in (0,1)}(1+|\eta x_-+z_+(a(v_-,x_-),\tau)|)^{-\alpha-2}d\tau d\sigma\\
&\le&\beta_2 n|x_-|\int_0^{+\infty}\int_\sigma^{+\infty}
\big(1+{|v_-|\over 2\sqrt{2}}|\tau|\big)^{-\alpha-2}d\tau d\sigma,
\end{eqnarray*}
which gives \eqref{503c}.
Using \eqref{502c},  \eqref{1.4a} and \eqref{3.15} we obtain
\begin{equation}
|l_2(v_-,x_-,y_-)|\le n\beta_2\int_0^{+\infty}\int_\sigma^{+\infty}{|z_-(v_-,\tau)+y_-(\tau)-z_+(a(v_-,x_-),\tau)|d\tau d\sigma\over 
\big(1+{|x_-|\over \sqrt{2}}-r+ |\tau|{|v_-|\over 2\sqrt{2}}\big)^{\alpha+2}}.\label{701aa}
\end{equation}
Then using \eqref{300} and \eqref{300b} we have 
\begin{eqnarray}
&&|z_-(v_-,\tau)+y_-(\tau)-z_+(a(v_-,x_-),\tau)|\nonumber\\
&\le&|l(v_-,x_-,y_-)|+|l_1(v_-,x_-)|+|l_2(v_-,x_-,y_-)|+|y_+(\tau)|,\label{702aa}
\end{eqnarray}
for $\tau\in (0,+\infty)$. Combining \eqref{701aa} and \eqref{702aa} we obtain
\begin{eqnarray}
&&|l_2(v_-,x_-,y)|
\le \ep(v_-,x_-,0)\label{3.18a}\\
&&\times(|l_2(v_-,x_-,y_-)|+|l(v_-,x_-,y_-)|+|l_1(v_-,x_-)|+\sup_{(0,+\infty)}|y_+|),\nonumber
\end{eqnarray}
where
\begin{equation}
\ep(v_-,x_-,t):={n\beta_2\over \alpha(\alpha+1)\big({|v_-|\over 2^{3\over 2}}-r\big)^2(1+{|x_-|\over \sqrt{2}}-r+t\big({|v_-|\over 2^{3\over 2}}-r)\big)^\alpha},\label{3.19}
\end{equation}
for $t\ge 0$.
From \eqref{300c} and \eqref{2.7b} it follows that
\begin{eqnarray}
|y_+(t)|&\le& n^{-{1\over 2}}\ep(v_-,x_-,t)\label{3.19aa}\\
&&+\int_t^{+\infty}\int_\sigma^{+\infty}\big|F^l(z_-(v_-,\tau)+x_-+y_-(\tau))-F^l(z_+(a(v_-,x_-),\tau))\big|d\tau d\sigma,\nonumber
\end{eqnarray}
for $t\ge 0$.
Then similarly to \eqref{3.18a} we have 
\begin{eqnarray}
\sup_{(t,+\infty)}|y_+|
&\le&  \ep(v_-,x_-,t)(n^{-{1\over 2}}+|l_2(v_-,x_-,y_-)|+|l(v_-,x_-,y_-)|\nonumber\\
&&+|l_1(v_-,x_-)|+\sup_{(t,+\infty)}|y_+|),
\label{3.18b}
\end{eqnarray}
for $t\ge 0$.
From \eqref{3.18a} and \eqref{3.18b} it follows that
\begin{equation}
(1-\ep(v_-,x_-,0))|l_2(v_-,x_-,y_-)|\le\ep(v_-,x_-,0)(|l(v_-,x_-,y_-)|+|l_1(v_-,x_-)|+\sup_{(0,+\infty)}|y_+|),\label{3.20a}
\end{equation}
and
\begin{equation}
(1-\ep(v_-,x_-,t))\sup_{(t,+\infty)}|y_+|\le\ep(v_-,x_-,t) \big(n^{-{1\over 2}}+|l_2(v_-,x_-,y)|+|l(v_-,x_-,y_-)|+|l_1(v_-,x_-)|\big),\label{3.20b}
\end{equation}
for $t\ge 0$.
Using \eqref{503} and \eqref{3.19} we have 
\begin{equation}
\sup_{t\in(0,+\infty)}\ep(v_-,x_-,t)= \ep(v_-,x_-,0)\le 8^{-1}.\label{3.21a}
\end{equation}
Then multiplying \eqref{3.20a} by $(1-\ep(v_-,x_-,0))$ and using \eqref{3.20b} for $t=0$ we obtain 
\begin{eqnarray}
(1-2\ep(v_-,x_-,0))|l_2(v_-,x_-,y)|
&\le&\ep(v_-,x_-,0)^2n^{-{1\over 2}}\label{3.21}\\
&&+\ep(v_-,x_-,0)(|l(v_-,x_-,y)|+|l_1(v_-,x_-)|).\nonumber
\end{eqnarray}
Using \eqref{3.21} and \eqref{3.21a} we have
\begin{equation}
|l_2(v_-,x_-,y)|
\le2\ep(v_-,x_-,0)(n^{-{1\over 2}}\ep(v_-,x_-,0)+|l(v_-,x_-,y)|+|l_1(v_-,x_-)|).\label{3.21b}
\end{equation}  
Then \eqref{503d} follows from \eqref{3.19} (for $t=0$), \eqref{503b} and \eqref{2.26}.
Using \eqref{3.21a} and \eqref{3.20b} we obtain
\begin{equation}
\sup_{(t,+\infty)}|y_+|\le 2\ep(v_-,x_-,t) \big(n^{-{1\over 2}}+|l_2(v_-,x_-,y)|+|l(v_-,x_-,y_-)|+|l_1(v_-,x_-)|\big),
\end{equation}
for $t\ge 0$.
Then \eqref{503e} follows from \eqref{503d} (combined with \eqref{503}), \eqref{503b}, \eqref{503c} and \eqref{3.19}. \hfill $\Box$

\section{Proof of Lemma \ref{lem_cont3} and Theorem \ref{thm_y2}}
\label{proof_thm_y2}

\subsection{Preliminary lemmas}
\begin{lemma}
\label{lem_scatinit2}
Let $(v,x,w,h)\in (\R^n)^4$ and $x\in \R^n$  so that $v\cdot x=0$. Under the assumptions of Lemma \ref{lem_scatinit} the following estimates are valid
\begin{eqnarray}
|z_+(w,x+h,t)-z_+(w,x+h',t)|&\le& 2|h-h'|,\label{5.103c}\\
|\eta z_+(w,x+h',t)+(1-\eta)z_+(w,x+h,t)|&\ge &{|x|\over \sqrt{2}}-|h|+t{|v|\over 2\sqrt{2}},\label{5.103e}
\end{eqnarray}
for $t\ge 0$, $\eta\in(0,1)$ and $h'\in \R^n$, $|h'|\le |h|$.
\end{lemma}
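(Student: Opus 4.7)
The plan is to prove \eqref{5.103e} first (without invoking \eqref{5.103c}), and then use it as a pointwise lower bound for convex combinations of $a(\tau):=z_+(w,x+h,\tau)$ and $b(\tau):=z_+(w,x+h',\tau)$ to derive \eqref{5.103c}. Throughout I will use the decomposition $z_+(w,x+k,t)=x+k+tw+f_{w,x+k}(t)$, where $f_{w,x+k}$ is the fixed point in $\V$ of the operator $G_+$ from the proof of Lemma \ref{lem_scatinit}, with $|f_{w,x+k}(t)|\le C'(k)|t|$ and $C'(k)={2^{5/2}n^{1/2}\beta_1^l\over \alpha|v|(1+|x|/\sqrt{2}-|k|)^\alpha}$. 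Since $C'$ is monotone in $|k|$, the hypothesis $|h'|\le |h|$ gives $C'(h'),C'(h)\le C'(h)\le |v|/(4\sqrt{2})$ by \eqref{5.101}.

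For \eqref{5.103e} I will expand the convex combination as $x+tw+(\eta h'+(1-\eta)h)+\eta f_{w,x+h'}(t)+(1-\eta)f_{w,x+h}(t)$, and then bound $|x+tw|\ge {|x|\over\sqrt{2}}+{3t|v|\over 4\sqrt{2}}$ for $t\ge 0$ using $x\cdot v=0$ (which gives $|x+tv|\ge(|x|+t|v|)/\sqrt{2}$) together with $|v-w|\le |v|/(4\sqrt{2})$, while $|\eta h'+(1-\eta)h|\le |h|$ and the $f$-terms contribute at most $C'(h)t\le t|v|/(4\sqrt{2})$. The arithmetic $3/(4\sqrt{2})-1/(4\sqrt{2})=1/(2\sqrt{2})$ closes \eqref{5.103e}.

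For \eqref{5.103c} set $\delta(t):=f_{w,x+h}(t)-f_{w,x+h'}(t)$, so that $z_+(w,x+h,t)-z_+(w,x+h',t)=(h-h')+\delta(t)$ and it suffices to show $|\delta(t)|\le |h-h'|$. The two fixed-point identities give
\begin{equation*}
\delta(t)=-\int_0^t\int_\sigma^{+\infty}\bigl(F^l(a(\tau))-F^l(b(\tau))\bigr)d\tau\,d\sigma,
\end{equation*}
with $a(\tau)-b(\tau)=(h-h')+\delta(\tau)$. Estimating $|F^l(a)-F^l(b)|\le n\beta_2^l|a-b|\sup_{\eta\in(0,1)}(1+|\eta a+(1-\eta)b|)^{-\alpha-2}$ by \eqref{1.4a} and applying (the already-proved) \eqref{5.103e} to the segment joining $a(\tau)$ and $b(\tau)$ bounds the denominator below by $1+{|x|\over\sqrt{2}}-|h|+{|\tau||v|\over 2\sqrt{2}}$. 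Computing the resulting double integral and using \eqref{5.101} yields
\begin{equation*}
|\delta(t)|\le K\bigl(|h-h'|+\sup_{s\in[0,t]}|\delta(s)|\bigr),\qquad K:={8n\beta_2^l\over \alpha(\alpha+1)|v|^2(1+{|x|\over\sqrt{2}}-|h|)^\alpha}\le {1\over 4}.
\end{equation*}

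The main obstacle is turning this inequality into a uniform sup bound on $|\delta|$, since individually $f_{w,x+h}$ and $f_{w,x+h'}$ are only controlled by $C'(h)|t|$, which grows linearly. I will resolve this by noting that $\delta$ is continuous, hence $H(t):=\sup_{s\in[0,t]}|\delta(s)|$ is finite for every $t\ge 0$; taking the sup over $s\in[0,t]$ in the previous display yields $H(t)\le K(|h-h'|+H(t))$, whence $H(t)\le {K\over 1-K}|h-h'|\le |h-h'|/3$ uniformly in $t\ge 0$. Consequently $|z_+(w,x+h,t)-z_+(w,x+h',t)|\le |h-h'|+|\delta(t)|\le (4/3)|h-h'|\le 2|h-h'|$, which is \eqref{5.103c}.
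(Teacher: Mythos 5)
Your proof of \eqref{5.103e} is correct and is essentially the paper's argument (same decomposition via $z_+(w,x+k,t)=x+k+tw+f_{w,x+k}(t)$, same use of $x\cdot v=0$, $|w-v|\le|v|/(4\sqrt2)$ and $C'\le|v|/(4\sqrt2)$). The problem is in the Gronwall step for \eqref{5.103c}. Your integral identity for $\delta$ is
$\delta(t)=-\int_0^t\int_\sigma^{+\infty}\bigl(F^l(a(\tau))-F^l(b(\tau))\bigr)\,d\tau\,d\sigma$,
and by Fubini
\begin{equation*}
\int_0^t\int_\sigma^{+\infty}g(\tau)\,d\tau\,d\sigma=\int_0^t\tau\,g(\tau)\,d\tau+t\int_t^{+\infty}g(\tau)\,d\tau ,
\end{equation*}
so the integrand involves $|\delta(\tau)|$ for \emph{all} $\tau\ge 0$, including $\tau>t$. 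Hence the inequality you can actually derive is $|\delta(t)|\le K\bigl(|h-h'|+\sup_{s\in[0,+\infty)}|\delta(s)|\bigr)$, not $|\delta(t)|\le K\bigl(|h-h'|+\sup_{s\in[0,t]}|\delta(s)|\bigr)$; taking suprema over $[0,t]$ does not close the loop, because the a priori control on $\delta(\tau)$ for $\tau>t$ is only $|\delta(\tau)|\le 2C'(h)\tau$, which is unbounded. So the self-consistent bound $H(t)\le K(|h-h'|+H(t))$ is not justified, and this is precisely the "main obstacle" you identified but did not actually overcome: continuity of $\delta$ gives finiteness of $\sup_{[0,t]}|\delta|$, which is irrelevant, whereas what is needed is finiteness of $\sup_{[0,+\infty)}|\delta|$.

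This is exactly why the paper first invokes Lemma \ref{lem_comp} to get $\sup_{(0,+\infty)}|\tilde\delta|<\infty$ before writing the self-referential inequality \eqref{5.126}--\eqref{5.127}. Lemma \ref{lem_comp} is itself nontrivial: it bootstraps the a priori linear growth $O(\tau)$ to $O(\tau^{1-\alpha})$, then to $O(\tau^{1-2\alpha})$, and so on through $\lfloor\alpha^{-1}\rfloor$ iterations of the same integral inequality until a uniform bound is reached. To repair your argument you must either insert this iteration (feeding $|\delta(\tau)|\le 2C'(h)\tau$ into your double integral and improving the growth exponent step by step), or cite Lemma \ref{lem_comp} applied to the two solutions $z_+(w,x+h,\cdot)$ and $z_+(w,x+h',\cdot)$ of \eqref{5.102}, which share the asymptotic velocity $w$. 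Once $\sup_{[0,+\infty)}|\delta|<\infty$ is known, your computation with $K\le 1/4$ does close and yields an even better constant than the paper's $2$.
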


\begin{proof}[Proof of Lemma \ref{lem_scatinit2}]
First we prove \eqref{5.103e}.
We estimate $\eta z_+(w,x+h',\tau)+(1-\eta)z_+(w,x+h,t)$ for $\eta \in (0,1)$ and $t\ge 0$, and for $(h,h')\in \B(0,1+{|x|\over \sqrt{2}})^2$ so that  $|h'|\le |h|$.
Using  \eqref{5.103b} we obtain
\begin{eqnarray}
&&|\eta z_+(w,x+h',t)+(1-\eta)z_+(w,x+h,t)|\nonumber\\
&\ge& |x+tv|-|\eta h'+(1-\eta)h|-t|v-w|-\eta|z_+(w,x+h',t)-x-h'-tw|\nonumber\\
&&-(1-\eta)|z_+(w,x+h,t)-x-h-tw|\nonumber
\end{eqnarray}
\begin{equation}
\ge {|x|\over \sqrt{2}}-|h|+t\Big({3|v|\over 4\sqrt{2}}-(1-\eta){2^{5\over 2}n^{1\over 2}\beta_1^l\over \alpha|v|(1+{|x|\over \sqrt{2}}-|h|)^\alpha}
-\eta{2^{5\over 2}n^{1\over 2}\beta_1^l\over \alpha|v|(1+{|x|\over \sqrt{2}}-|h'|)^\alpha}\Big)\label{5.110},
\end{equation}
for $t\ge 0$. Estimate \eqref{5.103e} follows from \eqref{5.110} and assumption \eqref{5.101}.

Now we set $\tilde \delta(t):=z_+(w,x+h',t)-z_+(w,x+h,t)$ for $t\ge 0$, where $|h'|\le |h|$.
From Lemma \ref{lem_comp} it follows that $\sup_{(0,+\infty)}|\tilde \delta|<\infty$.
Then from  \eqref{5.102} with the boundary conditions \eqref{5.103a} it follows that
\begin{equation}
\tilde \delta(t)
=h'-h-\int_0^t\int_\sigma^{+\infty}(F^l\big(z_+(w,x+h',\tau)\big)-F^l\big(z_+(w,x+h,\tau)\big))d\tau d\sigma,\label{5.117}
\end{equation}
for $t\ge 0$, and using \eqref{1.4a} and \eqref{5.103e} we obtain
\begin{equation}
|\tilde \delta(t)|\le |h'-h|+n\beta_2\sup_{(0,+\infty)}|\tilde \delta| \int_0^t\int_\sigma^{+\infty}
\Big(1+{|x|\over \sqrt{2}}-|h|+\tau{|v|\over 2\sqrt{2}}\Big)^{-\alpha-2}d\tau d\sigma,\label{5.126}
\end{equation}
for $t\ge 0$, which proves that 
\begin{equation}
\Big(1-{n\beta_2\over \alpha(\alpha+1)\big({|v|\over 2\sqrt{2}}\big)^2\big(1+{|x|\over \sqrt{2}}-|h|\big)^{\alpha}}\Big)\sup_{(0,+\infty)}|\tilde \delta|\le 
|h'-h|.\label{5.127}
\end{equation}
Using \eqref{5.101} we have $
1-{8n\beta_2\over \alpha(\alpha+1)|v|^2\big(1+{|x|\over \sqrt{2}}-|h|\big)^{\alpha}}\ge {1\over 2}$. Combining this latter estimate and \eqref{5.127} we obtain \eqref{5.103c}.
\end{proof}

We also need the following lemma.
\begin{lemma}
\label{lem:est}
Let $(v_-,x_-)\in \R^n\times\R^n$, $v_-\cdot x_-=0$ and let $r>0$, $r<{1\over 2}+{|x_-|\over 2^{3\over 2}}$, and let $y_-\in M_r$. Under the assumptions of Lemma \ref{lem_cont3}
the following estimates are valid
\begin{equation}
|z_-(v_-,x_-,t)+y_-(t)-z_+(\tilde a(v_-,x_-),x_-,t)|\le {{20\beta_2(nr+3\sqrt{n})\over 19\alpha(\alpha+1)\big({|v_-|\over 2\sqrt{2}}-r\big)^2
\big(1+{|x_-|\over \sqrt{2}}-r\big)^\alpha}},\label{5.5}
\end{equation}
\begin{equation}
|\eta\big(z_-(v_-,x_-,t)+y_-(t)\big)+(1-\eta)z_+(\tilde a(v_-,x_-),x_-,t)|
\ge {|x_-|\over \sqrt{2}}-r+|t|\big({|v_-|\over 2\sqrt{2}}-r\big),\label{5.5b}
\end{equation}
for $t\ge 0$ and $\eta\in(0,1)$.
\end{lemma}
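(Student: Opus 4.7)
\textbf{Proof plan for Lemma \ref{lem:est}.} Set $x(t):=z_-(v_-,x_-,t)+y_-(t)$ and $z(t):=z_+(\tilde a(v_-,x_-),x_-,t)$; these are the two functions whose difference has to be controlled. By construction both $\dot x(t)$ and $\dot z(t)$ tend to $\tilde a(v_-,x_-)$ as $t\to+\infty$, and since $z_-(v_-,x_-,0)=z_+(\tilde a,x_-,0)=x_-$ we have $x(0)-z(0)=y_-(0)$. Throughout I will write $A:=|x_-|/\sqrt2-r$ and $B:=|v_-|/(2\sqrt2)-r$, and use that by Lemma~\ref{lem_cont2} and assumption \eqref{5.9a} the quantities $|\tilde a(v_-,x_-)-v_-|$ (bounded by \eqref{5.9}) and the deflections $|z_\pm(\cdot,x_-,t)-x_--t(\cdot)|$ (bounded by \eqref{5.103b}) are all small compared with $|v_-|$.

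I would first dispatch \eqref{5.5b}. Write
\begin{equation*}
\eta x(t)+(1-\eta)z(t)=(x_-+tv_-)+(1-\eta)t(\tilde a-v_-)+\eta(z_-(v_-,x_-,t)-x_--tv_-)+\eta y_-(t)+(1-\eta)(z_+(\tilde a,x_-,t)-x_--t\tilde a),
\end{equation*}
use $|x_-+tv_-|\ge |x_-|/\sqrt2+t|v_-|/\sqrt2$ (because $x_-\cdot v_-=0$), estimate the four correction terms by \eqref{5.103b}, \eqref{5.9}, and the pointwise bound $|y_-(t)|\le r(1+t)$ coming from $y_-\in M_r$, and finally absorb the $t$-linear pieces using \eqref{5.9a}. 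The constant term is bounded by $r$ and the coefficient of $t$ by $|v_-|/2\sqrt2+r$, which gives the desired lower bound.

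For \eqref{5.5}, set $\delta(t):=x(t)-z(t)$; by Lemma~\ref{lem_comp}, $N:=\sup_{t\ge0}|\delta(t)|<\infty$. Since $\dot\delta(t)\to 0$ as $t\to+\infty$,
\begin{equation*}
\dot\delta(t)=-\int_t^{+\infty}\bigl(F^s(x(\tau))+F^l(x(\tau))-F^l(z(\tau))\bigr)d\tau,
\end{equation*}
and integrating from $0$ to $t$ yields
\begin{equation*}
\delta(t)=y_-(0)-\int_0^t\!\!\int_\sigma^{+\infty}F^s(x(\tau))\,d\tau d\sigma-\int_0^t\!\!\int_\sigma^{+\infty}\bigl(F^l(x(\tau))-F^l(z(\tau))\bigr)d\tau d\sigma.
\end{equation*}
I bound $|y_-(0)|$ using \eqref{4.11} (or equivalently the $M_r$ estimate for $\A(y_-)$), $|F^s(x(\tau))|$ using \eqref{1.4b} together with the lower bound \eqref{4.5b} (with $f=y_-$), and $|F^l(x(\tau))-F^l(z(\tau))|$ by the mean value theorem combined with the already proved estimate \eqref{5.5b}. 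The two double integrals reduce to the explicit quantity $[\alpha(\alpha+1)B^2(1+A)^\alpha]^{-1}$ multiplied by $\beta_2\sqrt n$ and $n\beta_2 N$ respectively, so one arrives at an inequality of the form
\begin{equation*}
N\le \frac{\beta_2(nr+c\sqrt n)}{\alpha(\alpha+1)B^2(1+A)^\alpha}+K\,N,\qquad K=\frac{n\beta_2}{\alpha(\alpha+1)B^2(1+A)^\alpha}.
\end{equation*}
Hypothesis \eqref{5.9a}, together with the inequality $(1+|x_-|/\sqrt2-r)\ge 2(1/2+|x_-|/(2\sqrt2)-r)$, gives $K\le 1/20$, and solving for $N$ produces the factor $20/19$ appearing in \eqref{5.5}.

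The main technical point is the Gr\"onwall closure in the second step: one needs the coefficient $K$ of the self-referential term $\int\int |F^l(x)-F^l(z)|\le K\,N$ to be strictly less than $1$, and extracting a sharp enough bound for $K$ from \eqref{5.9a} is what forces the transition from the radius $1/2+|x_-|/(2\sqrt2)$ appearing in \eqref{5.9a} to the radius $1+|x_-|/\sqrt2$ in the target \eqref{5.5}. Everything else is bookkeeping of the various contributions, with \eqref{5.5b} playing the essential role of providing the lower bound along the segment joining $x(\tau)$ and $z(\tau)$ needed to apply the mean value theorem to $F^l$.
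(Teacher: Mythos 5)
Your proposal is correct and follows essentially the same route as the paper: you prove \eqref{5.5b} first by the convex-combination lower bound (using \eqref{5.103b}, \eqref{5.9} and \eqref{5.9a} to absorb the corrections), then write $\delta(t)=\delta(0)-\int_0^t\int_\sigma^{+\infty}F^s(x(\tau))\,d\tau\,d\sigma-\int_0^t\int_\sigma^{+\infty}(F^l(x(\tau))-F^l(z(\tau)))\,d\tau\,d\sigma$ and close the self-referential estimate with $K\le 1/20$ to get the factor $20/19$, exactly as in \eqref{5.6}--\eqref{5.17}. The only difference is cosmetic bookkeeping of the constant term (the paper packages $\delta(0)$ and the $F^s$ integral as $\tilde l+H_1$, arriving at $nr+3\sqrt n$, while your accounting gives the slightly sharper $nr+2\sqrt n$, which still implies \eqref{5.5}).
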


\begin{proof}[Proof of Lemma \ref{lem:est}]
We set $\delta(t):=z_-(v_-,x_-,t)+y_-(t)-z_+(\tilde a(v_-,x_-),x_-,t)$ for $t\ge 0$.
From Lemma \ref{lem_comp} it follows that $\sup_{(0,+\infty)}|\delta|<\infty$.
Using \eqref{1.1} and \eqref{5.102} we obtain
\begin{eqnarray}
\delta(t)
&=&-\int_0^t\int_\sigma^{+\infty}(F^l\big(z_-(v_-,x_-,\tau)+y_-(\tau)\big)-F^l\big(z_+(\tilde a(v_-,x_-),x_-,\tau)\big))d\tau d\sigma\nonumber\\
&&+\tilde l(v_-,x_-,y_-)+H_1(v_-,x_-,y_-)(t),\label{5.6}
\end{eqnarray}
for $t\ge 0$, where $\tilde l$ is defined by \eqref{5.2b} and where
\begin{equation}
H_1(v_-,x_-,y_-)(t)=\int_t^{+\infty}\int_\sigma^{+\infty}F^s\big(z_-(v_-,x_-,\tau)+y_-(\tau)\big)d\tau d\sigma.\label{5.2c}
\end{equation}

Then from \eqref{5.2b}, \eqref{4.7b}, \eqref{4.7c} and the estimate $|y_-(\tau)|\le r$ for $\tau \le 0$ we obtain
\begin{equation}
|\tilde l(v_-,x_-,y_-)|\le{\beta_2(nr+2\sqrt{n})\over \alpha(\alpha+1)\big({|v_-|\over 2\sqrt{2}}-r\big)^2
\big(1+{|x_-|\over \sqrt{2}}-r\big)^\alpha}.\label{5.2d}
\end{equation}
Using \eqref{5.2c} and \eqref{4.7b} we obtain 
\begin{equation}
|H_1(v_-,x_-,y_-)(t)|\le{\beta_2\sqrt{n}\over \alpha(\alpha+1)\big({|v_-|\over 2\sqrt{2}}-r\big)^2\Big(1+{|x_-|\over \sqrt{2}}-r\Big)^{\alpha}},\label{5.2e}
\end{equation}
for $t\ge 0$.
From \eqref{5.9} and the inequality $r\le{1\over 2}+{|x_-|\over 2^{3\over 2}}$ and from  \eqref{5.2ba} it follows that
\begin{equation}
|\tilde a_{sc}(v_-,x_-)|\le {6\sqrt{n}\max(\beta_1^l,\beta_2)\over \alpha({|v_-|\over 2^{3\over 2}}-r)(1-r+{|x_-|\over \sqrt{2}})^\alpha}.\label{0.00}
\end{equation}
Then from \eqref{5.9a} it follows that $|a_{sc}(v_-,x_-)|\le{|v_-|\over 4\sqrt{2}}$, and using \eqref{5.103e} we obtain $|z_+(a(v_-,x_-),x_-,t)|\ge  {|x|\over \sqrt{2}}+t{|v|\over 2\sqrt{2}}$ for $t\ge 0$.
And using  \eqref{5.103b} we obtain
\begin{eqnarray}
&&|\eta\big(z_-(v_-,x_-,t)+y_-(t)\big)+(1-\eta)z_+(\tilde a(v_-,x_-),x_-,t)|\nonumber\\
&\ge& |x_-+tv_-|-(1-\eta)\big(|z_+(\tilde a(v_-,x_-),x_-,t)-x_--t\tilde a(v_-,x_-)|+t|\tilde a_{sc}(v_-,x_-)|\big)\nonumber\\
&&-\eta|z_-(v_-,x_-,t)-x_--v_-t|-\eta|y_-(t)|\nonumber
\end{eqnarray}
\begin{equation}
\ge {|x_-|\over \sqrt{2}}-r+|t|\Big({|v_-|\over \sqrt{2}}-{6\sqrt{n}\beta\over \alpha \big({|v_-|\over 2\sqrt{2}}-r\big)\big(1+{|x_-|\over \sqrt{2}}-r\big)^\alpha}
-{2^{7\over 2}n^{1\over 2}\beta_1^l\over \alpha|v_-|(1+{|x_-|\over \sqrt{2}})^\alpha}-r\Big),\label{5.11}
\end{equation}
for $t\ge 0$. Combining \eqref{5.11} and \eqref{5.9a} we obtain \eqref{5.5b}.

From \eqref{5.6}, \eqref{1.4a} and \eqref{5.11} it follows that
\begin{equation}
|\delta(t)|\le C(v_-,x_-,r)+n\beta_2\sup_{(0,+\infty)}|\delta| \int_0^t\int_\sigma^{+\infty}
\Big(1+{|x_-|\over \sqrt{2}}-r+\tau\big({|v_-|\over 2\sqrt{2}}-r\big)\Big)^{-\alpha-2}d\tau d\sigma,\label{5.16}
\end{equation}
for $t\ge 0$, where
\begin{equation}
C(v_-,x_-,r):={\beta_2(nr+3\sqrt{n})\over \alpha(\alpha+1)\big({|v_-|\over 2\sqrt{2}}-r\big)^2
\big(1+{|x_-|\over \sqrt{2}}-r\big)^\alpha}\ge |\tilde l(v_-,x_-,y_-)|+\sup_{(0,+\infty)}|H_1(v_-,x_-,y_-)|\label{5.16aa}
\end{equation} 
(see \eqref{5.2d} and \eqref{5.2e}).
From \eqref{5.16} it follows that
\begin{equation}
\Big(1-{n\beta_2\over \alpha(\alpha+1)\big({|v_-|\over 2\sqrt{2}}-r\big)^2\big(1+{|x_-|\over \sqrt{2}}-r\big)^{\alpha}}\Big)\sup_{(0,+\infty)}|\delta|\le 
C(v_-,x_-,r).\label{5.17}
\end{equation}
Using \eqref{5.9a} we have 
$1-{n\beta_2\over \alpha(\alpha+1)\big({|v_-|\over 2\sqrt{2}}-r\big)^2\big(1+{|x_-|\over \sqrt{2}}-r\big)^{\alpha}}\ge {19\over 20}$, 
and then estimate \eqref{5.5} follows from \eqref{5.17} and \eqref{5.16aa}. Hence Lemma \ref{lem:est} is proved.
\end{proof}

\subsection{Proof of Lemma \ref{lem_cont3}}
Let $(h,h')\in \overline{\B(0,{1\over 4}+{|x_-|\over 2^{5\over 2}})}$, $|h'|\le |h|$. Using \eqref{5.103c} and \eqref{5.5b} we have 
\begin{eqnarray}
&&|\eta\big(z_-(v_-,x_-,t)+y_-(t)\big)+(1-\eta)z_+(\tilde a(v_-,x_-),x_-+h,t)|\nonumber\\
&\ge& |\eta\big(z_-(v_-,x_-,t)+y_-(t)\big)+(1-\eta)z_+(\tilde a(v_-,x_-),x_-,t)|\nonumber\\
&&-(1-\eta)|z_+(\tilde a(v_-,x_-),x_-,t)-z_+(\tilde a(v_-,x_-),x_-+h,t)|\nonumber
\end{eqnarray}
\begin{equation}
\ge {|x_-|\over \sqrt{2}}-r-2|h|+|t|\big({|v_-|\over 2\sqrt{2}}-r\big)\ge{|x_-|\over 2\sqrt{2}}-{1\over 2}-r+|t|\big({|v_-|\over 2\sqrt{2}}-r\big),\label{7.5}
\end{equation}
for $t\ge 0$ and $\eta\in(0,1)$.
Using \eqref{5.103c} and \eqref{5.5} we obtain 
\begin{equation}
|z_-(v_-,x_-,t)+y_-(t)-z_+(\tilde a(v_-,x_-),x_-+h,t)|\le {20\beta_2(nr+3\sqrt{n})\over 19\alpha(\alpha+1)\big({|v_-|\over 2\sqrt{2}}-r\big)^2
\big(1+{|x_-|\over \sqrt{2}}-r\big)^\alpha}+2|h|,\label{7.6}
\end{equation}
Combining \eqref{5.2d}, \eqref{7.4}, \eqref{1.4a}, \eqref{7.5} and \eqref{7.6} we have
\begin{eqnarray}
|\G_{v_-,x_-}(h)|&\le&{\beta_2(nr+2\sqrt{n})\over \alpha(\alpha+1)\big({|v_-|\over 2\sqrt{2}}-r\big)^2
\big(1+{|x_-|\over \sqrt{2}}-r\big)^\alpha}\nonumber\\
&&+n\beta_2\int_0^{+\infty}\int_\sigma^{+\infty} { {20\beta_2(nr+3\sqrt{n})\over 19\alpha(\alpha+1)\big({|v_-|\over 2\sqrt{2}}-r\big)^2
\big(1+{|x_-|\over \sqrt{2}}-r\big)^\alpha}+2|h|\over \big({1\over 2}+{|x_-|\over 2\sqrt{2}}-r+\tau\big({|v_-|\over 2\sqrt{2}}-r\big)\big)^{\alpha+2}}d\tau d\sigma\nonumber\\
&\le&{\beta_2(3(nr+\sqrt{n})+2|h|n)\over \alpha(\alpha+1)\big({|v_-|\over 2\sqrt{2}}-r\big)^2
\big({1\over 2}+{|x_-|\over 2\sqrt{2}}-r\big)^\alpha}\label{7.7}
\end{eqnarray}
(we used \eqref{5.9a} which implies ${60n\beta_2\over 19 \alpha(\alpha+1) \big({|v_-|\over 2\sqrt{2}}-r\big)^2\big(1+{|x_-|\over \sqrt{2}}-r\big)^\alpha}\le 1$). Then
using \eqref{7.7} and the estimate $|h|\le {1\over4}+{|x_-|\over 4\sqrt{2}}$ we obtain \eqref{7.8a}.
Using again \eqref{5.9a} and \eqref{7.8a} and the estimate $r\le{1\over 2}+{|x_-|\over 2^{3\over2}}$ we obtain \eqref{7.8b}.

Then using \eqref{7.4}, \eqref{1.4a}, \eqref{5.103e} and \eqref{5.103c} we have
\begin{eqnarray}
|\G_{v_-,x_-}(h)-\G_{v_-,x_-}(h')|&\le&\int_0^{+\infty}\int_\sigma^{+\infty}\big|F^l\big(z_+(\tilde a(v_-,x_-),x_-+h',\tau)\big)\nonumber\\
&&-F^l\big(z_+(\tilde a(v_-,x_-),x_-+h,\tau)\big)\big|d\tau d\sigma\nonumber\\
&\le& n\beta_2^l\int_0^{+\infty}\int_\sigma^{+\infty}{2|h-h'|d\tau d\sigma\over \big({1\over 2}+{|x_-|\over 2\sqrt{2}}+\tau{|v_-|\over 2\sqrt{2}}\big)^{\alpha+2}},
\end{eqnarray}
for $(h,h')\in \B(0, {1\over4}+{|x_-|\over 4\sqrt{2}})^2$,
which proves the first estimate in \eqref{7.9}. The second estimate in \eqref{7.9} follows from  \eqref{5.9a}.\hfill $\Box$

\subsection{Proof of Theorem \ref{thm_y2}}
The estimate \eqref{7.2c} (resp. \eqref{7.2d}) follows from \eqref{4.10} (resp. \eqref{4.11}) and the equality $y_-=\A(y_-)$.
For estimate  \eqref{7.14b} see also \eqref{0.00}.
We prove \eqref{7.14}.
Since $\tilde b_{sc}(v_-,x_-)=\G_{v_-,x_-}(\tilde b_{sc}(v_-,x_-))$ we use \eqref{7.7} with "$h=\tilde b_{sc}(v_-,x_-)$" and we obtain
\begin{eqnarray}
&&\left(1-{2n\beta_2\over \alpha(\alpha+1) \big({|v_-|\over 2\sqrt{2}}-r\big)^2\big({1\over2}+{|x_-|\over 2\sqrt{2}}-r\big)^\alpha}\right)|\tilde b_{sc}(v_-,x_-)|\nonumber\\
&\le&{3\beta_2(nr+\sqrt{n})\over \alpha(\alpha+1)\big({|v_-|\over 2\sqrt{2}}-r\big)^2
\big({1\over 2}+{|x_-|\over 2\sqrt{2}}-r\big)^\alpha}
.\label{7.13}
\end{eqnarray}
Then use \eqref{5.9a} to prove that $1-{2n\beta_2\over \alpha(\alpha+1) \big({|v_-|\over 2\sqrt{2}}-r\big)^2\big({1\over2}+{|x_-|\over 2\sqrt{2}}-r\big)^\alpha}\ge{9\over 10}$. This proves \eqref{7.14}.
We prove \eqref{7.16b}.
From  \eqref{1.4a}, \eqref{7.5} and \eqref{7.2a} it follows that
\begin{eqnarray}
&&\int_t^{+\infty}\int_\sigma^{+\infty}|F^l(z_-(v_-,x_-,\tau)+y_-(\tau))-F^l(z_+(\tilde a(v_-,x_-),\tilde b(v_-,x_-),\tau))|d\tau d\sigma\nonumber\\
&\le&\beta_2 n\int_t^{+\infty}\int_\sigma^{+\infty}{|y_+(\tau)| d\tau d\sigma
\over \big(1+{|x_-|\over \sqrt{2}}-r+\tau\big({|v_-|\over 2\sqrt{2}}-r\big)\big)^{\alpha+2}}\nonumber\\
&\le&{\beta_2 n\sup_{(t,+\infty)}|y_+|
\over \alpha(\alpha+1)\big({|v_-|\over 2\sqrt{2}}-r\big)^2 \big(1+{|x_-|\over \sqrt{2}}-r\big)^\alpha},\label{701}
\end{eqnarray}
for $t\ge 0$.
Then using \eqref{7.2b}, \eqref{701} and \eqref{5.2e} we obtain
\begin{equation}
|y_+(t)|\le{\beta_2(n^{1\over 2}+n\sup_{(t,+\infty)}|y_+|)\over \alpha(\alpha+1)\big({|v_-|\over 2\sqrt{2}}-r\big)^2\Big(1+{|x_-|\over \sqrt{2}}-r\Big)^{\alpha}},\label{702}
\end{equation}
for $t\ge0$. Then from \eqref{5.9a} it follows that ${\beta_2n\over \alpha(\alpha+1)\big({|v_-|\over 2\sqrt{2}}-r\big)^2\Big(1+{|x_-|\over \sqrt{2}}-r\Big)^{\alpha}}\le{1\over 20}$. Hence from \eqref{702} it follows that 
\begin{equation}
\sup_{(t,+\infty)}|y_+|\le{20\beta_2 n^{1\over 2}\over 19\alpha(\alpha+1)\big({|v_-|\over 2\sqrt{2}}-r\big)^2\Big(1+{|x_-|\over \sqrt{2}}-r\Big)^{\alpha}},\label{701b}
\end{equation}
which proves \eqref{7.16b}.

We prove \eqref{7.14c}.
From the contraction estimate on $\dot \A(.)(t)$ for $t\le 0$ (see the proof of Lemma \ref{lem_cont2} and the right hand side of \eqref{2.21} for "$(f_1,f_2,t)=(y_-,0,0)$") it follows that
\begin{eqnarray}
&&\left|\int_{-\infty}^0\big(F\big(z_-(v_-,x_-,\tau)+y_-(\tau)\big)-F\big(z_-(v_-,x_-,\tau)\big)\big)d\tau\right|\nonumber\\
&&\le {n\max(\beta_2,\beta_3^s)\sup_{(-\infty,0]}|y_-|\over (\alpha+1)\Big({|v_-|\over 2\sqrt{2}}-r\Big)\Big(1+{|x_-|\over \sqrt{2}}-r\Big)^{\alpha+1}}\big(1
+{1\over \big(1+{|x_-|\over \sqrt{2}}-r\big)}\big).
\label{5.3}
\end{eqnarray}
Using \eqref{4.20b} for "$(f_1,f_2)=(y_-,0)$" we also obtain
\begin{eqnarray}
&&\left|\int_0^{+\infty}\big(F^s\big(z_-(v_-,x_-,\tau)+y_-(\tau)\big)-F^s(z_-(v_-,x_-,\tau))\big)d\tau\right|\nonumber\\
&\le& {n\beta_3^s\sup_{s\in [0,+\infty)}{|y_-(s)|\over 1+s}\over(\alpha+1)\Big({|v_-|\over 2\sqrt{2}}-r\Big)(1+{|x_-|\over \sqrt{2}}-r)^{\alpha+1}}\left[{1\over 
(1+{|x_-|\over \sqrt{2}}-r)}\right.\nonumber\\
&&+\left.{1\over\Big({|v_-|\over 2\sqrt{2}}-r\Big)}\right].\label{5.21}
\end{eqnarray}
Then using \eqref{1.4a}, \eqref{5.5} and \eqref{5.5b} we have
\begin{eqnarray}
&&\left|\int_0^{+\infty}\big(F^l\big(z_-(v_-,x_-,\tau)+y_-(\tau)\big)-F^l\big(z_+(\tilde a(v_-,x_-),x_-,\tau)\big)\big)d\tau\right|\nonumber\\
&\le& {20n\beta_2^2(nr+3\sqrt{n})\over 19\alpha(\alpha+1)^2\big({|v_-|\over 2\sqrt{2}}-r\big)^3
\big(1+{|x_-|\over \sqrt{2}}-r\big)^{2\alpha+1}}.\label{5.20}
\end{eqnarray}
Combining \eqref{5.2ba}, \eqref{5.3}, \eqref{5.21} and \eqref{5.20}, we obtain 
\begin{eqnarray}
&&\big|\tilde a_{sc}(v_-,x_-)-\int_{-\infty}^0F\big(z_-(v_-,x_-,\tau)\big)d\tau\nonumber\\
&&-\int_0^{+\infty}F^s(z_-(v_-,x_-,\tau))d\tau
-\int_0^{+\infty}F^l\big(z_+(\tilde a(v_-,x_-),x_-,\tau)\big)d\tau|\nonumber\\
&\le&{n\max(\beta_2,\beta_3^s)\|y_-\|\over (\alpha+1)\Big({|v_-|\over 2\sqrt{2}}-r\Big)\Big(1+{|x_-|\over \sqrt{2}}-r\Big)^{\alpha+1}}\big(1
+{1\over\big(1+{|x_-|\over \sqrt{2}}-r\big)}+{1\over\big({|v_-|\over 2\sqrt{2}}-r\big)}\big)\nonumber\\
&&+{20n\beta_2^2(nr+3\sqrt{n})\over 19\alpha(\alpha+1)^2\big({|v_-|\over 2\sqrt{2}}-r\big)^3
\big(1+{|x_-|\over \sqrt{2}}-r\big)^{2\alpha+1}}.\label{800}
\end{eqnarray}
Then estimate \eqref{7.14c} follows from \eqref{800}, \eqref{lb1} and the equality $y_-=\A(y_-)$, and from the estimate $1+{|x_-|\over \sqrt{2}}-r\ge {1\over 2}$. 
We prove \eqref{6.44}.
From \eqref{5.2b} it follows that
\begin{eqnarray}
&&|\tilde l(v_-,x_-,y_-)-\tilde l(v_-,x_-,0)|\nonumber\\
&\le&\int_{-\infty}^0\int_{-\infty}^\sigma \big|F\big(z_-(v_-,x_-,\tau)+y_-(\tau)\big)
-F\big(z_-(v_-,x_-,\tau)\big)\big|d\tau d\sigma\label{6.1}\\
&&+\int_0^{+\infty}\int_\sigma^{+\infty}\Big|F^s\big(z_-(v_-,x_-,\tau)+y_-(\tau)\big)-F^s\big(z_-(v_-,x_-,\tau)\big)\Big|d\tau d\sigma\nonumber
\end{eqnarray}
From the contraction estimate on $\A(.)(t)$ for $t\le 0$ (see the proof of Lemma \ref{lem_cont2} and the right hand side of \eqref{2.22} for "$(f_1,f_2,t)=(y_-,0,0)$")
we have
$$
\left|\int_{-\infty}^0\int_{-\infty}^\sigma \left(F\big(z_-(v_-,x_-,\tau)+y_-(\tau)\big)
-F\big(z_-(v_-,x_-,\tau)\big)\right)d\tau d\sigma\right|
$$
\begin{equation}
\le{n\sup_{(-\infty,0]}|y_-|\over(\alpha+1) \big({|v_-|\over 2\sqrt{2}}-r\big)^2\big(1+{|x_-|\over \sqrt{2}}-r\big)^\alpha}\left[ {\beta_2\over \alpha}
+ {\beta_3^s\over (\alpha+2)\big(1+{|x_-|\over \sqrt{2}}-r\big)}\right],
\label{6.2}
\end{equation}
and using \eqref{4.20b} for "$(f_1,f_2)=(y_-,0)$" we obtain 
\begin{eqnarray}
&&\int_0^{+\infty}\int_\sigma^{+\infty}\Big|F^s\big(z_-(v_-,x_-,\tau)+y_-(\tau)\big)-F^s\big(z_-(v_-,x_-,\tau)\big)\Big|d\tau d\sigma\nonumber\\
&\le&n\beta_3^s\sup_{s\in (0,+\infty)}{|y_-(s)|\over 1+s}
\int_0^{+\infty}\int_\sigma^{+\infty}{(1+\tau)d\tau d\sigma\over \big(1+{|x_-|\over \sqrt{2}}-r+\tau\big({|v_-|\over 2\sqrt{2}}-r\big) \big)^{\alpha+3}}\nonumber
\end{eqnarray}
\begin{equation}
\le{n\beta_3^s\sup_{s\in (0,+\infty)}{|y_-(s)|\over 1+s}
\over(\alpha+1)\big({|v_-|\over 2\sqrt{2}}-r\big)^2 \big(1+{|x_-|\over \sqrt{2}}-r\big)^{\alpha}}
\left[{1\over \alpha\big({|v_-|\over 2\sqrt{2}}-r\big)}+{1\over (\alpha+2) \big(1+{|x_-|\over \sqrt{2}}-r\big)}\right].\label{6.3}
\end{equation}
Hence 
\begin{eqnarray}
&&|\tilde l(v_-,x_-,y_-)-\tilde l(v_-,x_-,0)|
\le{n\max(\beta_2,\beta_3^s)\|y_-\|\over\alpha(\alpha+1) \big({|v_-|\over 2\sqrt{2}}-r\big)^2\big(1+{|x_-|\over \sqrt{2}}-r\big)^\alpha}\nonumber\\
&&\times \big(1+{1\over\big(1+{|x_-|\over \sqrt{2}}-r\big)}+{1\over \big({|v_-|\over 2\sqrt{2}}-r\big)}\big).\label{6.4}
\end{eqnarray}
And combining \eqref{701} and \eqref{701b} we have
\begin{eqnarray}
&&\int_0^{+\infty}\int_\sigma^{+\infty}|F^l(z_-(v_-,x_-,\tau)+y_-(\tau))-F^l(z_+(\tilde a(v_-,x_-),\tilde b(v_-,x_-),\tau))|d\tau d\sigma\nonumber\\
&\le&{20\beta_2^2 n^{3\over 2}
\over 19\alpha^2(\alpha+1)^2\big({|v_-|\over 2\sqrt{2}}-r\big)^4 \big(1+{|x_-|\over \sqrt{2}}-r\big)^{2\alpha}}.\label{703}
\end{eqnarray}
for $t\ge 0$.
Then we use the estimates \eqref{6.4} and \eqref{lb1} and the equality $y_-=\A(y_-)$, and we use the  estimate \eqref{703} and the estimate $1+{|x_-|\over \sqrt{2}}-r\ge {1\over 2}$ to obtain \eqref{6.44}.\hfill $\Box$

\section*{Acknowledgments}
The author would like to thank R. G. Novikov for comments on a previous version of this paper.

\vskip 8mm

\noindent A. Jollivet

\noindent Laboratoire de Physique Th\'eorique et Mod\'elisation,

\noindent CNRS UMR 8089/Universit\'e de Cergy-Pontoise

\noindent 95302 Cergy-Pontoise, France

\noindent e-mail: alexandre.jollivet@u-cergy.fr


\begin{thebibliography}{10}

%\bibitem{Ab}
%{\sc N.~H.~Abel},
%{\em  Aufl\"osung einer mechanischen Aufgabe},
%J. Reine Angew. Math. 1, 153-157 (1826).
% French transl.: {\em  R\'esolution d'un probl\`eme de m\'ecanique}, \OE uvres compl\`etes de Niels Henrik Abel (L. Sylow, S. Lie, eds)
% vol.1, pp97-101, Gr\o ndahl, Christiana (Oslo), 1881.

%\bibitem{AFC}
%{\sc M.~A.~Astaburuaga, C.~Fernandez, V.~H.~Cort\'es},
%{\em The direct and inverse problem in Newtonian Scattering},
%Proc. Roy. Soc. Edinburgh Sect. A 118, 119-131 (1991).

\bibitem{DG}
{\sc J.~Derezinski, C.~G\'erard},
Scattering theory of classical and quantum N-particle systems,
Springer-Verlag, Berlin Heidelberg, 1997.

%\bibitem{GG}
%{\sc I.~M.~Gel'fand, M.~I.~Graev}, {\em Integrals over hyperplanes of basic and generalized functions},
%Dokl. Akad. Nauk SSSR  135, 1307--1310 (Russian). Engl. transl. : Soviet Math. Dokl.  1,   1369--1372 (1960). 

\bibitem{EW}
{\sc V.~Enss, R.~Weder}, {\em Inverse potential scattering: a geometrical approach}. Mathematical Quantum Theory II: Schr\" odinger
operators (Feldman, J., Froese, R. and Rosen, L., eds.), CRM Proc. Lecture Notes 8, Amer. Math. Soc., Providence, R.I., 1995. 


\bibitem{F}
{\sc L.~D.~Faddeev}, {\em Uniqueness of solution of the inverse scattering problem}, Vestnik. Leningrad. Univ. 
11:(7), 126--130 (1956).

\bibitem{GGG}
{\sc I.~M.~Gel'fand, S.~G.~Gindikin, M.~I.~Graev}, 
{\em Integral geometry in affine and projective spaces}, Itogi Nauki i Tekhniki, Sovr. Prob. Mat.
16, 53-226 (1980) (Russian).

\bibitem{H}
{\sc S.~Helgason}, {\em The Radon transform on Euclidean spaces}, 
compact two-point homogeneous spaces and Grassmann manifolds,  Acta Math.  113,  153--180 (1965).

\bibitem{He}
{\sc I.~W.~Herbst},
{\em Classical scattering with long range forces},
Comm. Math. Phys.  35, 193--214 (1974).
 
\bibitem{Jo}
{\sc A.~Jollivet},
{\em On inverse scattering at fixed energy for the multidimensional Newton equation in a non-compactly supported field},
to appear in J. Inverse Ill-Posed Probl.; (see also arXiv:1210.6552).

% \bibitem{[K]}
%{\sc J.~B.~Keller},
%{\em Inverse problems}, Amer. Math. Monthly 83, 107-118 (1976).


\bibitem{L}
{\sc L.~D.~Landau, E.~M.~Lifschitz}, Mechanics, Pergamon Press Oxford, 1960.


\bibitem{Na}
{\sc F.~Natterer}, The Mathematics of Computerized Tomography, Stuttgart: Teubner and  
Chichester: Wiley, 1986.


\bibitem{No}
{\sc R.~G.~Novikov}, {\em Small angle scattering and X-ray transform
in classical mechanics}, Ark. Mat. 37,  141--169 (1999).

\bibitem{No2}
{\sc R.~G.~Novikov}, {\em The $\overline\partial$-approach to approximate inverse scattering at fixed energy 
in three dimensions}, IMRP Int. Math. Res. Pap.   6, 287--349 (2005).

\bibitem{R}
{\sc J.~Radon}, {\em \" Uber die Bestimmung von Funktionen durch ihre Integralwerte l\" angs 
 gewisser Mannigfaltigkeiten}. Ber. Verh. S\" achs. Akad. Wiss. Leipzig, Math.-Nat. K1
 69, 262-277 (1917).

\end{thebibliography}
\end{document}